\newtheorem{theorem}{Theorem}
\newtheorem{inftheorem}{Informal Theorem}
\newtheorem{claim}{Claim}
\newtheorem{lemma}{Lemma}
\newtheorem{corollary}{Corollary}
\newtheorem{obs}{Observation}
\newtheorem{definition}{Definition}
\newtheorem{example}{Example}
\begin{document}
\title{On the Competition Complexity of Dynamic Mechanism Design}
\date{}

\author{
Siqi Liu
\\ UC Berkeley
\\ \href{mailto:sliu18@berkeley.edu}{sliu18@berkeley.edu}
\and
Christos-Alexandros Psomas 
\\ Carnegie Mellon University
\\ \href{mailto:cpsomas@cs.cmu.edu}{cpsomas@cs.cmu.edu}
}

\newcommand{\days}{m}
\newcommand{\myerson}[1]{\textsc{Mye} \left[ #1 \right]}
\newcommand{\opt}[1]{\textsc{OPT} \left[ #1 \right]}
\newcommand{\expectation}[1]{\mathbb{E} \left[ #1 \right]}
\newcommand{\rev}[1]{\textsc{Rev} \left[ #1 \right]}
\newcommand{\downnode}[2]{#1_{#2 , \leftarrow}}
\newcommand{\upnode}[2]{#1_{#2 , \rightarrow} }

\newenvironment{talign}
 {\let\displaystyle\textstyle\align}
 {\endalign}
\newenvironment{talign*}
 {\let\displaystyle\textstyle\csname align*\endcsname}
 {\endalign}

\maketitle

\begin{abstract}

The \textit{Competition Complexity} of an auction measures how much competition is needed for the revenue of a simple auction to surpass the optimal revenue. A classic result from auction theory by Bulow and Klemperer~\cite{bulow1996auctions}, states that the Competition Complexity of VCG, in the case of $n$ i.i.d. buyers and a single item, is $1$. In other words, it is better to invest in recruiting one extra buyer and run a second price auction than to invest in learning \textit{exactly} the buyers' underlying distribution and run the revenue-maximizing auction \textit{tailored} to this distribution.

In this paper we study the Competition Complexity of \textit{dynamic auctions}. Consider the following problem: a monopolist is auctioning off $\days$ items in $\days$ consecutive stages to $n$ interested buyers. A buyer realizes her value for item $k$ in the beginning of stage $k$. \emph{How many additional buyers are necessary and sufficient for a second price auction at each stage to extract revenue at least that of the optimal dynamic auction? }
We prove that the Competition Complexity of dynamic auctions is at most $3n$ - and at least linear in $n$ - even when the buyers' values are correlated across stages, under a monotone hazard rate assumption on the stage (marginal) distributions. This assumption can be relaxed if one settles for independent stages.
We also prove results on the number of additional buyers necessary for VCG at every stage to be an $\alpha$-approximation of the optimal revenue; we term this number the $\alpha$-\textit{approximate Competition Complexity}. For example, under the same mild assumptions on the stage distributions we prove that one extra buyer suffices for a $\frac{1}{e}$-approximation. 
As a corollary we provide the first results on \textit{prior-independent} dynamic auctions.
This is, to the best of our knowledge, the first non-trivial positive guarantees for simple ex-post IR dynamic auctions for \textit{correlated} stages.

A key step towards proving bounds on the Competition Complexity is getting a good benchmark/upper bound to the optimal revenue.
To this end, we extend the recent duality framework of~\citet{cai2016} to dynamic settings.
As an aside to our approach we obtain a revenue non-monotonicity lemma for dynamic auctions, which may be of independent interest. 

\end{abstract}

\thispagestyle{empty}

\clearpage
\setcounter{page}{1}

\section{Introduction}


A monopolist is auctioning off $\days$ items in $\days$ consecutive stages to $n$ interested buyers. A buyer learns her value for the $k$-th item at the beginning of stage $k$. For the future items only a prior distribution is known. The prior of buyer $i$ depends on the values of that buyer for the items so far. How should this monopolist behave in order to maximize her expected revenue?  
The optimal dynamic mechanism can be extremely complex, both computationally and in terms of its description. Even worse, the optimal mechanism depends on detailed knowledge of the buyers' distributions (across time) in intricate ways. This paper aims to answer the following question: {\em Can we design simple dynamic mechanisms that do not depend on details of the underlying distributions?} 

We are not the first to face such problems. Optimal mechanisms can be extremely complex even for a static auction with a single additive buyer (e.g. $m$ uniform i.i.d. items on $[c,c+1]$ as observed by~\citet{daskalakis2015strong}). But even for the case of $n$ i.i.d. buyers and a single item for sale, where Myerson's theory~\cite{myerson1981optimal} readily applies, the necessity of good prior information makes the solution somewhat less appealing in practice. Can this be avoided?
An elegant result from auction theory, by Bulow and Klemperer~\cite{bulow1996auctions}, states that the revenue of a second price auction with $n + 1$ buyers with valuations drawn i.i.d. from a regular distribution\footnote{A distribution $D$ with density $f$ and cumulative density $F$ is regular if Myerson's virtual function $\phi(v) = v - \frac{1-F(v)}{f(v)}$ is monotone non-decreasing.} is at least that of the optimal auction, \textit{tailored} for the exact distributions, for $n$ buyers. In other words, it is better to invest in recruiting one more agent and run VCG, than to invest in \textit{exactly} learning the underlying value distribution and then run the revenue-maximizing auction tailored to this distribution. One of the reasons this theorem is so appealing is that VCG is prior-independent, meaning its description is independent of the underlying distribution. The Bulow-Klemperer theorem can also be seen as a ``resource augmentation'' argument. The optimal auction extracts more revenue than a second price auction by definition; their theorem gives an intuitive characterization of how much.


In this paper we prove the first Bulow-Klemperer type results for dynamic auctions. We are interested in the number of additional buyers necessary (these extra buyers will be present in all stages) for a second price auction \textit{at every stage} to have expected revenue at least that of the optimal dynamic auction.
We adopt the terminology of~\citet{eden2016competition} and call this number the \textit{Competition Complexity}. We also define the $\alpha$-\textit{approximate Competition Complexity} to be the extra number of buyers necessary for a second price auction at each stage to be an $\alpha$ approximation (in terms of revenue) of the optimal dynamic auction\footnote{We use the term $\alpha$-Competition Complexity to have a single language when stating our results. As we see later though, strictly outperforming $OPT$ is a much more challenging task than getting an $\alpha$ approximation.}.
Thus, in this terminology, Bulow and Klemperer's theorem says that the Competition Complexity of a single item static auction with $n$ buyers is $1$.

The original result of Bulow-Klemperer stems from our deep understanding of revenue optimal auctions in the single item case, thanks to Myerson's work. In contrast, maximizing revenue in dynamic or other multi-dimensional environments is poorly understood. Moreover, known gaps (\cite{papadimitriou2016complexity}) between the revenue of adaptive and non-adaptive dynamic auctions imply that assumptions stronger than regularity of the stage distributions  will be necessary.
Despite these obstacles, for {\em correlated stages} we can show the following bounds:

\begin{inftheorem}\label{thm: approx competition complexity}
In the case of $n$ buyers and $\days$ correlated stages (the value of buyer $i$ on stage $k$ can be correlated with past and future values, but not with other buyers' values), if all the stage (marginal) distributions have monotone hazard rate (MHR)\footnote{A distribution has \textit{monotone hazard rate} (MHR) if its hazard rate $h(v) = \frac{f(v)}{1-F(v)}$ is monotone non-decreasing.}: (1) The Competition Complexity is at most $3n$ and at least $(e-1)n$; (2) The $\frac{1}{e}$-approximate Competition Complexity is $1$; (3) The $\frac{1}{3}$-approximate Competition Complexity is $0$.
\end{inftheorem}

In other words, if the stage distributions have monotone hazard rate, recruiting $3n$ additional buyers is {\em strictly} better than learning all $\days$ stage distributions exactly, including the possible correlation between stages. If an approximation of $\frac{1}{e}$ suffices, only one additional buyer is necessary. Even more surprisingly, simply running a second price auction at each stage extracts a $\frac{1}{3}$-approximation of the optimal revenue. These bounds hold even against ex-ante IR dynamic auctions. In other words, running a second price auction at each stage, which requires no distributional knowledge whatsoever, is a $\frac{1}{3}$ approximation to the impossible benchmark of an optimal, ex-ante IR auction that uses full knowledge of the buyers' arbitrarily correlated distributions!

Many common families of distributions such as the Uniform, Exponential and Normal have MHR. 
To get the theorem we prove new bounds on the expected order statistics of MHR distributions and combine them with an upper bound (for the optimal dynamic revenue) equal to the social welfare, i.e. the sum (over stages) of the expected first order statistic $\expectation{(X_k)_{1:n}}$ of $n$ i.i.d.  samples from the stage distribution $X_k$. 
The expected second order statistic $\expectation{X_{2:n}}$, i.e. the expected second highest value, from $n$ i.i.d. samples from distribution $X$ is equal to the expected revenue of a second price auction with $n$ buyers from $X$. For our Competition Complexity result we need a strict inequality between the expected second order statistic $\expectation{X_{2:n+c}}$ of $n+c$ samples and the expected first order statistic $\expectation{X_{1:n}}$ (the expected highest value, the social welfare) of $n$ samples. 
More specifically, we need to find a $c$ large enough for $\expectation{ X_{2:n+c} }$ to be at least as large as $\expectation{X_{1:n}}$. For MHR distributions strong tail bounds are known (see for example~\citet{cai2011extreme}), that can be used for approximations of the expected order statistics, by combining for example  with Markov-type inequalities\footnote{Incidentally, the approximations between the expected first and second order statistics we present here don't use this approach directly. We instead take an ``auction flavored'' approach and go through a comparison with the revenue of Myerson's auction.}. If we insist on strict bounds though, we cannot afford to use such lossy arguments, and new ideas are necessary. We postpone further discussion until the next Section where we formally define the model.
We proceed to explore the extent to which the assumption on the stage distributions can be relaxed.

\begin{inftheorem}
In the case of $n$ buyers and $\days$ independent stages, ex-post IR dynamic auctions, if $\days-1$ stage distributions have monotone hazard rate and the remaining stage distribution is regular, the bounds on the Competition Complexity and $\alpha$-approximate Competition Complexity are the same as Informal Theorem $1$.
\end{inftheorem}

This theorem requires much better upper bounds on the optimal dynamic revenue, which we obtain via an extension of the duality framework recently proposed by~\citet{cai2016}. The upper bound in~\citet{cai2016} improves on the trivial upper bound of the social welfare by substituting the value of each buyer's favorite item with the corresponding Myerson's virtual value. Our bound resembles this format: the largest expected first order statistic is substituted by the corresponding expected virtual value, i.e. optimal (static) revenue. Moreover, our dual solution also induces a virtual value function $\Phi_k(.)$ at each stage $k$.
The benchmarks we get from duality do not depend on ``tail assumptions'' like regularity and monotone hazard rate, but they do need stage distributions to be independent; the tail assumptions are necessary for the revenue of VCG to surpass the benchmark. Even though they can't be materialized into Competition Complexity bounds, we {\em can} prove some general upper bounds for correlated stages, under a stochastic dominance condition, using the same technique. We also use the duality framework as a tool in answering the following question: How does correlation affect the optimal revenue? In other words, moving from independent to correlated stages (while keeping the stage marginals the same) makes the optimal revenue increase or decrease? We defer details and further discussion until the next Section. 

Back to the Competition Complexity, can the tail assumptions on the stage distributions be relaxed even further? Surprisingly, the answer is no! If two stage distributions are regular, even if the stages are independent, there is only a single buyer and we ask for non-negative utility at each stage (ex-post IR), the $\alpha$-Competition Complexity is unbounded, for all constants $\alpha > 0$. We conclude this introduction with an example from~\citet{papadimitriou2016complexity} that shows that running the optimal static auction at each stage is not a constant approximation of the optimal dynamic auction; the lower bound on the $\alpha$-Competition Complexity is an immediate corollary.

\begin{example}[\citet{papadimitriou2016complexity}]\label{example: myerson every day}
Let $X_1$ and $X_2$ be the random variables indicating the value of the buyer for the first and second stage item. $X_1$ takes value $2^i$ with probability $2^{-i}$ for $i= 1,\dots,n$, and value $0$ with probability $2^{-n}$. $X_2$ takes value $2^i$ with probability $2^{-i}$ for $i=1,\dots,2^n$, and value $0$ with probability $2^{-2^n}$. It can be verified that the optimal static auction for both $X_1$ and $X_2$ extracts revenue at most $2$ ( Consider setting some price $2^k$. The expected revenue is at most $2^k \cdot \sum_{i \geq k} 2^{-i} \leq 2$ ). Therefore, running the optimal static auction at each stage extracts revenue at most $4$. On the other hand, there exists a dynamic mechanism that extracts revenue $n$: on the first stage the buyer pays her report $\hat{v}$. On the second stage the item is given for free with probability $\frac{\hat{v}}{\expectation{X_2}}$\footnote{Notice that $\expectation{X_2} > 2^n$, therefore $\frac{\hat{v}}{\expectation{X_2}}$ is a probability.}. An easy calculation shows that truthful reporting is a weakly dominant strategy. The revenue extracted is $\expectation{X_1} = n$.\footnote{A similar example can be constructed for continuous distributions. See~\citet{ashlagi2016}.} The intuition here is that if the expected value of the future item is large, the buyer is willing to pay her value on the first stage for a better probability of getting allocated the future item.
\end{example}

\section{Preliminaries and Main Results}

\subsection{Multi-Stage Auctions.} \label{subcsec: definitions for multi stage auctions}
A seller sells $\days$ items to $n$ buyers in $\days$ consecutive stages.
The value of buyer $i$ for the item on stage $k$ is $v^{i}_{k} \in V^i_k = \left[ \underline{v}^i_k, \overline{v}^i_k \right]$ and is distributed according to a random variable $X^i_k$. These random variables are independent among buyers, but for the same buyer can be correlated across stages, i.e. $X^i_k$ can be correlated with $X^{i}_{k'}$, but not with $X^{i'}_{k}$. $X^i_k$ has distribution $D^i_k$ with density $f^i_k$ and cumulative density $F^i_k$. 
It will be often convenient to use random variables rather than distributions and thus we use $X$ and $D$ interchangeably. Throughout this paper we use superscript to denote an agent and subscript to denote the stage. 
We write $\mathbf{X}_k = \prod_{i=1}^n X^i_k $ for the product distribution for stage $k$ (across all buyers).
Let $\mathcal{X}$ be the input to the seller's problem; $\mathcal{X}$ includes all the stage distributions, as well as their correlation. 
Since we're interested in Bulow-Klemperer type theorems, the distribution of $X^i_k$ is the same for all buyers $i$ and therefore we drop the superscript whenever applicable. 
We assume that the value for each item is revealed at the beginning of each stage: at the beginning of stage $k$, buyer $i$ knows her private history $v^i_{< k} = \left( v^i_{1}, v^i_{2}, \dots , v^i_{k-1} \right)$, her value $v^i_k$ for the item in stage $k$, as well as the public history $\mathbf{\hat{v}}_{<k}  = \left( \hat{v}^1_{1:k-1}, \hat{v}^2_{1:k-1}, \dots, \hat{v}^n_{1:k-1} \right)$, where $\hat{v}^i_{a:b} = \left( \hat{v}^i_a, \hat{v}^i_{a+1},\dots,\hat{v}^i_{b} \right)$ is the vector of reported values of agent $i$ for stages $a$ through $b$. 

From the revelation principle, it is sufficient to consider direct revelation mechanisms. A mechanism $M$ is a sequence of $\days$ allocation functions $\left( x_1, \dots x_{\days} \right)$ and $\days$ payment functions $\left( p_1, \dots p_{\days} \right)$, both taking as input all the reported valuations so far $\mathbf{v}_{\leq k} $. The allocation function for stage $k$ has a component $x^i_k \left( \mathbf{v}_{\leq k} \right)$ that represents the probability that buyer $i$ gets the item in stage $k$. Similarly, the payment function for stage $k$ has a component $p^i_k \left( \mathbf{v}_{\leq k} \right)$ for the payment of buyer $i$ in stage $k$.  
A mechanism is feasible if for all stages $k$ and all histories $\mathbf{v}_{\leq k} $, $x^i_k \left( \mathbf{v}_{\leq k} \right) \in [0,1]$ for all agents $i$, and $\sum_{i=1}^n x^i_k \left( \mathbf{v}_{\leq k} \right) \leq 1$.
We assume quasi-linear utilities; the utility of buyer $i$ in stage $k$ is $v^i_k \cdot x^i_k \left( \mathbf{v}_{\leq k} \right) - p^i_k\left( \mathbf{v}_{\leq k} \right)$. 


\paragraph{Incentive Compatibility.} 
At stage $k$ we would like to have a mechanism where agent $i$, with real value $v^i_k$, maximizes her utility when reporting $v^i_k$, among all possible reports $\hat{v}^i_k$. This utility is in expectation over the other agents' current values, as well as her own and other agents' future values. When deciding what value $\hat{v}^i_k$ to report in stage $k$, the agent has to take into account that the future allocation and payments (and therefore the future utility) will be affected by this report. It could be the case that lying only in stage $k$ or lying only in stage $k+1$ results in lower overall utility, but lying in both stages results in a higher utility!  Thus, when deciding when to lie, the buyer must choose in advance a strategy that deviates from the truth now and in the future.

Let $\mathcal{S}_k$ be the set of all ``future deviation strategies'' at stage $k$. 
A deviating strategy $s \in \mathcal{S}_k$ is a function from possible ``futures'', i.e. elements of $\prod_{i=1}^n \prod_{t=k+1}^{\days} V^i_t$ to possible reports, i.e. elements of
$\left( \hat{v}_{k+1}, \dots, \hat{v}_{\days} \right) \in V_{k+1:\days} = \prod_{i=k+1}^{\days} V_i$. 
Note that in our definition of $\mathcal{S}_k$ the deviation in stage $k$ is not included.
A mechanism is incentive compatible if, for every buyer $i$, every stage $k$, all possible histories $\mathbf{v}_{< k}$, for all values $v^i_k \in V_k$ on stage $k$, and all possible current deviations $\hat{v}^i_k \in V_k$ and future deviation strategies $s \in \mathcal{S}_k$:
\begin{multline}\label{eq: IC constraint}\textstyle
\mathbb{E}_{ \mathbf{v}^{-i}_{\geq k}, v^i_{k+1:\days} } \left[ \sum_{j\geq k} v^i_j x^i_j\left( \mathbf{v}^{-i}_{\leq j}, v^i_{1:k-1}, v^i_k , v^i_{k+1:j} \right) - p^i_j\left( \mathbf{v}^{-i}_{\leq j}, v^i_{1:k-1}, v^i_k , v^i_{k+1:j} \right) \right] \geq \\ \textstyle
\mathbb{E}_{ \mathbf{v}^{-i}_{\geq k}, v^i_{k+1:\days} } \left[ \sum_{j\geq k} v^i_j x^i_j\left( \mathbf{v}^{-i}_{\leq j}, v^i_{1:k-1} \hat{v}^i_k, s\left( \mathbf{v}^{-i}_{k:j}, v^i_{k:j} \right) \right) - p^i_j\left( \mathbf{v}^{-i}_{\leq j},v^i_{1:k-1}, \hat{v}^i_k, s\left( \mathbf{v}^{-i}_{k:j}, v^i_{k:j} \right) \right)  \right]
\end{multline}

Intuitively, buyer $i$ at stage $k$ compares her expected utility when telling the truth now and in the future, with her expected utility for reporting $\hat{v}^i_k$ now, $s\left( \mathbf{v}^{-i}_{k:j}, v^i_{k:j} \right)$ in stage $j$, where $\mathbf{v}^{-i}_{k:j}$ is the rest of the buyers' values in stages $k$ through $j$ and $v^i_{k:j}$ are the true values of buyer $i$ in stages $k$ through $j$. 
If $\mathcal{S}_k$ is the set of all function from $\prod_{i=1}^n \prod_{t=k+1}^{\days} V^i_t$ to $\prod_{i=k+1}^{\days} V_i$, and Equation~\ref{eq: IC constraint} is satisfied, we say that the mechanism is \textit{incentive compatible in a perfect Bayesian equilibrium}. If $\mathcal{S}_k$ only includes the ``identity function'', i.e. the agent assumes truthful reporting for future stages, we say that the mechanism is \textit{periodic incentive compatible}.


\paragraph{Individual Rationality.} For IR we focus on the two extremes: ex-post and ex-ante. The latter notion asks for every buyer's expected utility at every stage to be non-negative. Formally, for every buyer $i$, every stage $k$, all possible histories $\mathbf{v}_{< k}$, for all values $v^i_k \in V_k$ on stage $k$:

\begin{equation}\label{eq: ex-ante IR}
\textstyle \mathbb{E}_{ \mathbf{v}^{-i}_{\geq k}, v^i_{k+1:\days} } \left[ \sum_{j\geq k} v^i_j x^i_j\left( \mathbf{v}^{-i}_{\leq j}, v^i_{k:j} \right) - p^i_j\left( \mathbf{v}^{-i}_{\leq j}, v^i_{k:j} \right) \right] \geq 0
\end{equation}

Ex-post individual rationality asks for every buyer's utility to be non-negative \textit{at every stage}, no matter what the other buyers' valuations are. Formally, for every buyer $i$, stage $k$, and possible history $\mathbf{v}_{\leq k}$:
\begin{equation}\label{eq: ex-post IR}
\textstyle v^i_k \cdot x^i_k \left( \mathbf{v}_{\leq k} \right) - p^i_k\left( \mathbf{v}_{\leq k} \right) \geq 0.
\end{equation}

\paragraph{The seller's problem.} The seller's goal is to find the revenue optimal mechanism that is incentive compatible and individually rational. Let $\opt{\mathcal{X},n,\days}$ denote the revenue of the optimal mechanism for $n$ buyers and $\days$ stages, when the buyers' valuations are drawn according to $\mathcal{X}$.
For the special case of $\days = 1$ the solution is given by~\citet{myerson1981optimal}. For a general $\days$, the seller's problem can be expressed as a linear program with variables $x^i_k$ and $p^i_k$, objective
\[\textstyle \max \mathbb{E} \left[ \sum_{k=1}^{\days} \sum_{i=1}^n p^i_k \left( \mathbf{v}_{\leq k} \right) \right], \]
subject to constraints~\ref{eq: IC constraint}, constraints~\ref{eq: ex-ante IR} or~\ref{eq: ex-post IR}, as well as the allocation $x$ being feasible (a linear constraint). Note that the seller's revenue is smaller for ex-post IR than ex-ante IR. Furthermore, it weakly decreases as the set $\mathcal{S}_k$ of deviations considered becomes larger. Therefore, the best upper bounds possible would be for ex-ante IR and periodic IC.

\subsection{Competition Complexity.}

Let $\rev{M,\mathcal{X},n,\days}$ denote the expected revenue of running auction $M$ at every stage, for $\days$ stages, with $n$ buyers whose values are drawn according to $\mathcal{X}$. We are interested in the number $c$ of extra buyers necessary such that $\rev{VCG,\mathcal{X},n+c,\days}$ is at least $\opt{\mathcal{X},n,\days}$, where $VCG$ is simply a second price auction. This number is called the \textit{Competition Complexity} with respect to $VCG$, defined by~\citet{eden2016competition}. We also study \textit{approximations}: 

\begin{definition}
The $\alpha$-approximate Competition Complexity with respect to $VCG$ is the minimum number $c$ such that $\rev{VCG,\mathcal{X},n+c,\days}$ is at least $\alpha \cdot \opt{\mathcal{X},n,\days}$.
\end{definition}

Note that we can define the $\alpha$-approximate Competition Complexity and Competition Complexity to be with respect to any prior-independent auction $M$. In this work we focus on $VCG$. 
At a high level our approach is the following: (1) Find an upper bound $B\left( \mathcal{X},n,\days \right)$ to $\opt{\mathcal{X},n,\days}$. (2) Prove that the revenue of running a second price auction at every stage with $c$ additional buyers (present in every stage) yields revenue at least $B\left( \mathcal{X},n,\days \right)$.
We present each step separately.
For different distributions and different constraints the bounds in steps (1) and (2) are different.
Our final bounds on the Competition Complexity come from mixing and matching these different bounds.

\subsection{Upper bounds on $\opt{\mathcal{X},n,m}$.}


Buyers' valuations $X^i_k$ on stage $k$ are independent draws from a distribution $X_k$. We can arrange the values in a descending order: $\left( X_{k} \right)_{1:n} \geq \left( X_{k} \right)_{2:n} \geq \dots \geq \left( X_{k} \right)_{n:n}$. We call $\left( X_{k} \right)_{r:n}$ the $r$-th order statistic\footnote{We write $X_{r:n}$ for the $r$-th order statistic of $n$ samples from a distribution $X$. In the auctions context we write $v_{t:t'} = \left( v_t, v_{t+1}, \dots, v_{t'}\right)$ for the reported values in stages $t$ through $t'$. It will be clear from context which of the two notions we refer to.}. Given a product distribution $\mathbf{X}_k = \prod_{i=1}^n X_k $, let $\myerson{\mathbf{X}_k}$ be the revenue of Myerson's optimal auction, i.e. the revenue optimal mechanism for a one-shot auction where buyers' valuations are drawn i.i.d. from $X_k$. Our upper bounds on $\opt{\mathcal{X},n,m}$ do not require any tail assumptions.

\paragraph{Social welfare.}
Our first upper bound on $\opt{\mathcal{X},n,m}$ is the trivial one: the social welfare. At every stage $k$ we can extract revenue at most the expected maximum valuation at that stage. 
\begin{claim}\label{claim: trivial bound} 
$ \opt{\mathcal{X},n,m} \leq \sum_{k=1}^{\days} \expectation{ \left( X_{k} \right)_{1:n} },$
where $\left( X_{k} \right)_{1:n}$ is the highest-order statistic of $n$ i.i.d. samples drawn from $X_k$.
\end{claim}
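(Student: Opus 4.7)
The plan is to deduce the bound from ex-ante individual rationality together with feasibility of the allocation, without even invoking incentive compatibility. Since ex-ante IR is the weakest of the two IR notions considered, the same bound automatically covers the ex-post IR case.

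First, I would apply the ex-ante IR constraint (Equation~\ref{eq: ex-ante IR}) at stage $k=1$ for each buyer $i$, with empty history. Rearranging, this gives
\[
\textstyle \E\left[\sum_{j=1}^{\days} p^i_j(\mathbf{v}_{\leq j})\right] \;\leq\; \E\left[\sum_{j=1}^{\days} v^i_j \, x^i_j(\mathbf{v}_{\leq j})\right],
\]
\ie, each buyer's expected total payment across all stages is bounded by her expected total welfare. Summing over all $n$ buyers, the expected revenue of any feasible, ex-ante IR mechanism is at most the expected total welfare $\E\bigl[\sum_k \sum_i v^i_k x^i_k(\mathbf{v}_{\leq k})\bigr]$.

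Next, I would use feasibility at each stage: since $x^i_k \in [0,1]$ and $\sum_{i} x^i_k(\mathbf{v}_{\leq k}) \leq 1$, the inner sum $\sum_i v^i_k \, x^i_k$ is a convex combination (with possibly deficient weights) of the values $v^1_k,\dots,v^n_k$, and hence is at most $\max_i v^i_k = (X_k)_{1:n}$ pointwise. Taking expectations, swapping the sum over $k$ with the expectation by linearity, and recalling that the values at stage $k$ are distributed as $n$ independent draws from $X_k$, I obtain
\[
\textstyle \opt{\mathcal{X},n,\days} \;\leq\; \sum_{k=1}^{\days} \E\bigl[(X_k)_{1:n}\bigr],
\]
as claimed.

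There is no real obstacle here: the argument is essentially a one-line consequence of IR plus feasibility, and crucially does not depend on the joint (possibly correlated) structure of $\mathcal{X}$ across stages, since the bound decouples per stage after linearity of expectation. The only thing worth flagging is that ex-ante IR is sufficient, which is exactly what makes this a valid upper bound on $\opt{\mathcal{X},n,\days}$ under either IR notion used in the paper.
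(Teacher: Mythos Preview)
Your proof is correct and is precisely the standard ``revenue $\leq$ welfare'' argument that the paper has in mind when it calls this bound ``trivial''; the paper does not spell out a proof beyond that remark, and your write-up fills in exactly the expected details (ex-ante IR at stage $1$ plus feasibility), including the observation that correlation across stages plays no role.
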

Note that this bound holds even for ex-ante IR and periodic IC mechanisms, and even if the stages are correlated\footnote{Recall that we allow the value $v^i_k$ of agent $i$ in stage $k$ to be correlated with her value in stage $k'$. We don't allow this value to be correlated with some other value $v^j_k$ of a different agent.}. Surprisingly, as we see later, if we restrict the marginal distributions at each stage to have monotone hazard rate we can still provide bounds on the Competition Complexity even with this trivial upper bound.

\paragraph{Duality based bounds.}
To improve on the trivial bound we ask for (1) independent stages, (2) ex-post IR mechanisms. Ex-post IR dynamic mechanisms have received a lot of attention in recent works, e.g.~\citet{papadimitriou2016complexity,ashlagi2016,mirrokni2016dynamic,mirrokni2016oblivious,balseiro2016dynamic}, because of their practicality\footnote{Ex-ante IR mechanisms are less appealing because of the following unsettling property: the seller can extract revenue equal to the social welfare by (roughly) charging the social welfare upfront, before the buyers have realized their future values. This makes sense for utility maximizers in a stylized model, but limits the applicability of the auctions produced.}.
Our improved bound is as follows. Choose any stage $j$; the contribution of stage $j$ is $\myerson{ \mathbf{X}_j }$, the optimal revenue we would extract from stage $j$ in a non-dynamic setting. The contribution of every other stage $k\neq j$ is the expected maximum $\expectation{ \left( X_{k} \right)_{1:n} }$ of $n$ samples drawn from $X_k$.

\begin{lemma}\label{lem:upper bound on opt}
For independent stages, ex-post IR and periodic IC dynamic mechanisms
\[ \textstyle \opt{\mathcal{X},n,m} \leq \min_{j=1,\dots,\days} \left\lbrace \myerson{ \mathbf{X}_j } + \sum_{k=1, k \neq j}^{\days} \expectation{ \left( X_{k} \right)_{1:n} } \right\rbrace, \]
where $\left( X_{k} \right)_{1:n}$ is the highest-order statistic of $n$ samples drawn from $X_k$, and $\mathbf{X}_j$ is the product distribution for stage $j$.
\end{lemma}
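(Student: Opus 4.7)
The plan is to fix an arbitrary periodic IC, ex-post IR dynamic mechanism $M$ and show that for every stage $j$ its expected total revenue is at most $\myerson{\mathbf{X}_j}+\sum_{k\neq j}\expectation{(X_k)_{1:n}}$; taking the minimum over $j$ then yields the lemma. Let $R_k$ denote the expected revenue $M$ extracts at stage $k$, so that $\opt{\mathcal{X},n,\days} = \sum_k R_k$.

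For every $k\neq j$ the bound $R_k\leq \expectation{(X_k)_{1:n}}$ is immediate from ex-post IR: pointwise $\sum_i p^i_k(\mathbf{v}_{\leq k})\leq \sum_i v^i_k x^i_k(\mathbf{v}_{\leq k})\leq \max_i v^i_k$. The substance of the lemma lies in bounding the single stage $R_j$ by the Myerson quantity $\myerson{\mathbf{X}_j}$ plus a ``slack'' that can be absorbed by the welfare bounds on the later stages.

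The main idea is to extract from $M$ an auxiliary single-shot auction on $\mathbf{X}_j$ whose revenue equals $R_j$ minus the buyers' aggregate expected post-$j$ utility in $M$, and which is IC and IR as a static mechanism. Concretely, define interim quantities $\bar{x}^i_j(v^i_j)=\expectation{x^i_j(\mathbf{v}_{\leq j})\mid v^i_j}$ and $\bar{p}^i_j(v^i_j)=\expectation{p^i_j(\mathbf{v}_{\leq j})\mid v^i_j}$ (averaging over the truthfully reported past history and the other buyers' stage-$j$ values), and let $U^i_{>j}(v^i_j)$ be buyer $i$'s expected utility from stages after $j$ in $M$ under a single stage-$j$ report of $v^i_j$ with truthful reporting elsewhere; by independence of stages this is a bona-fide function of $v^i_j$ alone. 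Then the one-shot auction on $\mathbf{X}_j$ with allocation $\bar{x}^i_j$ and payment $\tilde{p}^i_j(v^i_j):=\bar{p}^i_j(v^i_j)-U^i_{>j}(v^i_j)$ is feasible (it samples past history internally and runs $M$'s stage-$j$ allocation), IC (its utility for report $\hat{v}^i_j$ equals the total expected utility of $M$ under a single deviation at stage $j$, which periodic IC bounds by the utility at $\hat{v}^i_j=v^i_j$), and IR (both $v^i_j\bar{x}^i_j-\bar{p}^i_j$ and $U^i_{>j}$ are non-negative by ex-post IR at stage $j$ and at later stages respectively). Myerson's revenue theorem applied to this auxiliary auction gives
\[R_j-\sum_i \expectation{U^i_{>j}(v^i_j)}\leq \myerson{\mathbf{X}_j},\]
and expanding $\sum_i \expectation{U^i_{>j}(v^i_j)}\leq \sum_{k>j}\bigl(\expectation{(X_k)_{1:n}}-R_k\bigr)$, then combining with the welfare bounds for $k<j$, yields the desired inequality.

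The hard part is the payment shift $\tilde{p}^i_j:=\bar{p}^i_j-U^i_{>j}$ and the verification that it converts dynamic periodic IC into static Myerson-IC. Two subtleties are worth flagging: (i) $U^i_{>j}$ must be a function of $v^i_j$ alone for $\tilde{p}^i_j$ to be a legal static payment, which crucially requires independence of stages; and (ii) IR at stage $j$ of the auxiliary auction relies on ex-post IR not just at stage $j$ but at every later stage, to guarantee $U^i_{>j}\geq 0$. These are precisely the places where dropping either stage-independence or ex-post IR would break the argument.
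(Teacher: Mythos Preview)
Your argument is correct. The auxiliary single-shot auction with interim allocation $\bar{x}^i_j$ and shifted payment $\tilde{p}^i_j=\bar{p}^i_j-U^i_{>j}$ is indeed feasible, Bayesian IC (by averaging the periodic IC constraint at stage $j$ over past histories and over $\mathbf{v}^{-i}_j$), and interim IR (using ex-post IR both at stage $j$ and at every later stage). Myerson's optimality then bounds its revenue, and your telescoping of the post-$j$ utilities against the welfare bounds on later stages closes the argument. You were also careful to identify exactly where independence of stages (so that $U^i_{>j}$ is a function of the stage-$j$ report alone) and ex-post IR (so that $U^i_{>j}\ge 0$) are used.

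This is a genuinely different route from the paper's. The paper proves the lemma by extending the Cai--Devanur--Weinberg duality framework to the dynamic LP: it Lagrangifies the periodic IC and ex-post IR constraints, exhibits a specific ``useful'' dual flow whose induced virtual value is Myerson's $\phi$ at stage $j$ and the identity at every other stage, and reads off the bound as maximum virtual welfare. Your proof is entirely primal: you never touch the LP or dual variables, and you reduce directly to a static Myerson instance. What the duality approach buys is extensibility---the same machinery yields bounds for correlated stages under a stochastic-dominance assumption and is the tool behind the revenue non-monotonicity lemma---and it makes the stage-$j$ virtual value function explicit. What your approach buys is economy: it is shorter, requires no new framework, and makes transparent that the only facts needed are periodic IC at stage $j$, ex-post IR at stages $\ge j$, and Myerson's theorem for $\mathbf{X}_j$. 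The paper itself remarks that Lemma~\ref{lem:upper bound on opt} can also be derived from a characterization of \citet{ashlagi2016}; your reduction is in that more elementary spirit, though it does not rely on any structural characterization of the optimum.
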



Our proof is via an extension of the Cai-Devanur-Weinberg duality framework to the dynamic setting. 
The work of~\citet{cai2016} unified many different recent advances in Bayesian mechanism design by providing an approximately tight upper bound for the optimal revenue using a single dual solution. In their work, they start from a certain linear program for revenue maximization and Lagrangify the Bayesian IC and IR constraints. 
By Lagrangifying the IC and IR constraints of our linear program for maximizing revenue in the dynamic setting we can get similar characterizations. 
The bound of~\citet{cai2016} improves on the social welfare upper bound by substituting the value of each buyer's favorite item with the corresponding Myerson's virtual value. Our bound substitutes the largest expected order statistic with the corresponding expected virtual value. 
In order to develop some intuition we first prove the single agent, two stage case in Section~\ref{sec: single agent upper bound}. The general proof can be found in Appendix~\ref{sec: proof of upper bound}.

One can prove Lemma~\ref{lem:upper bound on opt} by using a characterization Lemma of~\citet{ashlagi2016} for the optimal dynamic, ex-post IR mechanism in this setting. Regardless, we believe that the extension of Cai-Devanur-Weinberg framework in the dynamic setting is an important conceptual contribution of this paper and we expect it to have further applications in the design and analysis of dynamic auctions in more general settings. To this end, even though it can't be materialized into a Competition Complexity result, we use the duality framework to prove a bound similar to Lemma~\ref{lem:upper bound on opt} for correlated stages that satisfy a stochastic dominance structure; details can be found in Appendix~\ref{app: correlated upper bound}.
We also use the framework in the revenue non-monotonicity Lemma presented next.

\paragraph{Discrete vs Continuous.} For simplicity of presentation, we prove Lemma~\ref{lem:upper bound on opt} for distributions with discrete support. Our proof can be easily modified to hold for continuous distributions.

\subsubsection*{Who benefits from correlation? A revenue non-monotonicity lemma.}

Given Lemma~\ref{lem:upper bound on opt}, a bound for the revenue of dynamic auctions with independent stages, one can ask if the same bound holds for correlated stages. A perhaps more natural question is: \textit{Does correlation help the buyer or the seller?} For example, in the case of two stages and a single agent, let $X_1$ and $X_2$ be the random variables denoting the buyer's valuations for the first and second stage, and let $\textsc{OPT}$ be expected revenue of the optimal IC and ex-post IR auction, in the case that $X_1$ and $X_2$ are independent. Let $\overline{\textsc{OPT}}$ be expected revenue of the optimal auction if $X_1$ and $X_2$ are correlated (but the marginal distributions for each stage remain the same). Is it the case that $\textsc{OPT}$ is always larger or always smaller than $\overline{\textsc{OPT}}$? 
There are two opposing intuitions. On one hand, one could argue that when the buyer reports her value on the first stage she reveals more information to the seller if the stages are correlated compared to when the stages are independent.
Therefore, the seller has to pay ``information rents'' to the buyer, and thus the expected revenue of the correlated stages is lower than the expected revenue of independent stages with identical marginals. 
On the other hand, before the buyer reports her first stage value, the seller knows more about how the buyer's values evolve across stages. This additional knowledge could be used to increase revenue compared to the independent stages case.
In Section~\ref{sec:non monotonicity} we show that both scenarios are possible. 
\begin{lemma}
Moving from independent to correlated stages while keeping the marginal (stage) distributions identical may increase or decrease the revenue of the optimal dynamic auction.
\end{lemma}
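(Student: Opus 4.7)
The plan is to exhibit two concrete instances on the same stage marginals: one in which the independent joint yields strictly more revenue, and one in which a suitably correlated joint yields strictly more.

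For the \emph{correlation hurts} direction, I would take a single buyer and two stages with marginal $X$ chosen as a (truncated) equal-revenue distribution, so that $E[X] \gg \myerson{X}$. Under independent stages, a mechanism in the spirit of Example~\ref{example: myerson every day}---charge the stage-$1$ report and grant the stage-$2$ item with probability proportional to that report, scaled so that every transition is ex-post IR---extracts revenue of order $E[X]$. Under perfect correlation $X_1 = X_2$, the design problem collapses to selling two identical copies to a single buyer whose value for each copy is a single draw from $X$; the optimal revenue of that static problem is $O(\myerson{X})$ (bundle pricing or item pricing both give $\max_p p\cdot \Pr[X \geq p]$ up to a factor of $2$). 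Tuning the truncation length of the equal-revenue distribution then makes $OPT_{\text{ind}}/OPT_{\text{cor}}$ arbitrarily large.

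For the \emph{correlation helps} direction, I would switch to a small multi-buyer example where correlation allows stage-$1$ reports to serve as signals for stage-$2$ virtual values. Concretely, I would take two buyers with a two-point marginal $\{L,H\}$, compare the optimal revenue on the independent product---which is cleanly upper-bounded by Lemma~\ref{lem:upper bound on opt} applied with $n=2$, $m=2$, yielding $\myerson{\mathbf{X}_j} + E[(X_k)_{1:2}]$---with a periodic IC, ex-post IR mechanism on a perfectly (positively) correlated joint that runs VCG in stage $1$ and then, in stage $2$, uses the stage-$1$ reports to drive personalized reserves. For parameters with $L > H/2$, a direct computation of both sides shows the correlated revenue $L+H$ strictly exceeds the independent Lemma~\ref{lem:upper bound on opt} benchmark, yielding a strict separation in the other direction.

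The main obstacle is the second direction: the independent-stage optimum is not a priori equal to the per-stage Myerson revenue, since cross-stage dynamic mechanisms can in principle do better. The clean way to dispose of this is to use Lemma~\ref{lem:upper bound on opt} as the benchmark on the independent side, which incorporates every periodic IC, ex-post IR mechanism at once, and then to find marginal parameters (and a correlated joint) for which the proposed correlated mechanism provably exceeds this benchmark. Verifying IC and ex-post IR of the ``stage-$2$ reserves conditional on stage-$1$ reports'' mechanism is a routine case check once the parameters are pinned down; the rest is arithmetic.
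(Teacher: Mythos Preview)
Your plan has real gaps in both directions, and one of them is fatal as written.

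\textbf{The ``correlation helps'' direction.} The mechanism you propose---VCG at stage~$1$, then a second-price auction at stage~$2$ with each buyer's reserve set to her own stage-$1$ report---is \emph{not} periodic IC under positive correlation. A buyer with the high value $H$ facing an opponent with $L$ strictly prefers to report $L$ at stage~$1$: at stage~$1$ she ties instead of winning outright (losing at most $(H-L)/2$ in expected utility), but at stage~$2$ her personalized reserve drops from $H$ to $L$, so she now wins and pays $\max(L,L)=L$ instead of $\max(L,H)=H$, gaining a full $H-L$. This under-reporting incentive persists for any $L<H$, so no choice of parameters repairs it, and the ``routine case check'' you defer would fail. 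More generally, tying a buyer's own future reserve to her current report creates exactly the information rent you are trying to eliminate. The paper's construction avoids this by using a \emph{single} buyer and a \emph{negatively} correlated joint (high $X_1$ paired with low $X_2$), so that high stage-$1$ types are willing to pay more upfront precisely because they do not value the stage-$2$ item; the separating instrument is a two-option menu at stage~$1$ rather than a stage-$2$ reserve.

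\textbf{The ``correlation hurts'' direction.} Your qualitative idea is sound, but the quantitative claim does not survive equal marginals. If both stages share the same truncated equal-revenue marginal $X$, the Example~\ref{example: myerson every day} mechanism needs the stage-$2$ allocation probability $\hat v/E[X_2]$ to stay in $[0,1]$, which fails whenever $\hat v>E[X]$; capping the probability (and hence the stage-$1$ charge) at $E[X]$ yields revenue $E[\min(X,E[X])]\approx \ln\ln V$, not ``of order $E[X]\approx\ln V$''. This still beats the correlated upper bound $2\,\myerson{X}=O(1)$ for large $V$, so the direction can be rescued, but not with the sentence you wrote. Separately, the assertion that under $X_1=X_2$ the dynamic problem ``collapses'' to a static one-parameter sale is correct but is not immediate: you need to combine the stage-$1$ PIC inequality with the stage-$2$ IC inequality (for an off-path stage-$1$ report) to get $v\,X(v)-P(v)\ge v\,X(\hat v)-P(\hat v)$ for $X(v)=x_1(v)+x_2(v,v)$, $P(v)=p_1(v)+p_2(v,v)$; per-stage ex-post IR then gives the static IR. The paper sidesteps both issues by choosing \emph{different} marginals (so the Example-style mechanism is well-defined on the independent side) and by using the duality framework directly to upper bound the correlated optimum, rather than a reduction to a static problem.
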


Proving such a statement requires being able to identify the optimal dynamic auction, or a good upper bound for the revenue of the optimal dynamic auction, for independent as well as correlated stages. Even for two fixed distributions $X_1$ and $X_2$ it is not clear how to argue about what the optimal dynamic auction is capable of when $X_1$ and $X_2$ are correlated. Furthermore, $\expectation{X_1} + \expectation{X_2}$ will not do; the bound must be affected by the correlation. We provide upper bounds via duality; we cite this as another usage of the duality framework developed in this paper.

\subsection{Lower bounds on the revenue of Vickrey.}

After finding suitable bounds for $\opt{\mathcal{X},n,m}$, we need to show that the revenue of VCG (a second price auction at each stage) with additional buyers surpasses these bounds. A first observation is that the expected revenue of VCG is the expected second order statistic:

\begin{obs}
The expected revenue of a second price auction with $n$ agents whose values are drawn i.i.d. from $X$ is $\expectation{ X_{2:n} }$.
\end{obs}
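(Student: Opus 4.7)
The plan is to invoke the standard dominant-strategy truthfulness of the second price auction and then simply identify the realized revenue with the second order statistic, so this is essentially a one-line observation with two ingredients.

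First I would recall the incentive analysis: in a single-item second price auction with $n$ bidders, each bidder $i$ has a weakly dominant strategy to bid her true value $v^i$. This follows from the usual case analysis (if $i$ is the winner, raising her bid never changes her payment and lowering it risks losing a profitable item; if she is a loser, lowering her bid cannot help and raising it only wins the item at a price exceeding her value). Under i.i.d. draws from $X$, the profile of bids therefore equals the profile of realized values $v^1,\dots,v^n$.

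Next I would identify the realized revenue. By the rules of the auction, exactly one item is sold (or zero if there is a single bidder, but with $n\geq 2$ the top bidder wins) and the winner pays the second highest bid. Using the descending-order-statistic notation fixed in the excerpt, where $X_{1:n}\geq X_{2:n}\geq\cdots\geq X_{n:n}$ denotes the reordered i.i.d. sample, the second highest bid is exactly $X_{2:n}$. Hence the revenue random variable equals $X_{2:n}$ pointwise.

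Finally, taking expectations over the i.i.d. draw from $X$ yields $\mathbb{E}[X_{2:n}]$. There is no real obstacle here; the only thing to be careful about is the convention that $X_{2:n}$ denotes the \emph{second highest} order statistic rather than the second smallest, which matches the ordering explicitly set up just before the observation. For completeness I would note that the same identity holds in expectation regardless of tie-breaking, since ties occur with probability zero for continuous $X$ and any fixed tie-breaking rule for discrete $X$ still charges the winner the second highest reported value.
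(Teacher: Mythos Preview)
Your proposal is correct; it spells out the standard dominant-strategy argument for Vickrey and identifies the realized payment with the second highest value, which is exactly the intended content of the observation. The paper itself does not provide any proof here---it simply states the observation as a well-known fact---so your write-up is already more detailed than what appears in the paper.
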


Our upper bounds on $\opt{\mathcal{X},n,m}$ ( Claim~\ref{claim: trivial bound} and Lemma~\ref{lem:upper bound on opt} ) are sums over $\days$ terms. The term that corresponds to stage $k$ is either the expected revenue of the optimal static auction $\myerson{\mathbf{X}_k}$ for that stage, or the expected highest order statistic $\expectation{X_{1:n}}$ of $n$ samples. We upper bound each term separately. This gives us a sufficient number of extra buyers, i.e. an upper bound on the Competition Complexity. For the terms involving the optimal static auction, the original Theorem of Bulow and Klemperer provides a good bound for $\expectation{X_{2:n+c}}$ and regular distributions:

\begin{theorem}[Bulow-Klemperer]\label{thm: bulowklemperer}
Let $X$ be a random variable with a regular distribution $D$. Let $\mathbf{X}$ be the product distribution of $n$ samples drawn i.i.d from $D$. Then $\expectation{ X_{2:n+1} } \geq \myerson{ \mathbf{X} }$.
\end{theorem}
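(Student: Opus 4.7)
The plan is to follow the classical Bulow--Klemperer comparison argument, phrased in the language of this paper. I would introduce an auxiliary $(n+1)$-buyer mechanism $\mathcal{C}$ that behaves like Myerson's optimal auction on the first $n$ buyers, and then reduces to a ``no-reserve'' second price mechanism $\mathcal{B}$ on $n+1$ buyers. The core observation is that $\mathcal{B}$ is optimal in the class of always-selling mechanisms, so its revenue dominates that of $\mathcal{C}$, and the revenue of $\mathcal{C}$ is exactly $\myerson{\mathbf{X}}$.

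Concretely, first I would define $\mathcal{C}$ on $n+1$ i.i.d.\ buyers drawn from $D$ as follows: run Myerson's optimal auction on buyers $1,\dots,n$; if Myerson does not allocate (every virtual value among the first $n$ buyers is negative), give the item to buyer $n+1$ for free. Since buyer $n+1$'s allocation and payment do not depend on her report, $\mathcal{C}$ inherits IC and IR from Myerson, is feasible, and \emph{always} sells the item. Its expected revenue equals $\myerson{\mathbf{X}}$ because buyer $n+1$ contributes no payments.

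Next I would invoke Myerson's characterization: the expected revenue of any IC, IR mechanism on $n+1$ i.i.d.\ buyers equals the expected virtual welfare it allocates. Among all mechanisms that always sell the item to some buyer, expected virtual welfare is maximized by allocating to the buyer with the highest virtual value $\phi(v)$. Under regularity, $\phi$ is monotone non-decreasing, so allocating to the highest virtual value coincides with allocating to the highest actual value; payments are then pinned down by Myerson's payment identity to equal the second-highest value. Hence the revenue-maximizing always-selling mechanism is the second price auction with no reserve on $n+1$ buyers, whose revenue is $\expectation{X_{2:n+1}}$. Since $\mathcal{C}$ is one such always-selling mechanism, we conclude
\[
\expectation{X_{2:n+1}} \;\geq\; \text{Rev}(\mathcal{C}) \;=\; \myerson{\mathbf{X}}.
\]

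The main conceptual step, and the only place regularity is used, is identifying the highest-virtual-value allocation with the highest-value allocation; without regularity, the second price auction need not be the revenue-maximizer among always-selling mechanisms. The rest is essentially bookkeeping: verifying that $\mathcal{C}$ is IC/IR/feasible and that buyer $n+1$'s free-item rule contributes zero to revenue. I do not expect any serious technical obstacle, since the argument is the original one by Bulow and Klemperer adapted verbatim to the notation used here.
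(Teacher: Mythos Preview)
Your argument is the standard Bulow--Klemperer proof and is correct. Note, however, that the paper does not prove this theorem at all: it is quoted as a classical result from \cite{bulow1996auctions} and used as a black box, so there is no ``paper's own proof'' to compare against---your write-up simply supplies the well-known argument that the paper omits.
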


The following corollary can be shown:
\begin{corollary}\label{cor: second price is n-1/n approximation}
Let $X$ be a random variable with a regular distribution $D$. Let $\mathbf{X}$ be the product distribution of $n$ samples drawn i.i.d from $D$. Then $\expectation{ X_{2:n} } \geq \frac{n-1}{n} \myerson{ \mathbf{X} }$. In other words, the revenue of a second price auction is an $\frac{n-1}{n}$ approximation to the revenue of the optimal auction.
\end{corollary}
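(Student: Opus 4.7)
The plan is to combine the Bulow-Klemperer theorem (applied to $n-1$ bidders) with a scaling argument that shows $\myerson{\mathbf{X}_{n-1}} \geq \frac{n-1}{n}\myerson{\mathbf{X}_{n}}$, where $\mathbf{X}_k$ denotes the product distribution of $k$ i.i.d.\ copies of $X$.

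First I would apply \thmref{thm: bulowklemperer} to a population of $n-1$ i.i.d.\ bidders: the theorem guarantees that the expected revenue of a second price auction with $(n-1)+1 = n$ bidders dominates the optimal (Myerson) revenue on $n-1$ bidders, i.e.\ $\expectation{X_{2:n}} \geq \myerson{\mathbf{X}_{n-1}}$.

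The second step is to relate $\myerson{\mathbf{X}_{n-1}}$ to $\myerson{\mathbf{X}_n}$, which is where the factor $\frac{n-1}{n}$ comes in. I would use a phantom-bidder/symmetrization argument: given Myerson's optimal auction $M_n$ for $n$ bidders, construct a mechanism $M'$ for $n-1$ real bidders by sampling an independent phantom value from $D$, running $M_n$ on the $n-1$ real bidders together with the phantom, but only collecting payments from the real bidders (if the phantom is allocated the item, simply discard it). Since the phantom value is sampled independently from the prior, $M'$ remains IC and IR for the real bidders, and the $n$ bidders facing $M_n$ are exchangeable, so by symmetry the expected payment collected from any fixed bidder equals $\myerson{\mathbf{X}_n}/n$. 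Hence $M'$ collects $\frac{n-1}{n}\myerson{\mathbf{X}_n}$ in expectation from the $n-1$ real bidders, which gives $\myerson{\mathbf{X}_{n-1}} \geq \frac{n-1}{n}\myerson{\mathbf{X}_n}$.

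Combining the two displayed inequalities yields $\expectation{X_{2:n}} \geq \frac{n-1}{n}\myerson{\mathbf{X}_n}$, as desired. The main subtlety is justifying the second step: one must verify that the phantom construction preserves IC and IR for the real bidders (which follows because the phantom's type is drawn from the true prior and independent of the real bidders' reports, so from any real bidder's perspective $M'$ is just $M_n$ integrated over one extra independent coordinate, and exchangeability then identifies the per-bidder expected payment as $\myerson{\mathbf{X}_n}/n$).
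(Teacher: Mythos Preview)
Your proof is correct and follows the standard route implicit in the paper. The paper itself does not spell out a proof of this corollary; it simply states it immediately after Theorem~\ref{thm: bulowklemperer} with the remark ``The following corollary can be shown,'' so your two-step argument (Bulow--Klemperer on $n-1$ bidders together with the phantom-bidder symmetrization to get $\myerson{\mathbf{X}_{n-1}} \geq \frac{n-1}{n}\myerson{\mathbf{X}_n}$) is exactly the kind of derivation the paper has in mind.
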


We first note that for terms involving the expected highest order statistic similar bounds are impossible for regular distributions, as we've already seen in Example~\ref{example: myerson every day}:

\begin{example}
Let $X$ be a random variable following the equal revenue distribution with $F(x) = 1 - \frac{1}{x}$, for $x \in [1,\infty)$. $\expectation{X} = \int_0^{\infty} \frac{1}{x} dx$ is unbounded, while $\expectation{X_{2:n}} = n-1$, i.e. bounded for all $n$.\footnote{See Claim~\ref{claim: equal revenue order statistics} in Appendix~\ref{sec: lower bounds on cc} for a calculation.}.The example can be modified to hold for truncated distributions: for all $n$, there exists a truncation value $V$, such that for the truncated distribution $\expectation{ X } > \expectation{X_{2:n}}$.
\end{example}

Therefore, in order to get a bound on the Competition Complexity we need to impose a restriction stronger than regularity on some stage distributions. A natural candidate is distributions with \textit{Monotone Hazard Rate}; a distribution has \textit{Monotone Hazard Rate} (MHR) if its hazard rate $h(v) = \frac{f(v)}{1-F(v)}$ is monotone non-decreasing. MHR distributions are a subset of regular distributions and have various nice properties, like bounded expected order statistics, small tails, etc.
In this paper we show the following new bounds:
\begin{theorem}\label{thm: bounds on mhr stuff}
Let $X_{r:n}$ be the $r$-th order statistic of $n$ i.i.d samples from a continuous distribution $X$ with monotone hazard rate. Then:
(1) $\expectation{X_{2:4n}} \geq \expectation{X_{1:n}}$.
(2) $\expectation{X_{2:n+1}} \geq \frac{1}{e} \expectation{X_{1:n}}$; (3) $\expectation{X_{2:n}} \geq \frac{1}{3} \expectation{X_{1:n}}$.
\end{theorem}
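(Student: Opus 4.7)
The plan is to follow an ``auction-flavored'' path that goes through Myerson's revenue as a bridge between the VCG second order statistic and the welfare first order statistic. The key lemma is: for an MHR distribution $X$ with $n$ i.i.d.\ copies drawn into $\mathbf{X}$,
\[
\myerson{\mathbf{X}} \;\geq\; \tfrac{1}{e}\,\expectation{X_{1:n}}.
\]
The proof has two parts. First, $X_{1:n}$ is itself MHR: its hazard rate factors as $h_{X_{1:n}}(t)=r(F(t))\cdot h_X(t)$ where $r(u)=nu^{n-1}/(1+u+\cdots+u^{n-1})$ is non-decreasing on $[0,1]$ (by comparing $(n-1)\sum_k u^k$ with $\sum_k k\,u^k$ termwise) and $h_X$ is non-decreasing by MHR, so the product is non-decreasing in $t$. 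Second, apply the classical MHR inequality that any MHR random variable $Y$ satisfies $\max_{p}p\cdot \Pr[Y\geq p]\geq \tfrac{1}{e}\expectation{Y}$ (provable by normalizing the monopoly reserve to $1$, so that $h_Y(1)=1$, and splitting $\int_0^\infty e^{-H(t)}dt$ into $[0,1]$ and $[1,\infty)$; MHR-convexity of $H$ with $H'(1)=1$ bounds each piece by $e-1$ and $1$ respectively). Applying this to $Y=X_{1:n}$ and noting that the maximizing posted price $p^*$ in the $n$-bidder auction yields revenue $p^*(1-F(p^*)^n)=p^*\Pr[X_{1:n}\geq p^*]$, a feasible mechanism, lower-bounds $\myerson{\mathbf{X}}$ by $\tfrac{1}{e}\expectation{X_{1:n}}$.

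Combining the key lemma with Bulow--Klemperer (\thmref{thm: bulowklemperer}) gives $\expectation{X_{2:n+1}}\geq \myerson{\mathbf{X}}\geq \tfrac{1}{e}\expectation{X_{1:n}}$, proving (2). Combining with \corref{cor: second price is n-1/n approximation} gives $\expectation{X_{2:n}}\geq \tfrac{n-1}{ne}\expectation{X_{1:n}}$, which exceeds $\tfrac{1}{3}$ once $n\geq \lceil 3/(3-e)\rceil=11$, handling (3) for $n\geq 11$.

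For (1), and for (3) with small $n\leq 10$, the auction-flavored chain is too lossy and a direct argument is needed. Write the inequality as $\int_0^\infty \phi(t)\,dt\geq 0$ for an appropriate polynomial $\phi$ in $F(t)$: e.g., for (3), $\phi=\tfrac{2}{3}(1-F^n)-nF^{n-1}(1-F)$; for (1), $\phi=F^n-F^{4n-1}\bigl[4n-(4n-1)F\bigr]$. Substitute $s=H(t)=-\log(1-F(t))$, so $dt=ds/h(t(s))$; for MHR, $g(s):=1/h(t(s))$ is non-negative and non-increasing in $s$. The crux observation is: if $\phi(s)$ has a single sign change from $+$ to $-$ at some $s^*$ and $g\geq 0$ is non-increasing, then $\int_0^\infty \phi(s)g(s)\,ds\geq g(s^*)\int_0^\infty \phi(s)\,ds$, because the pointwise inequality $\phi(s)\cdot(g(s)-g(s^*))\geq 0$ holds by case analysis on whether $s<s^*$ or $s>s^*$. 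Hence it suffices to verify (i) the single-sign-change structure of $\phi$, and (ii) non-negativity of $\int_0^\infty \phi(s)\,ds$ in the exponential case ($g\equiv 1$). Using $\expectation{X_{1:m}^{\text{Exp}}}=H_m$ and $\expectation{X_{2:m}^{\text{Exp}}}=H_m-1$, (ii) reduces to the harmonic-sum inequalities $H_{4n}-H_n\geq 1$ for (1) (holds for all $n\geq 1$, since the difference approaches $\log 4>1$ from above and equals $13/12$ at $n=1$) and $H_n\geq \tfrac{3}{2}$ for (3) (holds for $n\geq 2$, tight at $n=2$).

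The main obstacle is verifying (i) for (1), where $\phi$ has degree $4n$ and both endpoints $\phi(0)=\phi(1)=0$. The plan is to factor $\phi(F)=F^n\bigl[1-\psi(F)\bigr]$ with $\psi(F)=F^{3n-1}(4n-(4n-1)F)$, show via first-derivative analysis that $\psi$ has a unique critical point $F^\dagger=(12n-4)/(12n-3)\in(0,1)$, evaluate $\psi(F^\dagger)$ to confirm $\psi(F^\dagger)>1$, and conclude that $\psi$ crosses the level $1$ exactly once strictly inside $(0,1)$ before returning to $\psi(1)=1$. This pins down a unique interior root of $\phi$ and yields the $+$-then-$-$ sign pattern required by the monotonicity lemma; analogous but simpler polynomial analyses handle (3) (whose $\phi$ has a unique critical point at $F^\dagger=(3n-3)/(3n-2)$ with $\phi(F^\dagger)=\tfrac{2}{3}-(F^\dagger)^{n-1}<0$).
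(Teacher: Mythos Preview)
Your treatment of (2) is essentially the paper's: Bulow--Klemperer, then the fact that $X_{1:n}$ is itself MHR, then the single-bidder $1/e$ bound. The paper interposes an explicit inequality $\myerson{\mathbf{X}}\geq\myerson{X_{1:n}}$ proved by a coupling argument, which is the same content as your observation that posting the monopoly reserve $p^*$ for $X_{1:n}$ is a feasible $n$-bidder mechanism with revenue $p^*(1-F(p^*)^n)$.

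For (1) and (3) you take a genuinely different route. The paper proves (1) by first establishing the $n=1$ case $\expectation{X_{2:4}}\geq\expectation{X}$ via a piecewise-linear approximation of $H$ and a direct integral computation, then bootstraps to general $n$ by partitioning $4n$ samples into four blocks of $n$, noting the block maxima are i.i.d.\ MHR (Barlow--Proschan), and applying the $n=1$ case. For (3) the paper chains two known facts: the spacing $\expectation{X_{1:n}}-\expectation{X_{2:n}}=\expectation{1/h(X_{1:n})}$ is non-increasing in $n$, and $\expectation{X_{2:2}}\geq\tfrac{1}{2}\expectation{X}$ from convexity of $H$; together these give $\expectation{X_{1:n}}-\expectation{X_{2:n}}\leq\expectation{X_{1:2}}-\expectation{X_{2:2}}\leq\tfrac{2}{3}\expectation{X_{1:2}}\leq\tfrac{2}{3}\expectation{X_{1:n}}$ in one line. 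Your Chebyshev-type ``single sign change against a non-increasing weight'' lemma is correct and more uniform: it reduces both (1) and (3) to the exponential case for every $n$ simultaneously, with no coupling step and no piecewise-linear approximation. In particular your direct argument for (3) already works for all $n\geq 2$ (since $(F^\dagger)^{n-1}=(1-\tfrac{1}{3n-2})^{n-1}$ decreases to $e^{-1/3}>\tfrac{2}{3}$), so the $n\geq 11$ auction detour is redundant. One small slip: $H_{4n}-H_n$ is \emph{increasing} to $\log 4$, not decreasing from above; but its minimum $13/12$ at $n=1$ already exceeds $1$, so your conclusion is unaffected. The trade-off is that the paper's (3) is a one-line consequence of cited lemmas, while your route is self-contained and, unlike the paper, handles (1) without the piecewise-linear machinery.
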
 
{}
MHR distributions have been studied extensively in the Statistics literature under the (perhaps better) name of IHR, Increasing Hazard Rate, and IFR, Increasing Failure Rate (e.g.~\citet{barlow1966inequalities},~\citet{barlow1996mathematical})\footnote{To keep consistency with the auctions community we refer to them as MHR distributions in this paper.}.
A common trick when working with MHR distributions is to write the cdf as $F(x) = 1 - e^{-\int_0^{x} h(z) dz }$\footnote{See Claim~\ref{claim: basic MHR } in Appendix~\ref{appendix:missing from mhr} for an explanation.}. Then, since $h(x)$ is non-decreasing, $H(x) = \int_0^{x} h(z) dz$ is a convex function. Using one sided bounds for $H(x)$ (for example a linear approximation) one can provide lower bounds and upper bounds for quantities like the expected minimum of two samples (e.g. \cite{blog, dhangwatnotai2015revenue, hartline2009simple}). When working with order statistics of many samples though, one sided bounds such as these do not work, since the closed form for the expected second order statistic has both positive and negative terms that involve $H(x)$. Moreover, taking more samples will not compensate for a lossy argument; the proofs need to work for distributions that are essentially point masses, where all the order statistics are equal for any number of samples.

For each of the bounds in Theorem~\ref{thm: bounds on mhr stuff} we take a different approach. 
For the first one, we start by showing a one sample bound: $\expectation{X_{2:4}} \geq \expectation{X}$. This bound is tight for the exponential distribution (in a sense that $\expectation{X_{2:3}} < \expectation{X}$). We first prove that the inequality is strict when $H(x)$ is piece-wise linear and convex, and then show that every convex function with bounded domain can be approximated by a piece-wise linear function. We combine the one sample result, the fact that order statistics of MHR distributions have MHR distributions themselves, and a coupling argument to generalize to $n$ samples.
For the second bound, $\expectation{X_{2:n+1}} \geq \frac{1}{e} \expectation{X_{1:n}}$, we take an ``auction flavored'' approach. First, the LHS is at least $\myerson{ \mathbf{X} }$ using Bulow and Klemperer's result. 
Second, we compare $\myerson{ \mathbf{X} }$ with $\myerson{ X_{1:n} }$, the expected revenue of the optimal auction in a (one-shot) single agent auction with distribution $X_{1:n}$, using a coupling argument. Third, since order statistics of MHR distributions have MHR distributions, we can use known bounds to compare $\myerson{ X_{1:n} }$ and $\expectation{ X_{1:n} }$.
For the last bound, $\expectation{X_{2:n}} \geq \frac{1}{3} \expectation{X_{1:n}}$, we combine a (known) bound on the expected minimum of two samples from an MHR distribution with the (also known) fact that spacings of order statistics of MHR distributions, i.e. $\expectation{X_{1:n}} - \expectation{X_{2:n}}$, are non-increasing functions of the number of samples.
We prove Theorem~\ref{thm: bounds on mhr stuff} as three separate Lemmas in Section~\ref{sec:second price bounds}.

\paragraph{Discrete vs Continuous.} 
For a distribution over a discrete domain $\{ 1, 2, \dots, N \}$, the definition of hazard rate is $h(i) = \frac{p(i)}{\sum_{j \geq i} p(j)}$ (see~\citet{barlow1996mathematical}). Some known inequalities for MHR distributions fail for the discrete case. For example, 
for continuous MHR distributions one can show that $\Pr[ X \geq \expectation{X} ] \geq \frac{1}{e}$; the inequality fails for a geometric distribution. Our proofs hold only for continuous distributions. It remains open whether the statements are true for discrete MHR distributions.

\subsection{Putting everything together.}

By combining the different upper bounds on $\opt{\mathcal{X},n,m}$ with the corresponding lower bounds for VCG we can get upper bounds for the Competition Complexity and approximate Competition Complexity of dynamic auctions. We prove our lower bounds for the Competition Complexity in Appendix~\ref{sec: lower bounds on cc}. Our lower bounds work for (1) independent stages, $\days$ MHR distributions, for ex-ante IR auctions (applied in Theorem~\ref{thm: main thm all mhr}), and (2)  independent stages, $\days-1$ MHR and $1$ regular distribution, for ex-post IR auctions (applied in Theorem~\ref{thm: main thm almost all mhr}). 
The proofs are similar. The auction in the second bound is a generalization of Example~\ref{example: myerson every day}. The auction in the first bound exploits an unsettling feature of ex-ante IR mechanisms: the buyers are willing to give up their expected future (net) utility just to be able to participate in the future auction. An ex-ante IR auction that extracts all of the social welfare is the following: at every stage the seller runs a second price auction, but before that, all buyers pay an entree fee equal to their expected utility for participating (expected value subject to being the winner, minus expected second highest value, multiplied by probability of winning). A common difficulty in both proofs is the algebraic manipulations of the expected first and second order statistics\footnote{The Lambert W-function makes an appearance.}. 
Combining with the upper bounds we get the following Theorems:

\begin{theorem}\label{thm: main thm all mhr}
For a dynamic environment where every stage distribution $X_k$ is continuous, and has monotone hazard rate then, even for periodic IC and ex-ante IR dynamic auctions (by combining Claim~\ref{claim: trivial bound} + Theorem~\ref{thm: bounds on mhr stuff}):
(1) The Competition Complexity is at least $(e-1)n$ and at most $3n$; (2) The $\frac{1}{e}$-approximate Competition Complexity is $1$; (3) For $n \geq 2$, the $\frac{1}{3}$-approximate Competition Complexity is $0$.
\end{theorem}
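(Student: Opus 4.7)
The theorem is a direct combination of the social-welfare upper bound (Claim~\ref{claim: trivial bound}) with the three distributional inequalities from Theorem~\ref{thm: bounds on mhr stuff}, all applied stage by stage. My first step would be to observe the two endpoints explicitly. On one side, Claim~\ref{claim: trivial bound} gives $\opt{\mathcal{X},n,\days}\leq \sum_{k=1}^{\days} \expectation{(X_k)_{1:n}}$ even against periodic-IC and ex-ante IR mechanisms, and even in the presence of arbitrary correlation across stages for the same buyer; this is crucial because the theorem allows such correlation (buyers are i.i.d.\ across agents, but each buyer's values may be correlated across time). On the other side, running a second-price auction with $n+c$ buyers at every stage generates expected revenue $\sum_{k=1}^{\days} \expectation{(X_k)_{2:n+c}}$: VCG stage $k$'s revenue is $\expectation{(X_k)_{2:n+c}}$ by the observation in the text, and by linearity of expectation the total revenue is the sum over stages (no cross-stage correlation enters because VCG is dominant-strategy truthful per stage).

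Given these two templates, each claim reduces to a stage-wise inequality provided by Theorem~\ref{thm: bounds on mhr stuff}, applied to each marginal $X_k$ (which by assumption is continuous and MHR). For (1), taking $c=3n$, Theorem~\ref{thm: bounds on mhr stuff}(1) yields $\expectation{(X_k)_{2:4n}} \geq \expectation{(X_k)_{1:n}}$ for every $k$; summing over $k$ and chaining with Claim~\ref{claim: trivial bound} gives $\rev{VCG,\mathcal{X},n+3n,\days} \geq \opt{\mathcal{X},n,\days}$, so the Competition Complexity is at most $3n$. For (2), $c=1$ and Theorem~\ref{thm: bounds on mhr stuff}(2) give $\expectation{(X_k)_{2:n+1}} \geq \frac{1}{e}\expectation{(X_k)_{1:n}}$ per stage, summed yields the $\frac{1}{e}$-approximation with one extra buyer. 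For (3), $c=0$ with $n\geq 2$ and Theorem~\ref{thm: bounds on mhr stuff}(3) immediately gives the $\frac{1}{3}$-approximation with no extra buyer.

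For the $(e-1)n$ lower bound on the exact Competition Complexity, I would instantiate with i.i.d.\ $\mathrm{Exp}(1)$ stages (independent across stages, so correlation is unused and the instance lies in the stated family). The ``entry-fee'' ex-ante IR mechanism described in the text extracts the full social welfare $\sum_{k=1}^{\days} \expectation{(X_k)_{1:n}} = \days \cdot H_n$, where $H_n=\sum_{i=1}^n 1/i$, while VCG with $n+c$ buyers earns $\days \cdot (H_{n+c}-1)$ using the closed form for exponential order statistics. Requiring the VCG revenue to match OPT forces $H_{n+c}-H_n \geq 1$, equivalently (using the standard estimate $H_{n+c}-H_n = \ln\!\tfrac{n+c}{n}+O(1/n)$) $c \geq (e-1)n$ for $n$ large, yielding the claimed linear-in-$n$ lower bound. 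The details of the algebraic manipulation (and of the $(e-1)n$ constant, requiring the Lambert $W$ function for an exact threshold) are deferred to Appendix~\ref{sec: lower bounds on cc} as the text notes.

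The conceptual work has all been packed into Theorem~\ref{thm: bounds on mhr stuff}; once its three inequalities are in hand, the only thing to verify here is that everything factors through stage marginals, so correlation across stages neither helps OPT's upper bound nor hurts VCG's lower bound. That check is what makes the ``correlated stages'' statement in Informal Theorem~1 possible at all, and it is the only subtle-looking step in the combination.
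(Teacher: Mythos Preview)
Your proposal is correct and matches the paper's approach exactly on the upper bounds: the theorem statement itself says ``by combining Claim~\ref{claim: trivial bound} + Theorem~\ref{thm: bounds on mhr stuff}'', and you carry out precisely that combination, correctly noting that the social-welfare bound and the VCG revenue both factor through stage marginals so that inter-stage correlation is irrelevant.

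For the lower bound you are also on the right track, but your instance differs slightly from the paper's. You take all stages to be $\mathrm{Exp}(1)$ and invoke the ``entry-fee at every stage'' mechanism from the introduction. The paper's formal proof (second lemma in Appendix~\ref{sec: lower bounds on cc}) instead uses a dummy first stage $X_1 \sim U[0,\epsilon]$ and $\mathrm{Exp}(1)$ thereafter, charging the entire entry fee up front on stage~1 and running a plain second-price auction on stages $2,\ldots,\days$. The reason is the ex-ante IR definition in Equation~\eqref{eq: ex-ante IR}: it conditions on the realized $v^i_k$, so with per-stage entry fees a buyer who draws a low $v^i_k$ may have negative forward-looking utility at stage $k$. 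Front-loading the fee onto a near-valueless first stage sidesteps this cleanly. The resulting inequality is still $H_{n+c}-H_n \geq 1$ (up to vanishing terms), so your asymptotic conclusion $c \geq (e-1)n$ is unchanged; only the construction needs this small repair.
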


\begin{theorem}\label{thm: main thm almost all mhr}
For a dynamic environment where every stage distribution $X_k$ is continuous, the stage distributions are independent, $\days-1$ stage distributions have monotone hazard rate and the remaining stage distribution is regular, then even for periodic IC and ex-post IR dynamic auctions: (1) The Competition Complexity is at least $(e-1)n$ and at most $3n$ (For the upper bound: Lemma~\ref{lem:upper bound on opt} + Thms~\ref{thm: bulowklemperer},~\ref{thm: bounds on mhr stuff}); (2) The $\frac{1}{e}$-approximate Competition Complexity is $1$ (Lemma~\ref{lem:upper bound on opt} + Thms~\ref{thm: bulowklemperer},~\ref{thm: bounds on mhr stuff}); (3) For $n \geq 2$, the $\frac{1}{3}$-approximate Competition Complexity is $0$. (Lemma~\ref{lem:upper bound on opt} + Corollary~\ref{cor: second price is n-1/n approximation}, Thm~\ref{thm: bounds on mhr stuff})
\end{theorem}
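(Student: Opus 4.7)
The plan is to combine the duality based upper bound on $\opt{\mathcal{X},n,\days}$ from Lemma~\ref{lem:upper bound on opt} with the ``first-to-second order statistic'' inequalities of Theorem~\ref{thm: bulowklemperer} and Theorem~\ref{thm: bounds on mhr stuff}, stage by stage. Throughout, I would fix $j$ to be the unique stage whose distribution is only assumed regular, so that the remaining $\days - 1$ stages are MHR. Applied at this choice of $j$, Lemma~\ref{lem:upper bound on opt} gives
\[
\opt{\mathcal{X},n,\days} \leq \myerson{\mathbf{X}_j} + \sum_{k \neq j} \expectation{(X_k)_{1:n}},
\]
and by the observation preceding the theorem, $\rev{VCG,\mathcal{X},n+c,\days} = \sum_{k=1}^{\days} \expectation{(X_k)_{2:n+c}}$. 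It therefore suffices to lower bound each stage's contribution on the right hand side separately, by the matching term above times the desired approximation factor.

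For the upper bound of (1), I would take $c = 3n$. For each MHR stage $k \neq j$, part~(1) of Theorem~\ref{thm: bounds on mhr stuff} gives $\expectation{(X_k)_{2:4n}} \geq \expectation{(X_k)_{1:n}}$. For the regular stage $j$, Bulow--Klemperer (Theorem~\ref{thm: bulowklemperer}) gives $\expectation{(X_j)_{2:n+1}} \geq \myerson{\mathbf{X}_j}$, and since the expected second order statistic is monotone non-decreasing in the sample size, $\expectation{(X_j)_{2:4n}} \geq \myerson{\mathbf{X}_j}$ for all $n \geq 1$. Summing these $\days$ inequalities yields $\rev{VCG,\mathcal{X},n+3n,\days} \geq \opt{\mathcal{X},n,\days}$.

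For (2), take $c = 1$: part~(2) of Theorem~\ref{thm: bounds on mhr stuff} gives $\expectation{(X_k)_{2:n+1}} \geq \tfrac{1}{e}\expectation{(X_k)_{1:n}}$ for each MHR stage, while Theorem~\ref{thm: bulowklemperer} gives $\expectation{(X_j)_{2:n+1}} \geq \myerson{\mathbf{X}_j} \geq \tfrac{1}{e}\myerson{\mathbf{X}_j}$; summing yields the $\tfrac{1}{e}$ approximation. For (3), take $c = 0$ and $n \geq 2$: part~(3) of Theorem~\ref{thm: bounds on mhr stuff} bounds the MHR stages by $\tfrac{1}{3}\expectation{(X_k)_{1:n}}$, while Corollary~\ref{cor: second price is n-1/n approximation} gives $\expectation{(X_j)_{2:n}} \geq \tfrac{n-1}{n}\myerson{\mathbf{X}_j} \geq \tfrac{1}{2}\myerson{\mathbf{X}_j} \geq \tfrac{1}{3}\myerson{\mathbf{X}_j}$ whenever $n \geq 2$. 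Summing gives the $\tfrac{1}{3}$ approximation. The matching lower bound of $(e-1)n$ on the Competition Complexity in (1) is obtained via a separate ex-post IR construction with $\days-1$ MHR stages and one regular stage, deferred to Appendix~\ref{sec: lower bounds on cc}.

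There is no hidden mathematical obstacle at this stage: the heavy lifting has already been done by Lemma~\ref{lem:upper bound on opt} (the duality-based benchmark) and by Theorem~\ref{thm: bounds on mhr stuff} (the strict, non-approximate order statistic inequalities for MHR distributions). The only care needed is (i) to align $j$ in the upper bound on $\opt{}$ with the regular stage so that Bulow--Klemperer is invoked exactly where it is needed, and (ii) to verify that the weakest of the stage-wise approximation constants (the $\tfrac{n-1}{n}$ factor for the regular stage in part~(3)) still dominates $\tfrac{1}{3}$ for $n \geq 2$, which it does.
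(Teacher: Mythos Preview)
Your proposal is correct and follows exactly the approach indicated in the paper: apply Lemma~\ref{lem:upper bound on opt} with $j$ chosen to be the single regular stage, then bound each stage-wise term separately using Theorem~\ref{thm: bulowklemperer}/Corollary~\ref{cor: second price is n-1/n approximation} for the regular stage and the respective parts of Theorem~\ref{thm: bounds on mhr stuff} for the MHR stages, with the lower bound deferred to Appendix~\ref{sec: lower bounds on cc}. The paper does not spell out a separate proof beyond listing these ingredients in the theorem statement, and your stage-by-stage summation (including the monotonicity of $\expectation{X_{2:n+c}}$ in $c$ and the check that $\tfrac{n-1}{n} \geq \tfrac{1}{3}$ for $n \geq 2$) is precisely what is needed.
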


Our lower bound for the Competition Complexity in Theorem~\ref{thm: main thm almost all mhr} uses $\days-1$ MHR distributions and one regular distribution. It remains open whether the Competition Complexity is sublinear for the case of $\days$ independent and MHR stages.


\subsection{Related Work}

\paragraph{Dynamic Auctions.}

Dynamic mechanisms have been studied extensively in quite general settings; see \citet{bergemann2011dynamic} for a recent survey. Many works study problems where agents arrive and depart dynamically, e.g. \citet{parkes2004mdp, pai2008optimal,gershkov2009dynamic,gershkov2010efficient}, or problems with evolving private information, e.g. \citet{courty2000sequential, kakade2013optimal,pavan2014dynamic,cavallo2008efficiency,cavallo2012optimal,krahmer2015optimal}.

The setting studied in this paper, where ex-post individual rationality is a hard constraint, was first studied in~\citet{papadimitriou2016complexity}. The authors prove that computing the optimal deterministic dynamic mechanism is NP-hard even for two stages and a single agent, as well as very strong separations between adaptive and non-adaptive dynamic mechanisms. They also show that the optimal randomized mechanism can be computed via an LP whose size is polynomial in number of types and in the number of days. In this paper we are not interested in computation and therefore use a much simpler (and much larger) Linear Program.

\citet{ashlagi2016} provide characterizations of the optimal ex-post IR, periodic IC dynamic mechanism, with $\days$ \textit{independent} stages and $n$ buyers. They show that there exists an optimal mechanism that has stage utility equal to zero for all stages, except maybe the last. As implied by our upper bound in Lemma~\ref{lem:upper bound on opt}, in last stage the seller might have to pay the buyers\footnote{To see this most clearly, consider a single agent, two stage situation where $X_1$ and $X_2$ are such that $\expectation{X_1} > \myerson{X_1} + \expectation{X_2}$; the RHS is an upper bound to the optimal revenue by Lemma~\ref{lem:upper bound on opt} in this paper. The characterization of~\citet{ashlagi2016} says that there exists an optimal mechanism that extracts $\expectation{X_1}$ in the first stage; in the second stage the seller must pay back at least $\expectation{X_1} - \left( \myerson{X_1} + \expectation{X_2} \right)$}. Surprisingly, their mechanism can be described via updates, at every stage, to a scalar variable that guides the future allocation and payments. The authors use this characterization to give a mechanism that obtains a $\frac{1}{2}$ approximation to the optimal revenue for the single buyer problem. 


\citet{mirrokni2016dynamic} study dynamic mechanisms with an interim IR constraint. They define a class of mechanisms called \textit{bank account} mechanisms. Bank account mechanisms maintain a state variable, the balance, that is updated throughout the execution of the mechanism depending on a ``spending'' and ``depositing'' policy. The allocation and payment at each stage depend on the report and the balance.~\citet{mirrokni2016optimal} study revenue maximization for bank account mechanisms subject to an ex-post IR constraint. 
Closer to this work,~\citet{mirrokni2016oblivious} study the design of \textit{oblivious} dynamic mechanisms. An oblivious dynamic mechanism decides on the allocation and payment for stage $k$ using information only about the current and past stages, i.e. it is oblivious about the buyers' value distributions $D_{k+1},\dots,D_m$. Their mechanism \textit{ObliviousBalance} runs at each stage a combination of Myerson's optimal auction, a second price auction, and the money burning mechanism of~\citet{hartline2008optimal}. Their mechanism obtains a $\frac{1}{5}$ approximation to the optimal revenue. Here we can get better approximations with much simpler mechanisms, without assuming knowledge of distributions $X_1,\dots,X_k$ at the beginning of stage $k$, at the small cost of tail assumptions like monotone hazard rate.

\paragraph{Bulow-Klemperer Type Results.}

Prior-independent mechanisms have been developed in both single and multi-dimensional \textit{static} settings, e.g. \cite{azar2013optimal, devanur2011prior, goldner2016prior, segal2003optimal}. \citet{sivan2013vickrey} give a version of the Bulow-Klemperer theorem for non-i.i.d. irregular distributions. For the case of multi-unit bidders,~\citet{roughgarden2015robust} provide a   Bulow-Klemperer type result for multidimensional static auctions, where the benchmark is the optimum dominant strategy incentive compatible (DSIC) mechanism. 
Closer to our work,~\citet{eden2016competition} provide a Bulow-Klemperer result for multidimensional static auctions, i.e. without any loss or approximation, against the stronger benchmark of the optimum Bayesian incentive compatible and Bayesian individual rational mechanism. They introduce the term \textit{Competition Complexity}, that we also adopt here. 
Their main result is that the Competition Complexity of $n$ buyers with additive valuations over $m$ independent, regular items is at most $n + 2m - 2$ and at least $\log(m)$. Their upper bounds on the optimal static revenue is also via an extension of the duality framework of~\citet{cai2016}. More recently,~\citet{rubinstein2017} also study a relaxed notion of Competition Complexity; they show that when auctioning $m$ items separately the 99\%-Competition Complexity is $O(\log m)$, and (for regular distributions) this further goes down to constant when auctioning the items as one bundle. A closely related line of work considers mechanism design with limited information in the form of samples, e.g. \cite{dhangwatnotai2015revenue,cole2014sample,devanur2016sample,morgenstern2015pseudo}.

\paragraph{The Cai-Devanur-Weinberg Duality Framework, Extensions and Related Techniques.}

Multiple strong duality frameworks have been developed recently, e.g. \cite{daskalakis2013mechanism, daskalakis2015strong, giannakopoulos2015selling, giannakopoulos2014duality}, that can be seen as an optimal transport/bipartite
matching problem. \citet{haghpanah2015reverse} provide an alternative strong duality framework. Closer to the framework we extend,~\citet{carroll2017robustness} takes a partial Lagrangian over IC and IR constraints; the application is a screening problem. As already discussed, in this paper we present an extension of the duality framework of~\citet{cai2016} for dynamic settings. This framework was used to unify  and improve the results of several recent works on Bayesian mechanism design (e.g. \cite{hart2012approximate, li2013revenue, babaioff2014simple, yao2015n, chawla2007algorithmic, chawla2010multi,chawla2015power} ).
It was recently extended by~\citet{cai2016simple} to prove approximation results for simple mechanisms in settings with multiple subadditive bidders, and by~\citet{brustle2017approximating} for a two-sided market setting. It was also extended in a different way by~\citet{eden2016simple} for a single buyer with values that exhibit a ``limited complementarity'' property.
\section{Warm up: One Buyer, Two Independent Stages}\label{sec: single agent upper bound}

In this Section we prove the special case of Lemma~\ref{lem:upper bound on opt} for one buyer and two stages.

\begin{lemma}\label{lem:upper bound on opt one agent}
For single agent, two independent stages, ex-post IR and periodic IC dynamic mechanisms
\[ \opt{\mathcal{X},1,2} \leq \min \left\lbrace \myerson{ X_1 } + \expectation{  X_2 },  \expectation{  X_1 } + \myerson{ X_2 } \right\rbrace. \]
\end{lemma}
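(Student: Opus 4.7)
The plan is to adapt the partial-Lagrangian method of \citet{cai2016} to the two-stage setting. First I would write the seller's problem as an LP with primal variables $x_1(v_1),p_1(v_1),x_2(v_1,v_2),p_2(v_1,v_2)$, objective $\sum_{v_1}f_1(v_1)p_1(v_1)+\sum_{v_1,v_2}f_1(v_1)f_2(v_2)p_2(v_1,v_2)$, the periodic IC and ex-post IR inequalities from Section~\ref{subcsec: definitions for multi stage auctions}, and feasibility $x_k(\cdot)\in[0,1]$. Dualizing the IC and IR inequalities with nonnegative multipliers $\lambda_1,\lambda_2,\mu_1,\mu_2$ and retaining feasibility as a hard constraint, weak duality gives $\opt{\mathcal{X},1,2}\le \max_{x,p}L(x,p;\lambda,\mu)$ for every feasible dual choice, so the proof reduces to exhibiting a clever dual for each of the two terms in the min. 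By symmetry I will only spell out the bound $\myerson{X_1}+\E[X_2]$.

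Next I would make the dual choice: $\lambda_2,\mu_1,\mu_2$ identically zero, and $\lambda_1$ the canonical Myerson ``downward flow'' on the stage-$1$ type ladder, i.e.\ only adjacent-downward edges carry weight, with $\lambda_1(v_1^{(i)},v_1^{(i-1)})=1-F_1(v_1^{(i-1)})$ (ironed for irregular $X_1$). The new phenomenon compared to the static case is that the stage-$1$ IC constraint indexed by $(v_1,\hat v_1)$ carries continuation-utility terms $U_2(v_1)-U_2(\hat v_1)$, where $U_2(\hat v_1)=\sum_{v_2}f_2(v_2)[v_2 x_2(\hat v_1,v_2)-p_2(\hat v_1,v_2)]$, so dualizing stage-$1$ IC leaves a residue on the stage-$2$ primal variables. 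This cross-stage residue is the main obstacle I anticipate.

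Then I would collect coefficients in the Lagrangian. A standard Myerson telescoping in $\lambda_1$ kills the $p_1(v_1)$ coefficient and turns the $x_1(v_1)$ coefficient into $f_1(v_1)\phi_1(v_1)$, with $\phi_1$ the (ironed) Myerson virtual value. The crucial identity is flow conservation: the net coefficient of $U_2(v_1)$ equals $\lambda_1(v_1^{(i)},v_1^{(i-1)})-\lambda_1(v_1^{(i+1)},v_1^{(i)})=f_1(v_1^{(i)})$, exactly the prior weight. This is what dissolves the coupling: the $p_2(v_1,v_2)$ coefficient becomes $f_1(v_1)f_2(v_2)-f_1(v_1)f_2(v_2)=0$, and $x_2(v_1,v_2)$ is left with coefficient $f_1(v_1)f_2(v_2)v_2$. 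Maximizing the resulting linear form over $x_1,x_2\in[0,1]$ then yields $\sum_{v_1}f_1(v_1)\max\{\phi_1(v_1),0\}+\E[X_2]=\myerson{X_1}+\E[X_2]$.

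Finally the second inequality $\opt{\mathcal{X},1,2}\le \E[X_1]+\myerson{X_2}$ would follow by the mirror choice of dual: place the Myerson flow on stage $2$ (independence of $X_2$ from $X_1$ lets the same flow be used for every $v_1$), and absorb $p_1$ with $\mu_1\equiv f_1$, which substitutes $p_1(v_1)\le v_1 x_1(v_1)\le v_1$. The hard part lies entirely in the step-two bookkeeping: verifying the ``flow-conservation equals prior'' identity that makes the stage-$2$ payment coefficients vanish. This identity is the dynamic-setting analogue of the one powering the static bound of \citet{cai2016}, and independence of $X_1$ and $X_2$ is exactly what collapses it from a matrix relation into the single scalar equality used above.
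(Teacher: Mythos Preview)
Your approach is the same as the paper's (partial Lagrangian, two canonical flows), but there is a boundary gap in the dual you propose for the bound $\myerson{X_1}+\expectation{X_2}$. You set $\mu_1,\mu_2\equiv 0$ and claim the ``flow-conservation equals prior'' identity gives the net coefficient of $U_2(v_1)$ as $f_1(v_1)$, hence the $p_2$ coefficient vanishes. This identity holds only at interior and top nodes. At the lowest type $\underline v_1$ there is no outgoing edge, so the net is $0-(1-F_1(\underline v_1))=f_1(\underline v_1)-1$, not $f_1(\underline v_1)$. Consequently the coefficient of $p_1(\underline v_1)$ equals $f_1(\underline v_1)-0+ (1-F_1(\underline v_1))=1$, and the coefficient of $p_2(\underline v_1,v_2)$ equals $f_2(v_2)\bigl(f_1(\underline v_1)-(f_1(\underline v_1)-1)\bigr)=f_2(v_2)$; both are strictly positive, so $\max_{x,p} L=+\infty$ and weak duality gives nothing.

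The fix, which is exactly what the paper does, is to put nonzero IR duals at the boundary: $\mu_1(\underline v_1)=1$ to absorb the stage-$1$ payment residue, and $\mu_2(\underline v_1,v_2)=f_2(v_2)$ to absorb the stage-$2$ payment residue under $\underline v_1$. A point worth noting is that you cannot sidestep this by folding IR into $\lambda_1$ as ``IC against a null type'' the way the static framework of \citet{cai2016} does: the stage-$1$ ex-post IR constraint $v_1x_1(v_1)-p_1(v_1)\ge 0$ does \emph{not} carry the continuation term $U_2(v_1)$, so dualizing it contributes nothing to the $p_2,x_2$ coefficients, and a separate $\mu_2$ under $\underline v_1$ is genuinely required. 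Once these two boundary duals are in place, your computation for interior types goes through unchanged and the $x_2$ coefficient under $\underline v_1$ also becomes $f_1(\underline v_1)f_2(v_2)v_2$, yielding the claimed $\myerson{X_1}+\expectation{X_2}$. The mirror argument for $\expectation{X_1}+\myerson{X_2}$ is as you describe, with $\mu_1\equiv f_1$ and the Myerson flow (including its own lowest-type sink) on stage $2$.
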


\subsection{The Partial Lagrangian.}

The optimal dynamic auction needs to satisfy the following two types of constraints:

\begin{itemize}
\item Periodic incentive compatibility (PIC). At any stage $k$, assuming truthfulness in the future stages, truthfully revealing $v_k$ maximizes the buyer's expected utility, among all possible reports $\hat{v}_k$. 
For the first stage, this constraint can be expressed as: for all $v_1, \hat{v}_1$ in $V_1$
\begin{multline*}
v_1 x_1( v_1 ) - p_1(v_1) + \mathbb{E}_{v_2 \in V_2} \left[ v_2 x_2\left( v_1, v_2 \right) - p_2\left( v_1, v_2 \right)\right] \geq  \\
v_1 x_1( \hat{v}_1 ) - p_1(\hat{v}_1) + \mathbb{E}_{v_2 \in V_2} \left[ v_2 x_2\left( \hat{v}_1, v_2 \right) - p_2\left( \hat{v}_1, v_2 \right)\right].
\end{multline*}

For the second stage: for all $v_1$ in $V_1$, and all $v_2, \hat{v}_2$ in $V_2$
\begin{equation*}
v_2 x_2( v_1, v_2 ) - p_2( v_1, v_2 ) \geq v_2 x_2( v_1, \hat{v}_2 ) - p_2( v_1, \hat{v}_2 ).
\end{equation*}

\item Ex-post individual rationality. The buyer's stage utility is non-negative at every stage $k$, no matter what the reports were in the previous stage (in the case of stage $2$).
\begin{equation*}
v_k x_k( v_{\leq k} ) - p_k(v_{\leq k} ) \geq 0
\end{equation*}

\end{itemize}

The revenue objective can be written as:

\[\mathbb{E}_{v_1,v_2} \left[ p_1(v_1) + p_2(v_1,v_2) \right]\]

Thus, we have the following primal program\footnote{The support is discrete for simplicity of presentation.}:

  \begin{align*}
  \textrm{max }  &   \sum_{v_1 \in V_1}f(v_1)p_1(v_1) + \sum_{v_1 \in V_1} f(v_1)\sum_{v_2 \in V_2}f(v_2)p_2(v_1, v_2)  \notag\\
  \textrm{subject to: } \notag \\
   \forall v_1,\hat{v}_1 \in V_1: & \qquad v_1x_1(v_1) - p_1(v_1) + \sum_{v_2 \in V_2}f(v_2) \left( v_2x_2(v_1, v_2) - p_2(v_1,v_2) \right) \geq \\
  & \qquad \qquad \qquad v_1x_1(\hat{v}_1) - p_1(\hat{v}_1) + \sum_{v_2 \in V_2}f(v_2) \left( v_2x_2(\hat{v}_1, v_2) - p_2(\hat{v}_1,v_2) \right)  \\ 
   \forall v_1\in V_1,\forall v_2,\hat{v}_2\in V_2: & \qquad v_2 x_2( v_1, v_2 ) - p_2( v_1, v_2 ) \geq v_2x_2(v_1, \hat{v}_2) - p_2(v_1, \hat{v}_2)  \\
   \forall v_1\in V_1: & \qquad v_1 x_1( v_1 ) - p_1(v_1 ) \geq 0 \\
   \forall v_1\in V_1,v_2\in V_2: & \qquad v_2 x_2( v_1,v_2 ) - p_2(v_1,v_2 ) \geq 0 \\
   \forall v_1\in V_1: & \qquad x_1( v_1 ) \in [0,1] \\
   \forall v_1\in V_1, \forall v_2 \in V_2: & \qquad x_2( v_1, v_2 ) \in [0,1]
  \end{align*}

We introduce a dual variable $\lambda_k(v_{\leq k},\hat{v}_k)$ for the  periodic IC constraints for stage $k$ and a dual variable $\kappa_k(v_{\leq k})$ for the ex-post IR constraints for stage $k$. In other words, the dual variables are $\lambda_1(v_1,\hat{v}_1)$, $\lambda_2(v_1,v_2,\hat{v}_2)$, $\kappa_1(v_1)$ and $\kappa_2(v_1,v_2)$. \citet{cai2016} include the IR constraints with the IC constraints, by introducing a null type $\perp$, with zero allocation and zero payment; in our case, this is possible only for the ex-post IR constraint in the last stage. Similarly, we do not take Lagrangian multipliers for the feasibility constraints.
The partial Lagrangian $\mathcal{L}(\lambda, \kappa, x, p)$  of the primal program is as follows:

\begin{align*}
\mathcal{L}(\lambda, \kappa, x, p) =& \sum_{v_1 \in V_1} f(v_1) p_1(v_1) + \sum_{v_1 \in V_1}\sum_{v_2 \in V_2} f(v_1) f(v_2) p_2(v_1,v_2) \\
& + \sum_{v_1 \in V_1}\sum_{\hat{v}_1 \in T_1} \lambda_1(v_1,\hat{v}_1) \left( v_1 x_1( v_1 ) - p_1(v_1) - v_1 x_1( \hat{v}_1 ) + p_1(\hat{v}_1) \right) \\
& + \sum_{v_1 \in V_1}\sum_{\hat{v}_1 \in V_1} \lambda_1(v_1,\hat{v}_1) \sum_{v_2 \in V_2} f_2(v_2) \left( v_2 x_2\left( v_1, v_2 \right) - p_2\left( v_1, v_2 \right) - v_2 x_2\left( \hat{v}_1, v_2 \right) + p_2\left( \hat{v}_1, v_2 \right) \right) \\
& + \sum_{v_1\in V_1} \sum_{v_2\in V_2} \sum_{\hat{v}_2 \in V_2} \lambda_2(v_1,v_2,\hat{v}_2) \left( v_2 x_2( v_1, v_2 ) - p_2( v_1, v_2 ) -v_2 x_2( v_1, \hat{v}_2 ) + p_2( v_1, \hat{v}_2 ) \right) \\
& + \sum_{v_1\in V_1} \kappa_1(v_1) \left( v_1 x_1( v_1 ) - p_1(v_1 ) \right) + \sum_{v_1\in V_1} \sum_{v_2 \in V_2} \kappa_2(v_1,v_2) \left( v_2 x_2( v_1 , v_2 ) - p_2(v_1,v_2 ) \right)
\end{align*}

Re-grouping gives the following form:

\begin{align*}
\mathcal{L}(\lambda, \kappa, x, p)
=& \sum_{v_1\in V_1}p_1(v_1) \left( f(v_1) - \kappa_1(v_1) - \sum_{\hat{v}_1\in V_1}\lambda_1(v_1,\hat{v}_1) + \sum_{\hat{v}_1\in V_1}\lambda_1(\hat{v}_1,v_1) \right) \\
& + \sum_{v_1\in V_1}x_1(v_1) \left( v_1\kappa_1(v_1) + \sum_{\hat{v}_1\in V_1}v_1\lambda_1(v_1,\hat{v}_1) - \sum_{\hat{v}_1\in V_1}\hat{v}_1\lambda_1(\hat{v}_1,v_1) \right) \\
& + \sum_{v_1\in V_1,v_2\in V_2}p_2(v_1,v_2) \left( f(v_1)f(v_2) - \kappa_2(v_1,v_2) - \sum_{\hat{v}_2\in V_2}\lambda_2(v_1, v_2, \hat{v}_2) + \right. \\
&\qquad\qquad \left. \sum_{\hat{v}_2\in V_2}\lambda_2(v_1, \hat{v}_2, v_2) + f(v_2) \left( \sum_{\hat{v}_1\in V_1} \lambda_1(\hat{v}_1,v_1) - \sum_{\hat{v}_1\in V_1}\lambda_1(v_1,\hat{v}_1)\right)     \right)\\
&+\sum_{v_1\in V_1,v_2\in V_2}x_2(v_1,v_2) \left( v_2\kappa_2(v_1,v_2) + \sum_{\hat{v}_2\in V_2}v_2\lambda_2(v_1, v_2, \hat{v}_2) - \sum_{\hat{v}_2\in V_2}\hat{v}_2\lambda_2(v_1, \hat{v}_2, v_2) \right.\\
&\qquad\qquad \left. + v_2 f(v_2) \left( \sum_{\hat{v}_1\in V_1} \lambda_1(v_1,\hat{v}_1)- \sum_{\hat{v}_1\in V_1} \lambda_1(\hat{v}_1,v_1) \right) \right).
\end{align*}

Duality theory tells us that for any choice of $\lambda, \kappa \geq 0$, the primal objective $\opt{\mathcal{X},1,2}$ is upper bounded by $\max_{x \in \mathcal{F},p}\mathcal{L}(\lambda, \kappa, x, p)$, where $\mathcal{F}$ is the set of feasible allocations:
\begin{equation}\label{eq: opt dual upper bound}
\opt{\mathcal{X},1,2} \leq \max_{x \in \mathcal{F},p}\mathcal{L}(\lambda, \kappa, x, p)
\end{equation}

If we can find $\lambda, \kappa \geq 0$ such that $\max_{x \in \mathcal{F},p}\mathcal{L}(\lambda, \kappa, x, p)$ is bounded, we will get non-trivial upper bounds for $\opt{\mathcal{X},1,2}$. So next, we give constraints on $\lambda$ and $\kappa$ for this to be true.
Since $p_1(v_1)$ is an unconstrained variable, if its multiplier is non-zero, setting $p(v_1)$ to $\infty$ or $-\infty$ will make $\mathcal{L}(\lambda, \kappa, x, p)$ unbounded. Therefore the multipliers need to satisfy:

\begin{equation}\label{eq:stage_1_conservation}
f(v_1) - \kappa_1(v_1) - \sum_{\hat{v}_1\in V_1}\lambda_1(v_1,\hat{v}_1) + \sum_{\hat{v}_1\in V_1}\lambda_1(\hat{v}_1,v_1)  = 0 
\end{equation}

Similarly for the multiplier of $p_2(v_1,v_2)$ must be equal to $0$ for $\max_{x \in \mathcal{F},p}\mathcal{L}(\lambda, \kappa, x, p)$ to be bounded:

\begin{align}
&f(v_1)f(v_2) - \kappa_2(v_1,v_2) - \sum_{\hat{v}_2\in V_2} \lambda_2(v_1, v_2, \hat{v}_2) + \sum_{\hat{v}_2\in V_2}\lambda_2(v_1, \hat{v}_2, v_2) + f(v_2) \left( \sum_{\hat{v}_1\in V_1} \lambda_1(\hat{v}_1,v_1) - \sum_{\hat{v}_1\in V_1}\lambda_1(v_1,\hat{v}_1)\right) \notag \\
&=^{\text{Eq.~}\ref{eq:stage_1_conservation}} f(v_1)f(v_2) - \kappa_2(v_1,v_2) - \sum_{\hat{v}_2\in V_2} \lambda_2(v_1, v_2, \hat{v}_2) + \sum_{\hat{v}_2\in V_2}\lambda_2(v_1, \hat{v}_2, v_2) + f(v_2) \left( \kappa_1(v_1) - f(v_1)\right)  \notag \\
&= f(v_2)\kappa_1(v_1) - \kappa_2(v_1, v_2) - \sum_{\hat{v}_2\in V_2}\lambda_2(v_1, v_2, \hat{v}_2) + \sum_{\hat{v}_2 \in V_2}\lambda_2(v_1, \hat{v}_2, v_2) = 0 \label{eq:stage_2_conservation}
\end{align}

Similar to~\citet{cai2016} we call dual solutions that satisfy Constraints~\ref{eq:stage_1_conservation} and~\ref{eq:stage_2_conservation} \textbf{useful}. Useful dual solutions can be seen as flows in a certain tree. At the top of a tree we have a source. The nodes in the first level correspond to values in the support of the first stage; a node $v_1$ receives flow $f(v_1)$ from the source. $v_1$ can push flow $\lambda_1(v_1,v'_1)$ to some other node $v'_1$ on the same level, or push flow $\kappa_1(v_1)$ to its children. A child-node $(v_1,v_2)$, or simply $v_2$ (we explicitly say the parent when necessary), receives incoming flow $\kappa_1(v_1) \cdot f(v_2)$ from its parent. See Figure~\ref{fig:useful flow}. A similar structure is satisfied for more stages and multiple agents. We note that for correlated stages this structure fails; the incoming flow of a child-node $v_2$ depends on the flow pushed to and from its parent $v_1$.

\begin{figure}[htbp]
\centering
\begin{tikzpicture}[sibling distance=60mm, sloped,scale=0.6, every node/.style={scale=0.6}]
\node[draw,circle] (root){Source}
  child {[sibling distance=10mm] node[draw,minimum size=2em,circle,anchor=north east] (v1l) {$\downnode{v}{1}$} 
  		child{node[draw,circle,minimum size=2em,anchor=north] (a) {} edge from parent[thick,-stealth]}
  			edge from parent[thick,-stealth] 
            node[above] {$f(\downnode{v}{1})$}
  		}
  child {[sibling distance=45mm] node[draw,minimum size=2em,circle,anchor=north] (v1) {$v_1$} 
  		child{[level distance = 30mm] node[draw,circle,minimum size=2em,anchor=north] (vm) {} 
  			child {node[draw,circle,minimum size=2em,anchor=north] (v2l) {$\downnode{v}{2}$} 
  				edge from parent[thick,-stealth]
  				node[above] {$\kappa_1(v_1) \cdot f(\downnode{v}{2})$}
         		} 
  			child {node[draw,circle,minimum size=2em,anchor=north] (v2) {$v_2$}
  			    child{node[draw,circle,minimum size=2em,anchor=north] (sink) {Sink}
  			    		edge from parent[thick,-stealth]
  		 			node[above] {$\kappa_2(v_1,v_2)$}
  			    		} 
  		 		edge from parent[thick,-stealth]
  		 		node[above] {$\kappa_1(v_1) \cdot f(v_2)$}
         			}
  			child {node[draw,circle,minimum size=2em,anchor=north] (v2r) {$\upnode{v}{2}$} 
  				edge	 from parent[thick,-stealth]
  				node[above] {$\kappa_1(v_1) \cdot f(\upnode{v}{2})$}
    				}
    			edge from parent[thick,-stealth] 
            node[above] {$\kappa_1(v_1)$}
    		 }
    		edge from parent[thick,-stealth]
        node[above] {$f(v_1)$}
  }
  child {[sibling distance=20mm] node[draw,circle,anchor=north west] (v1r) {$\upnode{v}{1}$}
		  child{node[draw,circle,minimum size=2em,anchor=north] (b) {} edge from parent[thick,-stealth]} 
  			edge from parent[thick,-stealth]
  			node[above] {$f(\upnode{v}{1})$}
  };
  
\draw[thick,->] (v1) to [bend right = 10] node[draw=none,midway, below] {$\lambda_1(v_1,\upnode{v}{1})$} (v1r) ; 
\draw[thick,->] (v1r) to [bend right = 10] node[draw=none,midway, above] {$\lambda_1(\upnode{v}{1},v_1)$} (v1) ; 
\draw[thick,->] (v1l) to [bend right = 10] node[draw=none,midway, below] {$\lambda_1(\downnode{v}{1},v_1)$} (v1) ; 
\draw[thick,->] (v1) to [bend right = 10] node[draw=none,midway, above] {$\lambda_1(v_1,\downnode{v}{1})$} (v1l) ; 
\draw[thick,->] (v2) to [bend right = 5] node[draw=none,midway, below] {$\lambda_2(v_1,v_2,\upnode{v}{2})$} (v2r) ; 
\draw[thick,->] (v2r) to [bend right = 5] node[draw=none,midway, above] {$\lambda_2(v_1,\upnode{v}{2},v_2)$} (v2) ; 
\draw[thick,->] (v2l) to [bend right = 5] node[draw=none,midway, below] {$\lambda_2(v_1,\downnode{v}{2},v_2)$} (v2) ; 
\draw[thick,->] (v2) to [bend right = 5] node[draw=none,midway, above] {$\lambda_2(v_1,v_2,\downnode{v}{2})$} (v2l) ; 
\draw[thick,->] (v2l) to [bend right = 5] node[draw=none,midway, below] {$\kappa_2(v_1,\downnode{v}{2})$} (sink) ;
\draw[thick,->] (v2r) to [bend left = 5] node[draw=none,midway, below] {$\kappa_2(v_1,\upnode{v}{2})$} (sink) ;
\draw[thick,->] (a) to node[draw=none] {} (-5.85,-5) ;
\draw[thick,->] (a) to node[draw=none] {} (-6.35,-5) ;
\draw[thick,->] (a) to node[draw=none] {} (-6.85,-5) ;
\draw[thick,->] (b) to node[draw=none] {} (5.85,-5) ;
\draw[thick,->] (b) to node[draw=none] {} (6.35,-5) ;
\draw[thick,->] (b) to node[draw=none] {} (6.85,-5) ;
\end{tikzpicture}
\caption{Constraints of useful dual solutions form a flow.}\label{fig:useful flow}
\end{figure}

It is possible to derive familiar expressions for the multipliers of $x_1$ and $x_2$. Gathering all the terms that $x_1$ appears in $\mathcal{L}(\lambda, \kappa, x, p)$, and plugging in a useful dual solution we have:

\begin{align*}
& \sum_{v_1\in V_1} x_1(v_1)\left( v_1\kappa_1(v_1) + v_1 \sum_{\hat{v}_1\in V_1}\lambda_1(v_1,\hat{v}_1) - \sum_{\hat{v}_1\in V_1}\hat{v}_1\lambda_1(\hat{v}_1,v_1) \right) \\
&=^{\text{Eq.~}\ref{eq:stage_1_conservation}} \sum_{v_1\in V_1}x_1(v_1) \left( v_1\kappa_1(v_1) + v_1 \left( f(v_1) - \kappa_1(v_1) + \sum_{\hat{v}_1\in V_1}\lambda_1(\hat{v}_1,v_1) \right) - \sum_{\hat{v}_1\in V_1}\hat{v}_1\lambda_1(\hat{v}_1,v_1) \right) \\
&= \sum_{v_1\in V_1}x_1(v_1) f(v_1) \left( v_1 - \frac{1}{f(v_1)}\sum_{\hat{v}_1\in V_1}(\hat{v}_1-v_1) \lambda_1(\hat{v}_1,v_1) \right) \\
&= \sum_{v_1\in V_1}x_1(v_1)f(v_1)\Phi_1(v_1),
\end{align*}
where $\Phi_1(v_1) = v_1 - \frac{1}{f(v_1)}\sum\limits_{\hat{v}_1\in V_1}(\hat{v}_1-v_1) \lambda_1(\hat{v}_1,v_1)$. Therefore, every useful dual solution induces a ``virtual value'' function $\Phi_1(.)$, such that the contribution of the first stage to the $\mathcal{L}(\lambda, \kappa, x, p)$ is the expected virtual value. A similar structure is derived for the terms involving of $x_2$:

\begin{align*}
&\sum_{v_1\in V_1} \sum_{v_2\in V_2} x_2(v_1,v_2) \left( v_2 \kappa_2(v_1,v_2) + \sum_{\hat{v}_2\in V_2}v_2\lambda_2(v_1, v_2, \hat{v}_2) - \sum_{\hat{v}_2\in T_2}\hat{v}_2\lambda_2(v_1, \hat{v}_2, v_2) \right.\\
&\qquad\qquad \left. + v_2 f(v_2) \left( \sum_{\hat{v}_1\in V_1}\lambda_1(v_1,\hat{v}_1)- \sum_{\hat{v}_1\in V_1}\lambda_1(\hat{v}_1,v_1) \right) \right) \\
&=^{\text{Eq.~}\ref{eq:stage_2_conservation}} \sum_{v_1\in V_1} \sum_{v_2\in V_2}  x_2(v_1,v_2) \left( v_2 f(v_2) \kappa_1(v_1)  + \sum_{\hat{v}_2 \in V_2}\left( v_2 - \hat{v}_2 \right) \lambda_2(v_1, \hat{v}_2, v_2)  \right.\\
&\qquad\qquad \left. + v_2 f(v_2) \left( \sum_{\hat{v}_1\in V_1}\lambda_1(v_1,\hat{v}_1)- \sum_{\hat{v}_1\in V_1}\lambda_1(\hat{v}_1,v_1) \right) \right) \\
&=^{\text{Eq.~}\ref{eq:stage_1_conservation}} \sum_{v_1\in V_1} \sum_{v_2\in V_2} x_2(v_1,v_2) \left( v_2 f(v_2) \kappa_1(v_1)  + \sum_{\hat{v}_2 \in V_2}\left( v_2 - \hat{v}_2 \right) \lambda_2(v_1, \hat{v}_2, v_2) + v_2 f(v_2) \left( f(v_1) - \kappa_1(v_1) \right) \right) \\
&= \sum_{v_1\in V_1} \sum_{v_2\in V_2} x_2(v_1,v_2)\left( v_2f(v_2)f(v_1) + \sum_{\hat{v}_2\in V_2} \left( v_2 - \hat{v}_2 \right) \lambda_2(v_1, \hat{v}_2, v_2) \right) \\
&= \sum_{v_1\in V_1} \sum_{v_2\in V_2} x_2(v_1,v_2) f(v_1) f(v_2) \left( v_2 - \frac{1}{f(v_1)f(v_2)}\sum_{\hat{v}_2\in V_2}\left( \hat{v}_2 - v_2 \right) \lambda_2(v_1, \hat{v}_2, v_2) \right) \\
&= \sum_{v_1\in V_1} \sum_{v_2\in V_2} x_2(v_1,v_2)f(v_1)f(v_2)\Phi_2(v_1,v_2),
\end{align*}

where $\Phi_2(v_1,v_2) = v_2 - \frac{1}{f(v_1)f(v_2)}\sum_{\hat{v}_2\in T_2}(\hat{v}_2 - v_2)\lambda_2(v_1, \hat{v}_2, v_2)$.
Combining all the observations so far, we have that given a \textbf{useful} dual solution $\lambda, \kappa$:
\[ \mathcal{L}(\lambda, \kappa, x, p) = \sum_{v_1\in V_1}x_1(v_1)f(v_1)\Phi_1(v_1) + \sum_{v_1\in V_1} \sum_{v_2\in V_2} x_2(v_1,v_2)f(v_1)f(v_2)\Phi_2(v_1,v_2) \]

Therefore, given a useful dual solution $\lambda, \kappa$, the revenue of any dynamic mechanism $M = (x,p)$ that is ex-post IR and periodic IC, is at most the virtual welfare of $x$ with respect to the virtual value functions $\Phi_1$ and $\Phi_2$ corresponding to $\lambda$ and $\kappa$. In other words,
\[\sum_{v_1 \in V_1} f(v_1) \left( p_1(v_1) + \sum_{v_2 \in V_2} f(v_2)p_2(v_1,v_2) \right) \leq  \sum_{v_1\in V_1}f(v_1) \left( x_1(v_1)\Phi_1(v_1) + \sum_{v_2\in V_2} f(v_2) x_2(v_1,v_2) \Phi_2(v_1,v_2) \right). \]

\subsection{Canonical Flows.}

Lemma~\ref{lem:upper bound on opt one agent} is proved in two steps. First, we prove that $\opt{\mathcal{X},1,2} \leq  \expectation{  X_1 } + \myerson{ X_2 }$ in Claim~\ref{clm:e_1 plus m_2}. Then, we prove that $\opt{\mathcal{X},1,2} \leq  \expectation{  X_1 } + \myerson{ X_2 }$ in Claim~\ref{clm:m_1 plus e_2}. To prove the claims, we need the following definitions.

\begin{definition}\label{dfn: Successor and Predecessor}
For all $v^i_k \in V^i_k$ define $\upnode{v^i}{k}$ and $\downnode{v^i}{k}$ to be the values in $V^i_k$ immediately larger and immediately smaller than $v^i_k$ (respectively) :
\begin{align*}
\upnode{v^i}{k} = \inf\limits_{\hat{v}^i_k \in V^i_k: v^i_k < \hat{v}^i_k} \hat{v}^i_k && \downnode{v^i}{k} = \sup\limits_{\hat{v}^i_k \in V^i_k: v^i_k > \hat{v}^i_k} \hat{v}^i_k.
\end{align*}
\end{definition}

\begin{definition}\label{dfn: Myerson's virtual value}
Myerson's virtual value for distribution $X_k$ is \[\phi(v_k) = v_k - \frac{\left( \upnode{v}{k} - v_k \right) \cdot \Pr_{v \sim X_k} \left[ v > v_k \right] }{ f(v_k) } = v_k - \frac{\left( \upnode{v}{k} - v_k \right) \cdot \left( 1 - F(v_k) \right) }{ f(v_k) }.\]
\end{definition}

\begin{claim}\label{clm:e_1 plus m_2} 
For a single agent, two independent stages, ex-post IR and  PIC dynamic mechanisms:
\[ \opt{\mathcal{X},1,2} \leq  \expectation{  X_1 } + \myerson{ X_2 }.\]
\end{claim}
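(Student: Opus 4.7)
The plan is to exhibit a useful dual solution $(\lambda,\kappa)$ whose induced virtual value functions are $\Phi_1(v_1)=v_1$ and $\Phi_2(v_1,v_2)=\phi(v_2)$ (Myerson's virtual value for $X_2$). Since the Lagrangian under a useful dual equals $\sum_{v_1} f(v_1)x_1(v_1)\Phi_1(v_1) + \sum_{v_1,v_2} f(v_1)f(v_2)x_2(v_1,v_2)\Phi_2(v_1,v_2)$, these induced virtual values immediately give the desired $\expectation{X_1} + \myerson{X_2}$ bound.

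For stage 1, I would take the trivial flow: set $\lambda_1\equiv 0$. Then the stage-1 conservation constraint (Eq.~\ref{eq:stage_1_conservation}) forces $\kappa_1(v_1)=f(v_1)$, and plugging into the formula for $\Phi_1$ yields $\Phi_1(v_1)=v_1$ as desired. For stage 2, I would send all flow ``one step downward'' in the value ordering: set $\lambda_2(v_1,\upnode{v}{2},v_2)=f(v_1)\cdot(1-F(v_2))$ and $\lambda_2(v_1,\hat{v}_2,v_2)=0$ for all other $\hat{v}_2$. A direct substitution into $\Phi_2(v_1,v_2)=v_2-\tfrac{1}{f(v_1)f(v_2)}\sum_{\hat{v}_2}(\hat{v}_2-v_2)\lambda_2(v_1,\hat{v}_2,v_2)$ gives exactly $\phi(v_2)$ as in Definition~\ref{dfn: Myerson's virtual value}. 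I would then check that the stage-2 conservation constraint (Eq.~\ref{eq:stage_2_conservation}), after inflow minus outflow telescopes using $F(v_2)-F(\downnode{v}{2})=f(v_2)$, forces $\kappa_2(v_1,v_2)=0$. All $\lambda$'s and $\kappa$'s are non-negative, so $(\lambda,\kappa)$ is a valid useful dual.

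With this dual in hand, Eq.~\ref{eq: opt dual upper bound} gives $\opt{\mathcal{X},1,2}\le\max_{x\in\mathcal{F}}\bigl\{\sum_{v_1}f(v_1)x_1(v_1)\,v_1+\sum_{v_1,v_2}f(v_1)f(v_2)x_2(v_1,v_2)\,\phi(v_2)\bigr\}$. Using $x_1(v_1)\le 1$ and $v_1\ge 0$ bounds the first term by $\expectation{X_1}$. For the second, for each fixed $v_1$ the inner expectation is the virtual welfare of the feasible stage-2 allocation $x_2(v_1,\cdot)$; by Myerson's theorem this is at most $\myerson{X_2}$ (for regular $X_2$ one pointwise maximizes $\phi(v_2)^+$; for irregular $X_2$ one either invokes the stage-2 monotonicity implied by periodic IC, which ensures virtual welfare equals revenue, or one replaces the above construction of $\lambda_2$ by its ironed analogue). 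Averaging over $v_1$ with weights $f(v_1)$ summing to 1 preserves the bound, yielding $\expectation{X_1}+\myerson{X_2}$.

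The main obstacle is verifying the algebra of the flow construction: checking that the chosen $\lambda_2$, together with $\kappa_1(v_1)=f(v_1)$, collapses Eq.~\ref{eq:stage_2_conservation} to $\kappa_2\equiv 0$ (so non-negativity holds), and that the resulting $\Phi_2$ is exactly Myerson's virtual value rather than some shifted or doubly-discounted version. The construction for Claim~\ref{clm:m_1 plus e_2} is symmetric: take $\lambda_2\equiv 0$ (forcing $\Phi_2=v_2$) and push one-step-downward flow at stage 1 (giving $\Phi_1=\phi$), and the argument repeats.
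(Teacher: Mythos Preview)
Your proposal is correct and follows essentially the same approach as the paper: the dual solution $\lambda_1\equiv 0$, $\kappa_1(v_1)=f(v_1)$, and $\lambda_2(v_1,v_2,\hat{v}_2)=f(v_1)(1-F(\hat{v}_2))$ when $\hat{v}_2=\downnode{v}{2}$ is exactly the paper's choice, and the induced virtual values $\Phi_1(v_1)=v_1$, $\Phi_2(v_1,v_2)=\phi(v_2)$ lead to the bound just as you describe.

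One small boundary correction: the stage-2 conservation constraint does \emph{not} force $\kappa_2\equiv 0$. At the lowest support point $v_2=\underline{v}_2$ there is no outgoing $\lambda_2$, so all incoming flow must exit through $\kappa_2$, giving $\kappa_2(v_1,\underline{v}_2)=f(v_1)$ (and $\kappa_2=0$ elsewhere). This is what the paper writes. Fortunately this does not affect anything downstream: $\kappa_2\ge 0$ still holds, and $\kappa_2$ does not enter the formula for $\Phi_2$. Your handling of the irregular case is also fine; the paper simply appeals to ironing the flow \`a la \cite{cai2016}, while your alternative of exploiting monotonicity of $x_2(v_1,\cdot)$ (which follows from periodic IC) together with Myerson's revenue-equals-virtual-welfare identity is a valid substitute.
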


\begin{figure}[htbp]
\begin{minipage}[b][6cm][s]{.42\textwidth}
\centering
\begin{tikzpicture}[level distance = 20mm, sibling distance=60mm, sloped,scale=0.5, every node/.style={scale=0.5, solid}]
\node[draw,circle] (root){Source}
  child {[sibling distance=10mm] node[draw,minimum size=2em,circle,anchor=north east] (v1l) {$\downnode{v}{1}$} 
  		child{node[draw,circle,minimum size=2em,anchor=north] (a) {} edge from parent[thick,-stealth] node[above] {$f(\downnode{v}{1})$}}
  			edge from parent[thick,-stealth] 
            node[above] {$f(\downnode{v}{1})$}
  		}
  child {[sibling distance=45mm] node[draw,minimum size=2em,circle,anchor=north] (v1) {$v_1$} 
  		child{[level distance = 30mm] node[draw,circle,minimum size=2em,anchor=north] (vm) {} 
  			child {node[draw,circle,minimum size=2em,anchor=north] (v2l) {$\downnode{v}{2}$} 
  				edge from parent[thick,-stealth]
  				node[above] {$f(v_1) \cdot f(\downnode{v}{2})$}
         		} 
  			child {node[draw,circle,minimum size=2em,anchor=north] (v2) {$v_2$}
  			    child{node[draw,circle,minimum size=2em,anchor=north] (sink) {Sink} edge from parent[draw = none]}
  		 		edge from parent[thick,-stealth]
  		 		node[above] {$f(v_1) \cdot f(v_2)$}
         			}
  			child {node[draw,circle,minimum size=2em,anchor=north] (v2r) {$\upnode{v}{2}$} 
  				edge	 from parent[thick,-stealth]
  				node[above] {$f(v_1) \cdot f(\upnode{v}{2})$}
    				}
    			edge from parent[thick,-stealth] 
            node[above] {$f(v_1)$}
    		 }
    		edge from parent[thick,-stealth]
        node[above] {$f(v_1)$}
  }
  child {[sibling distance=20mm] node[draw,circle,anchor=north west] (v1r) {$\upnode{v}{1}$}
		  child{node[draw,circle,minimum size=2em,anchor=north] (b) {} edge from parent[thick,-stealth] node[above] {$f(\upnode{v}{1})$}} 
  			edge from parent[thick,-stealth]
  			node[above] {$f(\upnode{v}{1})$}
  };
\draw[thick,->] (v2r) to [bend right = 5] node[draw=none,midway, above] {$f(v_1) \cdot f(\upnode{v}{2})$} (v2) ; 
\draw[thick,->] (v2) to [bend right = 5] node[draw=none,midway, above] {$f(v_1) \cdot f(\upnode{v}{2})$} (v2l) ;
\draw[thick,->] (v2) to [bend right = 5] node[draw=none,midway, below] {$ + f(v_1) \cdot f(v_2)$} (v2l) ; 
\draw[thick,->] (v2l) to [bend right = 5] node[draw=none,midway, below] {$f(v_1)$} (sink) ;
\draw[thick,->] (a) to node[draw=none] {} (-5.85,-6) ;
\draw[thick,->] (a) to node[draw=none] {} (-6.35,-6) ;
\draw[thick,->] (a) to node[draw=none] {} (-6.85,-6) ;
\draw[thick,->] (b) to node[draw=none] {} (5.85,-6) ;
\draw[thick,->] (b) to node[draw=none] {} (6.35,-6) ;
\draw[thick,->] (b) to node[draw=none] {} (6.85,-6) ;
\end{tikzpicture}
\caption{An example with support $3$ of the flow with Lagrangian $\expectation{X_1} + \myerson{X_2}$.}\label{fig: expectation plus myerson flow}
\end{minipage}\hfill
\begin{minipage}[b][6cm][s]{.42\textwidth}
\centering
\begin{tikzpicture}[level distance = 20mm, sibling distance=60mm, sloped,scale=0.5, every node/.style={scale=0.5, solid}]
\node[draw,circle] (root){Source}
  child {[sibling distance=10mm] node[draw,minimum size=2em,circle,anchor=north east] (v1l) {$\downnode{v}{1}$} 
  		child{node[draw,circle,minimum size=2em,anchor=north] (a) {} edge from parent[-stealth] node[above] {$1$}}
  			edge from parent[thick,-stealth] 
            node[above] {$f(\downnode{v}{1})$}
  		}
  child {[sibling distance=20mm] node[draw,minimum size=2em,circle,anchor=north] (v1) {$v_1$} 
  		child{[level distance = 30mm] node[draw,circle,minimum size=2em,anchor=north] (vm) {} 
  			child {node[draw,circle,minimum size=2em,anchor=north] (v2l) {$\downnode{v}{2}$} } 
  			child {node[draw,circle,minimum size=2em,anchor=north] (v2) {$v_2$}
  			    child{node[draw,circle,minimum size=2em,anchor=north] (sink) {Sink} edge from parent[draw = none] } 
  		 		edge from parent[dashed]}
  			child {node[draw,circle,minimum size=2em,anchor=north] (v2r) {$\upnode{v}{2}$} 
  				edge	 from parent[dashed]}
    			edge from parent[dashed] 
            node[right, rotate = 90] {$\kappa_1(v_1) = 0$}
    		 }
    		edge from parent[thick,-stealth]
        node[above] {$f(v_1)$}
  }
  child {[sibling distance=20mm] node[draw,circle,anchor=north west] (v1r) {$\upnode{v}{1}$}
		  child{node[draw,circle,minimum size=2em,anchor=north] (b) {} edge from parent[dashed]} 
  			edge from parent[thick,-stealth]
  			node[above] {$f(\upnode{v}{1})$}
  };
\node[draw,circle] (random1) at (-6.85,-6) {};
\node[draw,circle] (random2) at (-6.35,-6) {};
\node[draw,circle] (random3) at (-5.85,-6) {};
\draw[thick,->] (v1r) to [bend right = 10] node[draw=none,midway, above] {$f(\upnode{v}{1})$} (v1) ; 
\draw[thick,->] (v1) to [bend right = 10] node[draw=none,midway, above] {$f(\upnode{v}{1}) + f(v_1)$} (v1l) ; 
\draw[thick,->] (a) to node[draw=none] {} (random3) ;
\draw[thick,->] (a) to node[draw=none] {} (random2) ;
\draw[thick,->] (a) to node[draw=none] {} (random1) ;
\draw[dashed,->] (b) to node[draw=none] {} (5.85,-6) ;
\draw[dashed,->] (b) to node[draw=none] {} (6.35,-6) ;
\draw[dashed,->] (b) to node[draw=none] {} (6.85,-6) ;
\draw[thick,->] (random1) to [bend right = 30] node[draw=none] {} (sink) ;
\draw[thick,->] (random2) to [bend right = 30] node[draw=none] {} (sink) ;
\draw[thick,->] (random3) to [bend right = 30] node[draw=none] {} (sink) ;
\end{tikzpicture}
\caption{An example with support $3$ of the flow with Lagrangian $\myerson{X_1} + \expectation{X_2}$.}\label{fig: myerson plus expectation flow}
\end{minipage}
\end{figure}
\vspace{2cm}

\begin{proof}
Consider the following dual solution:
\begin{align*}
\kappa_1(v_1) &= f(v_1) && \lambda_1(v_1, \hat{v}_1) = 0
\end{align*}
\begin{align*}
\kappa_2(v_1,v_2) = 
\begin{cases}
f(v_1) \quad& \text{ if } v_2 = \underline{v}_2 \\
0 & \text{ o.w.}
\end{cases}
&& 
\lambda_2 (v_1, v_2, \hat{v}_2) = 
\begin{cases}
f(v_1)(1 - F(\hat{v}_2)) & \text{ if } \hat{v}_2 = \downnode{v}{2} \\
0 & \text{ o.w.} \\
\end{cases}
\end{align*}

It's easy to verify that constraints \ref{eq:stage_1_conservation} and \ref{eq:stage_2_conservation} are satisfied; the solution is useful. Figure~\ref{fig: expectation plus myerson flow} illustrates the dual solution's corresponding flow for distributions with support 3. In the first level, a node $v_1$ receives the flow $f(v_1)$ from the source and pushes the flow $\kappa_1(v_1) = f(v_1)$ to its children. There is no flow between nodes in the first level as $\lambda_1(v_1, \hat{v}_1) = 0$. In the second level, a child-node $(v_1,v_2)$ receives the flow $\kappa(v_1)f(v_2) = f(v_1)f(v_2)$ from its parent, and the flow $f(v_1)(1 - F(v_2))$ from $(v_1,\upnode{v}{2})$. It pushes all its flow $f(v_1)f(v_2) + f(v_1)(1 - F(v_2)) = f(v_1)(1 - F(\downnode{v}{2}))$ to $(v_1, \downnode{v}{2})$ or to the sink when $(v_1, \downnode{v}{2})$ does not exist.
These flows induce virtual values $\Phi_1(v_1) = v_1$ for the first stage nodes. For the second stage nodes, $\Phi_2(v_1,v_2)$ becomes equal to $\phi(v_2)$, Myerson's virtual value for $X_2$. For simplicity we assume that $X_2$ is regular, i.e. the virtual value $\Phi_2$ induced by our flow is monotone non-decreasing; if this is not the case we can ``iron'' our flow by adding loops (See the ironing procedure in~\cite{cai2016}).
By Equation~\ref{eq: opt dual upper bound}, 

\begin{align*}
\opt{\mathcal{X},1,2} &\leq \max_{x,p}\mathcal{L}(\lambda, \kappa, x, p) \\
&= \max_{x,p} \sum_{v_1\in V_1}x_1(v_1)f(v_1)\Phi_1(v_1) + \sum_{v_1\in V_1,v_2\in V_2}x_2(v_1,v_2)f(v_2)f(v_1)\Phi_2(v_1,v_2) \\
&= \max_{x,p} \sum_{v_1\in V_1}x_1(v_1)f(v_1)v_1 \\
&\qquad + \sum_{v_1\in V_1,v_2\in V_2}x_2(v_1,v_2)f(v_2)f(v_1)\left( v_2 - \frac{1}{f(v_1)f(v_2)}\sum_{\hat{v}_2\in V_2}(\hat{v}_2 - v_2)\lambda_2(v_1, \hat{v}_2, v_2) \right) \\
&= \expectation{  X_{1} } + \max_{x,p} \sum_{v_1\in V_1,v_2\in V_2}x_2(v_1,v_2)f(v_2)f(v_1)\left( v_2 - \frac{1}{f(v_1)f(v_2)}(\upnode{v}{2} - v_2)f(v_1)\left( 1 - F(v_2) \right) \right) \\
&= \expectation{  X_{1} } + \sum_{v_1\in V_1} f(v_1) \max_{x,p} \sum_{v_2\in V_2}x_2(v_1,v_2)f(v_2)(v_2 - \frac{1 - F(v_2)}{f(v_2)}(\upnode{v}{2} - v_2)) \\
&= \expectation{  X_{1} } + \sum_{v_1\in V_1} f(v_1) \max_{x,p} \sum_{v_2\in V_2}x_2(v_1,v_2)f(v_2) \phi(v_2) \\
&= \expectation{  X_{1} } + \sum_{v_1\in V_1} f(v_1) \myerson{ X_2 } \\
&= \expectation{  X_{1} } + \myerson{ X_2 } \qedhere\\
\end{align*}
\end{proof}

\begin{claim}\label{clm:m_1 plus e_2} 
For single agent, two independent stages, ex-post IR and  PIC dynamic mechanisms:
\[ \opt{\mathcal{X},1,2} \leq  \myerson{ X_1 } + \expectation{  X_{2} } \]
\end{claim}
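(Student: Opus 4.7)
The plan is to mirror Claim~\ref{clm:e_1 plus m_2} with the roles of the two stages swapped: I will exhibit a useful dual solution whose induced virtual values are $\Phi_1(v_1) = \phi(v_1)$ (Myerson's virtual value for $X_1$, from Definition~\ref{dfn: Myerson's virtual value}) and $\Phi_2(v_1,v_2) = v_2$. Given such a dual, the Lagrangian splits as
\[ \max_{x,p}\mathcal{L}(\lambda,\kappa,x,p) = \max_{x_1} \sum_{v_1} f(v_1) x_1(v_1) \phi(v_1) \;+\; \max_{x_2} \sum_{v_1,v_2} f(v_1) f(v_2) x_2(v_1,v_2) v_2, \]
where the first maximum evaluates to $\myerson{X_1}$ (by Myerson's theorem) and the second to $\expectation{X_2}$ (achieved by $x_2 \equiv 1$); combined with Eq.~\ref{eq: opt dual upper bound} this yields the claim.

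To construct the dual, I set $\lambda_1(\upnode{v}{1}, v_1) = 1 - F(v_1)$ for every $v_1 < \overline{v}_1$ and take all other $\lambda_1$ entries to be zero. Substituting into the formula for $\Phi_1$ derived in Section~\ref{sec: single agent upper bound} recovers exactly $v_1 - (\upnode{v}{1} - v_1)(1 - F(v_1))/f(v_1) = \phi(v_1)$. Stage-one conservation (Eq.~\ref{eq:stage_1_conservation}) then telescopes, using $F(v_1) = F(\downnode{v}{1}) + f(v_1)$, to $\kappa_1(v_1) = 0$ for every $v_1 > \underline{v}_1$ and $\kappa_1(\underline{v}_1) = 1$: the $\lambda_1$ edges funnel all the source mass leftward through the first-stage nodes into the smallest value, which alone pushes flow into the second-stage subtree. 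For the second stage, setting $\lambda_2 \equiv 0$ immediately makes $\Phi_2(v_1,v_2) = v_2$, and stage-two conservation (Eq.~\ref{eq:stage_2_conservation}) forces $\kappa_2(v_1,v_2) = f(v_2) \kappa_1(v_1)$, giving $\kappa_2(\underline{v}_1, v_2) = f(v_2)$ and zero elsewhere. All multipliers are non-negative, the dual is feasible and useful, and the resulting flow matches Figure~\ref{fig: myerson plus expectation flow}.

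The only delicate step is the case where $X_1$ is irregular, in which case the induced $\phi$ is not monotone and pointwise maximization of $f(v_1) x_1(v_1) \phi(v_1)$ overshoots Myerson's revenue. As in Claim~\ref{clm:e_1 plus m_2}, I would handle this by ironing the first-stage flow: insert paired edges $\lambda_1(v_1, v_1') = \lambda_1(v_1', v_1)$ of equal weight between adjacent nodes inside each ironed interval (following the procedure in~\cite{cai2016}). These loops leave the $\kappa_1$ values and stage-one conservation intact while replacing $\phi$ by its ironed counterpart, whose optimal allocation achieves exactly $\myerson{X_1}$. The main obstacle is thus only bookkeeping through Eqs.~\ref{eq:stage_1_conservation} and~\ref{eq:stage_2_conservation}; everything else is symmetric to the proof of Claim~\ref{clm:e_1 plus m_2}.
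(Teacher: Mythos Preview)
Your proposal is correct and essentially identical to the paper's own proof: the paper also pushes all first-stage flow down to $\underline{v}_1$ via $\lambda_1(v_1,\downnode{v}{1}) = 1 - F(\downnode{v}{1})$, sets $\lambda_2 \equiv 0$, and obtains $\Phi_1 = \phi$ and $\Phi_2 = v_2$, with ironing handled exactly as you describe. The only discrepancy is cosmetic---the paper writes $\kappa_2(\underline{v}_1,v_2) = f(\underline{v}_1)f(v_2)$, but your $\kappa_2(\underline{v}_1,v_2) = f(v_2)$ is what actually satisfies Eq.~\ref{eq:stage_2_conservation}.
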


\begin{proof}
Similar to the proof of Claim~\ref{clm:e_1 plus m_2}. Start with an assignment of $\lambda, \kappa$:
\begin{align*}
\kappa_1(v_1) = 
\begin{cases}
1 \quad& \text{ if } v_1 = \underline{v}_1  \\
0  &\text{ o.w.} \\
\end{cases}
&&
\lambda_1 (v_1, \hat{v}_1) = 
\begin{cases}
1 - F(\hat{v}_1)\quad & \text{ if } \hat{v}_1 = \downnode{v}{1}  \\
0  &\text{ o.w.} \\
\end{cases}
\end{align*}

\begin{align*}
\kappa_2(v_1,v_2) = 
\begin{cases}
f(v_1)f(v_2) \quad& \text{ if } v_1 = \underline{v}_1  \\
0 &\text{ o.w.} \\
\end{cases}
&&&&
\lambda_2 (v_1, v_2, \hat{v}_2) = 0
\end{align*}

It's easy to verify that Constraints~\ref{eq:stage_1_conservation} and \ref{eq:stage_2_conservation} are satisfied. Figure~\ref{fig: myerson plus expectation flow} illustrates the dual solution's corresponding flow for distributions with support 3. In the first level, a node $v_1$ receives the flow $f(v_1)$ from the source and the flow $1- F(v_1)$ from $\upnode{v}{1}$. It pushes the flow $\lambda_1(v_1,\downnode{v}{1}) = f(v_1) + 1 - F(v_1) = 1- F(\downnode{v}{1})$ to $\downnode{v}{1}$. If $v_1 = \underline{v}_1$, $\downnode{v}{1}$ does not exist for $v_1$. Then $\underline{v}_1$ pushes $\kappa_1(v_1) = 1$ to its children. In the second level, a child-node $(v_1,v_2)$ receives no flow from the parent unless $v_1 = \underline{v}_1$. The child-node $(\underline{v}_1, v_2)$ pushes its incoming flow $f(v_2)$ to the sink.
This time, $\Phi_1(v_1)$ is Myerson's virtual value and $\Phi_2(v_1,v_2) = v_2$, for every node $v_2$, expect the children of $\underline{v}_1$. Note that, given $\kappa_1(v_1) = 0$, the every child-node under $v_1$ has no incoming flow; therefore, it is unavoidable for their virtual value to be equal to their value.

\begin{align*}
\opt{\mathcal{X},1,2} &\leq \max_{x,p}\mathcal{L}(\lambda, \kappa, x, p) \\
&= \max_{x,p}\sum_{v_1\in V_1}x_1(v_1)f(v_1)\Phi_1(v_1) + \sum_{v_1\in V_1,v_2\in V_2}x_2(v_1,v_2)f(v_2)f(v_1)\Phi_2(v_1,v_2) \\
&=  \max_{x,p} \sum_{v_1\in V_1}x_1(v_1)f(v_1) \left( v_1 - \frac{1-F(v_1)}{f(v_1)}(\upnode{v}{1} - v_1) \right) + \sum_{v_1\in V_1}\sum_{v_2\in V_2}x_2(v_1,v_2)f(v_1)f(v_2)v_2 \\
&= \max_{x,p} \sum_{v_1\in V_1}x_1(v_1)f(v_1)\phi(v_1) + \expectation{ X_{2} } \\
&= \myerson{ X_1 } + \expectation{ X_{2} } \qedhere\\
\end{align*}
\end{proof}

Claim~\ref{clm:e_1 plus m_2} and Claim~\ref{clm:m_1 plus e_2} together imply Lemma~\ref{lem:upper bound on opt one agent}.

\section{Lower Bounding the Revenue of VCG}\label{sec:second price bounds}

In this Section we prove Theorem~\ref{thm: bounds on mhr stuff}. The proof is broken into three Lemmas.
Recall that the hazard rate of a distribution $F$ is $h(x) = \frac{f(x)}{1-F(x)}$. $F$ has monotone hazard rate (MHR) if $h(x)$ is a non-decreasing function. We restrict ourselves to continuous distributions. For Lemma~\ref{lem:general order statistics} we also need the distribution to be supported on $[0,\bar{V}]$. 

\begin{lemma}\label{lem:general order statistics}
Let $X_{r:n}$ be the $r$-th order statistic of $n$ i.i.d. samples from a continuous (possibly unbounded) distribution with monotone hazard rate. Then $4n$ samples are necessary and sufficient for $\expectation{X_{2:4n}}$ to be at least as large as $\expectation{X_{1:n}}$.
\end{lemma}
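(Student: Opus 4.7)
The plan is to reduce the $n$-sample statement to a one-sample statement $\mathbb{E}[Y_{2:4}] \geq \mathbb{E}[Y]$ for any continuous MHR $Y$, prove the one-sample statement via a piecewise-linear approximation of $H(x) = -\log(1-F(x))$, and finally exhibit tightness via the exponential distribution. The principal obstacle is the one-sample inequality: the integrand involved is not pointwise nonnegative, so neither a pointwise comparison nor a simple tail bound on MHR can succeed; we must exploit the global convexity of $H$.

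\paragraph{Reduction to one sample.}
First I would partition the $4n$ i.i.d. copies of $X$ into $4$ groups of $n$, and let $M_j$ be the maximum of group $j$. Then $M_1,\dots,M_4$ are i.i.d.\ with the distribution of $X_{1:n}$, and the second-largest of the $4n$ samples stochastically dominates (in fact is pointwise $\geq$) the second-largest of $M_1,\dots,M_4$, so $\mathbb{E}[X_{2:4n}] \geq \mathbb{E}[(X_{1:n})_{2:4}]$. Since MHR is preserved under taking the maximum of i.i.d.\ copies (a standard fact from reliability theory, e.g.\ \citet{barlow1996mathematical}), $Y = X_{1:n}$ is continuous and MHR, and the one-sample claim applied to $Y$ gives $\mathbb{E}[Y_{2:4}] \geq \mathbb{E}[Y] = \mathbb{E}[X_{1:n}]$, completing the reduction.

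\paragraph{One-sample inequality $\mathbb{E}[Y_{2:4}] \geq \mathbb{E}[Y]$.}
Writing $1-F(x) = e^{-H(x)}$ with $H$ convex non-decreasing and $H(0)=0$, a direct computation gives $\Pr[Y_{2:4} > x] = 4e^{-3H(x)} - 3e^{-4H(x)}$, so the desired inequality is equivalent to
\[
I(H) \;:=\; \int_{0}^{\infty} \bigl(4e^{-3H(x)} - 3e^{-4H(x)} - e^{-H(x)}\bigr)\,dx \;\geq\; 0.
\]
I would first assume $H$ is piecewise linear on $[0,\bar V]$ with slopes $0 \leq a_1 \leq a_2 \leq \cdots \leq a_m$ (so that $h$ is a nondecreasing step function). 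On each linear piece, the integral of $4e^{-3H}-3e^{-4H}-e^{-H}$ admits a closed form in $H$ at the two endpoints and the slope of the piece. The key calculation is monotonicity: if one extends a piecewise-linear $H$ by a new final piece of slope $a \geq a_m$, the incremental contribution to $I$ is nonnegative, which I would verify by differentiating in the piece's length and using that the previous ``accumulated" level $H(\bar V)$ governs the starting value of the exponentials; the derivative simplifies to an expression that is nonnegative precisely when the slope is at least the ``break-even'' slope for that starting level, and convexity supplies this. The base case of a single linear piece $H(x)=\lambda x$ on $[0,\infty)$ gives $I = \frac{1}{\lambda}\bigl(\tfrac{4}{3}-\tfrac{3}{4}-1\bigr) = \frac{1}{12\lambda}>0$, so the induction starts with room to spare. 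Finally, for a general continuous MHR $X$ on $[0,\bar V]$, its convex $H$ can be approximated uniformly on $[0,\bar V]$ by piecewise-linear convex functions, and since the integrand is uniformly Lipschitz in $H$ on a bounded interval, $I$ depends continuously on $H$ and the inequality passes to the limit. For unbounded support, MHR forces exponential tails, so truncating at $\bar V$ and letting $\bar V \to \infty$ finishes the argument.

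\paragraph{Necessity of $4n$.}
For tightness, I would exhibit the exponential $X\sim\mathrm{Exp}(\lambda)$. Using the classical representation of exponential order statistics as partial sums of independent scaled exponentials, $\mathbb{E}[X_{r:n}] = \frac{1}{\lambda}\sum_{k=n-r+1}^{n}\frac{1}{k}$. In particular $\mathbb{E}[X_{1:n}] = \frac{1}{\lambda}H_n$ and $\mathbb{E}[X_{2:c}] = \frac{1}{\lambda}(H_c - 1)$. Taking $n=1$ and $c=3n=3$ yields $\mathbb{E}[X_{2:3}] = \frac{5}{6\lambda} < \frac{1}{\lambda} = \mathbb{E}[X_{1:1}]$, which shows that $4n-1$ samples do not suffice. (More generally, since $H_{cn}-1 - H_n \to \log c - 1$, the threshold $c=4$ is exactly the smallest integer for which $\mathbb{E}[X_{2:cn}]\geq \mathbb{E}[X_{1:n}]$ for all $n$ on the exponential.) Together with the sufficiency argument above, this proves Lemma~\ref{lem:general order statistics}.
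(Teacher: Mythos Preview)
Your overall plan---reduce to one sample by partitioning $4n$ copies into four blocks of $n$ and using that the block maxima are i.i.d.\ MHR, prove the one-sample inequality via a piecewise-linear approximation of $H$, and invoke the exponential for tightness---is exactly the paper's approach. The reduction and the tightness paragraph are fine.

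There is, however, a genuine computational error in the one-sample step. With the paper's convention that $X_{2:4}$ is the second \emph{largest}, the event $\{Y_{2:4}>x\}$ is ``at least two of the four exceed $x$'', so
\[
\Pr[Y_{2:4}>x]\;=\;6e^{-2H(x)}-8e^{-3H(x)}+3e^{-4H(x)},
\]
not $4e^{-3H}-3e^{-4H}$; the expression you wrote is $\Pr[\text{at least three exceed }x]$, i.e.\ the survival of the \emph{third} largest. Your own base-case check exposes this: with $H(x)=\lambda x$, your integrand gives $\tfrac{4}{3}-\tfrac{3}{4}-1=-\tfrac{5}{12}$, not $+\tfrac{1}{12}$, so the ``$I=\tfrac{1}{12\lambda}>0$'' line does not follow from your formula. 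With the correct integrand $6e^{-2H}-8e^{-3H}+3e^{-4H}-e^{-H}$ one does get $3-\tfrac{8}{3}+\tfrac{3}{4}-1=\tfrac{1}{12}$, and the rest of the argument can proceed.

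Separately, your inductive sketch (``extend $H$ by a new last piece and check the increment is nonnegative'') is not yet a proof: the increment depends jointly on the starting level $H(\bar V)$, the previous slope, and the new slope, and you have not shown it is nonnegative in general. The paper avoids this by computing the full integral for an arbitrary piecewise-linear convex $H$, telescoping across the breakpoints, and reducing positivity to showing that the single-variable function
\[
g(y)\;=\;-\tfrac{3}{4}e^{-4y}+\tfrac{8}{3}e^{-3y}-3e^{-2y}+e^{-y}+\tfrac{1}{12}
\]
is nonnegative on $[0,\infty)$, which follows from a routine critical-point analysis. That explicit computation is what makes the piecewise-linear step airtight.
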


\begin{lemma}\label{lem: oneovere}
Let $X_{r:n}$ be the $r$-th (highest) order statistic of $n$ i.i.d. samples from a continuous (possibly unbounded) distribution with monotone hazard rate. Then $\expectation{X_{2:n+1}} \geq \frac{1}{e} \expectation{X_{1:n}}$.
\end{lemma}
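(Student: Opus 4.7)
The plan is to execute the three-step ``auction flavored'' chain advertised in the introduction: apply Bulow--Klemperer to pass from the order statistic $\expectation{X_{2:n+1}}$ to the optimal static revenue $\myerson{\mathbf{X}}$ for $n$ i.i.d.\ bidders; then sandwich $\myerson{\mathbf{X}}$ between itself and the one-bidder Myerson revenue $\myerson{X_{1:n}}$ by a coupling; then lower bound $\myerson{X_{1:n}}$ by $\tfrac{1}{e}\,\expectation{X_{1:n}}$ using that order statistics of MHR variables are again MHR together with the classical fact that for any continuous MHR distribution $Y$, $\Pr[Y\geq\expectation{Y}]\geq 1/e$.

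Concretely, since MHR implies regularity, \thmref{thm: bulowklemperer} yields $\expectation{X_{2:n+1}} \geq \myerson{\mathbf{X}}$, which is the only place the extra bidder is spent. For the second step, for any price $p$ consider the (suboptimal) mechanism on $n$ i.i.d.\ bidders that sells to some bidder exceeding the reserve $p$ at exactly price $p$: its expected revenue is $p\cdot\Pr[\max_i X^i\geq p] = p\cdot\Pr[X_{1:n}\geq p]$, which is exactly the revenue of posting price $p$ to a single bidder whose value is distributed as $X_{1:n}$. Taking the supremum over $p$ and using that the optimal single-bidder mechanism is a posted price gives $\myerson{\mathbf{X}} \geq \myerson{X_{1:n}}$. (Equivalently, couple the single draw $X_{1:n}$ with the pointwise maximum of $n$ independent draws from $X$, so the two revenue events coincide.)

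For the third step, a standard result (see, e.g., \citet{barlow1996mathematical}) gives that if $X$ is continuous and MHR, then so is the highest order statistic $X_{1:n}$. I would then apply the textbook inequality $\Pr[Y\geq\expectation{Y}]\geq 1/e$ for continuous MHR $Y$ with $Y = X_{1:n}$ and mean $\mu = \expectation{X_{1:n}}$. Posting price $\mu$ to a single agent with value distribution $X_{1:n}$ earns revenue $\mu\cdot \Pr[X_{1:n}\geq \mu]\geq \mu/e$, so $\myerson{X_{1:n}} \geq \tfrac{1}{e}\,\expectation{X_{1:n}}$.

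Chaining the three inequalities yields $\expectation{X_{2:n+1}} \geq \myerson{\mathbf{X}} \geq \myerson{X_{1:n}} \geq \tfrac{1}{e}\,\expectation{X_{1:n}}$, completing the proof. The only subtle step is the middle one: it is essential here that $\myerson{\mathbf{X}}$ dominates the single-agent monopoly revenue $\myerson{X_{1:n}}$ (not the other way round), because we want to pass from an $n$-bidder revenue benchmark to a single-agent one on which the classical mean-based MHR bound bites. The continuity hypothesis is genuinely used in the third step, matching the paper's remark that the analogue of $\Pr[Y\geq \expectation{Y}]\geq 1/e$ fails for discrete MHR distributions.
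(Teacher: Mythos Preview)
Your proposal is correct and follows essentially the same three-step chain as the paper's proof: Bulow--Klemperer, then the coupling $\myerson{\mathbf{X}}\geq\myerson{X_{1:n}}$ (the paper isolates this as Lemma~\ref{lem: myerson for n and one}, proved via a second-price-with-reserve comparison equivalent to your posted-price argument), then the MHR bound $\myerson{X_{1:n}}\geq\tfrac{1}{e}\expectation{X_{1:n}}$. The only cosmetic difference is that the paper cites the last inequality as a black box (Hartline, Lemma~5.14) whereas you spell out its standard proof via $\Pr[Y\geq\expectation{Y}]\geq 1/e$; both invoke Lemma~\ref{lem:barlow order statistics of MHR} to ensure $X_{1:n}$ is MHR.
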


\begin{lemma}\label{lem: oneoverthree}
Let $X_{r:n}$ be the $r$-th (highest) order statistic of $n$ i.i.d. samples from a continuous (possibly unbounded) distribution with monotone hazard rate. Then $\expectation{X_{2:n}} \geq \frac{1}{3} \expectation{X_{1:n}}$.
\end{lemma}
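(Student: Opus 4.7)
The plan is to combine two classical facts for an MHR distribution $X$ and reduce to the base case $n=2$. The two ingredients, both provable by the substitution $y=H(x):=-\ln(1-F(x))$ (which rewrites any MHR integral in terms of exponential weights against the factor $g(y):=1/h(H^{-1}(y))$, non-increasing in $y$ since $h$ is non-decreasing), are: \emph{(i)} $\expectation{\min(X_1,X_2)}\ge\tfrac{1}{2}\expectation{X}$, equivalently $\expectation{X_{2:2}}\ge\tfrac{1}{2}\expectation{X}$ and $\expectation{X_{1:2}}\le\tfrac{3}{2}\expectation{X}$; and \emph{(ii)} the top spacing $\expectation{X_{1:n}}-\expectation{X_{2:n}}$ is non-increasing in $n$.

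Granting these, the base case $n=2$ is immediate: $\expectation{X_{2:2}}\ge\tfrac{1}{2}\expectation{X}\ge\tfrac{1}{2}\cdot\tfrac{2}{3}\expectation{X_{1:2}}=\tfrac{1}{3}\expectation{X_{1:2}}$. For any $n\ge 2$, fact (ii) gives
\[ \expectation{X_{2:n}}\;\ge\;\expectation{X_{1:n}}-\big(\expectation{X_{1:2}}-\expectation{X_{2:2}}\big)\;\ge\;\expectation{X_{1:n}}-\tfrac{2}{3}\expectation{X_{1:2}}, \]
where the last step applies the base case to bound $\expectation{X_{1:2}}-\expectation{X_{2:2}}\le\tfrac{2}{3}\expectation{X_{1:2}}$. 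Because the expected maximum is monotone in the sample size, $\expectation{X_{1:n}}\ge\expectation{X_{1:2}}$, hence
\[ \expectation{X_{2:n}}\;\ge\;\expectation{X_{1:n}}-\tfrac{2}{3}\expectation{X_{1:n}}\;=\;\tfrac{1}{3}\expectation{X_{1:n}}, \]
which is the claimed bound.

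The main obstacle is proving the spacing monotonicity (ii) in a way that cleanly uses MHR. My approach would be to start from the closed form $\expectation{X_{1:n}}-\expectation{X_{2:n}}=\int_0^\infty nF^{n-1}(x)(1-F(x))\,dx$ and apply the substitution $y=H(x)$; the integrand becomes $n(1-e^{-y})^{n-1}e^{-y}\,g(y)$, so $\expectation{X_{1:n}}-\expectation{X_{2:n}}=\expectation{g(M_n)}$, where $M_n$ is the maximum of $n$ i.i.d.\ $\mathrm{Exp}(1)$ random variables. Since $g$ is non-increasing (MHR) and $M_n$ stochastically increases in $n$, $\expectation{g(M_n)}$ is non-increasing in $n$. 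The same substitution yields (i) quickly: $\expectation{X}=\int_0^\infty e^{-y}g(y)\,dy$ while $\expectation{\min(X_1,\ldots,X_k)}=\int_0^\infty e^{-ky}g(y)\,dy$, so the rescaling $u=ky$ combined with the monotonicity of $g$ gives $k\,\expectation{\min(X_1,\ldots,X_k)}\ge\expectation{X}$, which at $k=2$ is precisely (i).
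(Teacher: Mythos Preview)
Your proposal is correct and matches the paper's proof essentially line for line: the paper also reduces to the two ingredients $\expectation{X_{2:2}}\ge\tfrac12\expectation{X}$ (their Lemma~\ref{lem:expected min of two samples}) and the monotonicity of the top spacing $\expectation{X_{1:n}}-\expectation{X_{2:n}}=\expectation{1/h(X_{1:n})}$ (their Lemma~\ref{lem: expected spacing is decreasing}), then chains $\expectation{X_{1:n}}-\expectation{X_{2:n}}\le\expectation{X_{1:2}}-\expectation{X_{2:2}}\le\tfrac23\expectation{X_{1:2}}\le\tfrac23\expectation{X_{1:n}}$. The only cosmetic difference is that the paper proves (i) via the convexity inequality $2H(x)\ge H(2x)$ rather than your substitution-and-rescale argument, and phrases (ii) as $\expectation{1/h(X_{1:n})}$ rather than $\expectation{g(M_n)}$; these are the same computations up to the change of variables $y=H(x)$.
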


A useful fact about order statistics of MHR distributions that we use throughout this Section is that order statistics of MHR distributions have themselves an MHR distribution:

\begin{lemma}[\citet{barlow1996mathematical}]\label{lem:barlow order statistics of MHR}
Assume $X$ is a random variable with distribution $F$ and density $f$ which is MHR. If $X_1 , X_2, \dots, X_n$, are $n$ independent observations on $X$, the order statistics formed from the $X_i$'s are also MHR.
\end{lemma}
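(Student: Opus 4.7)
I would prove the statement by computing the hazard rate of $X_{r:n}$ explicitly and showing it is non-decreasing in $x$, using the assumption that the hazard rate $h(x) := f(x)/\bar F(x)$ of the parent distribution is non-decreasing (here $\bar F := 1-F$). Starting from the standard continuous order-statistic formulas
\[f_{r:n}(x) = \frac{n!}{(r-1)!(n-r)!}\,\bar F(x)^{\,r-1}F(x)^{\,n-r}f(x),\qquad \bar F_{r:n}(x)=\sum_{j=r}^{n}\binom{n}{j}\bar F(x)^{j}F(x)^{n-j},\]
the target quantity is $h_{r:n}(x) := f_{r:n}(x)/\bar F_{r:n}(x)$, where $X_{r:n}$ is the $r$-th largest of $n$ i.i.d.\ copies.

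The key move is to substitute $f = h\cdot\bar F$ into the numerator so that $h$ appears as an explicit factor, and then to introduce the change of variable $v(x) := F(x)/\bar F(x)$. Since $F$ is non-decreasing and $\bar F$ is non-increasing on the support, $v$ is non-decreasing in $x$. Pulling a common factor of $\bar F(x)^{n}$ out of both numerator and denominator (and reindexing the sum by $l = n-j$, using $\binom{n}{n-l}=\binom{n}{l}$) yields
\[h_{r:n}(x) \;=\; h(x)\cdot\frac{c_r\,v(x)^{\,n-r}}{\sum_{l=0}^{n-r}\binom{n}{l}\,v(x)^{\,l}},\qquad c_r := \frac{n!}{(r-1)!(n-r)!} > 0.\]

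Since $h$ is non-negative and non-decreasing, it suffices to show that the rational function $g(v) := v^{\,n-r}/\sum_{l=0}^{n-r}\binom{n}{l}v^{l}$ is non-decreasing in $v \geq 0$. I would prove this by inspecting the reciprocal,
\[\frac{1}{g(v)} \;=\; \sum_{l=0}^{n-r}\binom{n}{l}\,v^{\,l-(n-r)},\]
and observing that every exponent $l-(n-r)$ is non-positive (ranging over $\{-(n-r),\ldots,-1,0\}$), so each summand is a non-increasing function of $v \geq 0$, with the $l = n-r$ term constant. Hence $1/g$ is non-increasing, $g$ is non-decreasing, and $h_{r:n}(x)$ is a product of non-negative non-decreasing functions of $x$, which is exactly the MHR property for $X_{r:n}$.

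The argument has no deep obstacle; what there is to watch for is essentially bookkeeping. The one step that requires the right viewpoint is isolating the factor $h(x)$ and recognising that $v = F/\bar F$ is the natural coordinate in which the combinatorial dependence on $(n,r)$ decouples from the analytic dependence on $F$. Once this substitution is in place, monotonicity of $g$ follows immediately from reading off the signs of the exponents. Boundary behaviour ($\bar F(x)\downarrow 0$ at a finite right endpoint, where $v(x)\uparrow\infty$) is handled by continuity of $h_{r:n}$, and the degenerate endpoints $r=n$ (which gives $g\equiv 1$ and hence $h_{n:n}=n\,h$) and $r=1$ serve as useful sanity checks for the derivation.
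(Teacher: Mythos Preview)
The paper does not supply its own proof of this lemma; it is quoted as a known result from \citet{barlow1996mathematical} and used as a black box in the arguments of Section~\ref{sec:second price bounds}. So there is no in-paper proof to compare against.

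That said, your direct computation is correct. The factorisation
\[
h_{r:n}(x)\;=\;h(x)\cdot c_r\cdot \frac{v(x)^{\,n-r}}{\sum_{l=0}^{n-r}\binom{n}{l}v(x)^{\,l}},\qquad v(x)=\frac{F(x)}{\bar F(x)},
\]
is exactly right under the paper's convention that $X_{r:n}$ is the $r$-th largest, and the observation that $1/g(v)=\sum_{l=0}^{n-r}\binom{n}{l}v^{\,l-(n-r)}$ has only non-positive exponents cleanly gives monotonicity of $g$. Your sanity check $h_{n:n}=n\,h$ matches the standard fact that the minimum of $n$ i.i.d.\ copies has hazard $n\,h$. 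The only caveat is a technicality you already flagged: at the left endpoint $v=0$ the negative powers blow up, but this is harmless since $g(v)\to 0$ there and $h_{r:n}$ extends continuously. In short, your argument is a valid self-contained proof of a result the paper merely cites.
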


We break the proof of Lemma~\ref{lem:general order statistics} into two parts. We first prove the result for $n=1$ in Subsection~\ref{subsec: single sample strict mhr bound}. We complete the proof of general $n$ by combining the $n=1$ case, Lemma~\ref{lem:barlow order statistics of MHR} and a coupling argument in Subsection~\ref{subsec: general mhr bound}.
We prove Lemmas~\ref{lem: oneovere} and~\ref{lem: oneoverthree} (necessary for our $\frac{1}{e}$ and $\frac{1}{3}$-approximate Competition Complexity bounds) in Subsections~\ref{subsec: one sample more} and~\ref{subsec: no samples} respectively. 

\subsection{Bound when $n=1$}\label{subsec: single sample strict mhr bound}


\begin{lemma}\label{lemma: single sample}
Let $X_{r:n}$ the $r$-th order statistic of $n$ i.i.d. samples from a continuous distribution with monotone hazard rate. Then $4$ samples are necessary and sufficient for $\expectation{X_{2:4}}$ to be at least as large as $\expectation{X} = \expectation{X_{1:1}}$.
\end{lemma}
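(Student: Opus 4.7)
My plan is to exploit the MHR parameterization $F(t) = 1 - e^{-H(t)}$ with $H$ convex, non-decreasing, $H(0)=0$, and to reduce to the case where $H$ is piecewise linear. Writing $G = 1-F$ and expanding $P(X_{2:4} \geq t) = 6G^2 - 8G^3 + 3G^4$, I obtain
\[
\expectation{X_{2:4}} - \expectation{X} = \int_0^\infty \bigl(6G^2 - 8G^3 + 3G^4 - G\bigr)\,dt = \int_0^\infty G\cdot\psi'(G)\,dt,
\]
where $\psi(x) := \tfrac{3}{4}x^4 - \tfrac{8}{3}x^3 + 3x^2 - x$. Since MHR distributions have exponentially thin tails, truncating the support to $[0,\bar V]$ (conditioning, which preserves MHR) changes both sides arbitrarily little; and on $[0,\bar V]$, any convex continuous $H$ with $H(0)=0$ can be uniformly approximated by convex piecewise linear $H_\epsilon$ (e.g.\ linear interpolation on a fine mesh), yielding uniformly convergent $G$ and hence convergent expectations. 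It therefore suffices to treat piecewise linear $H$.

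Fix $H$ piecewise linear with break points $0 = t_0 < t_1 < \cdots < t_N = \bar V$ and non-decreasing slopes $b_0 \leq b_1 \leq \cdots \leq b_{N-1}$ (the monotonicity of slopes being precisely the convexity of $H$). Setting $g_i := G(t_i)$, on $[t_i, t_{i+1}]$ we have $G(t) = g_i e^{-b_i(t-t_i)}$, so $\int_{t_i}^{t_{i+1}} G^k\,dt = (g_i^k - g_{i+1}^k)/(k b_i)$. Summing produces the telescoping-style identity
\[
\expectation{X_{2:4}} - \expectation{X} = \sum_{i=0}^{N-1} \frac{\psi(g_i) - \psi(g_{i+1})}{b_i}.
\]
Applying Abel summation with $S_i := \psi(1) - \psi(g_i) = \tfrac{1}{12} - \psi(g_i)$ (so $S_0 = 0$) rewrites the right-hand side as
\[
\frac{S_N}{b_{N-1}} + \sum_{i=1}^{N-1} S_i\left(\frac{1}{b_{i-1}} - \frac{1}{b_i}\right).
\]
The gaps $1/b_{i-1} - 1/b_i \geq 0$ by convexity, and $S_i \geq 0$ follows from the key analytic claim that $\psi(x) \leq \psi(1) = \tfrac{1}{12}$ on $[0,1]$. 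This in turn is immediate from $\psi'(x) = (x-1)(3x^2-5x+1)$: on $[0,1]$ the only critical points are a local minimum at $(5-\sqrt{13})/6$ and the endpoint $x=1$, and comparing $\psi(0) = 0 < \tfrac{1}{12} = \psi(1)$ places the maximum at $x=1$.

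For necessity, direct computation on the exponential distribution with rate $b$ gives $\expectation{X_{2:3}} = 5/(6b) < 1/b = \expectation{X}$, showing three samples do not suffice. The main obstacle is identifying the auxiliary polynomial $\psi$ (whose derivative emerges naturally from the binomial expansion of $P(X_{2:4} \geq t)$) and verifying that its maximum over $[0,1]$ sits at the right endpoint; once this is established, convexity of $H$ aligns perfectly with the sign pattern needed so that Abel summation automatically produces a sum of non-negative terms. The piecewise linear reduction is routine but essential for turning the integral identity into a finite sum where these signs are transparent.
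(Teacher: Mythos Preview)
Your proof is correct and follows essentially the same route as the paper. Both reduce to bounded support, approximate $H$ by a convex piecewise linear function, integrate explicitly on each linear piece, and reorganize the resulting sum via a telescoping/Abel argument so that every term carries a nonnegative weight $\tfrac{1}{b_{i-1}}-\tfrac{1}{b_i}$; the key analytic fact you isolate, $\psi(x)\le\psi(1)=\tfrac{1}{12}$ on $[0,1]$, is exactly the paper's inequality $g(y)\ge 0$ under the change of variables $x=e^{-y}$, and both use the exponential distribution for necessity. Your packaging through the polynomial $\psi$ and the explicit factorization $\psi'(x)=(x-1)(3x^2-5x+1)$ is slightly cleaner than the paper's direct analysis of the critical points of $g$, but the substance is identical.
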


Let $H(x) = \int_0^x h(z) dz$. If $F$ is MHR, then $H(x)$ is a convex function as it is the integral of a non-decreasing function. 
The proofs of the next Claims can be found in Appendix~\ref{appendix:missing from mhr}.

\begin{claim}\label{claim: basic MHR }
$F(x) = 1 - e^{-H(x)}$ and $\expectation{X} = \int_0^{\bar{V}} e^{-H(x)} dx$.
\end{claim}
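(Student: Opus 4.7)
The plan is to establish both identities by directly manipulating the definition of the hazard rate. For the first identity, I would observe that $h(z) = f(z)/(1-F(z))$ is exactly the derivative of $-\ln(1-F(z))$ with respect to $z$ (by the chain rule), so integrating from $0$ to $x$ and using the boundary condition $F(0)=0$ (since the support is $[0,\bar{V}]$) yields
\[
H(x) = \int_0^x h(z)\,dz = -\ln(1-F(x)) + \ln(1-F(0)) = -\ln(1-F(x)).
\]
Exponentiating both sides and rearranging gives $F(x) = 1 - e^{-H(x)}$, which also shows $1-F(x) = e^{-H(x)}$.

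For the second identity, I would use the standard tail-sum formula for the expectation of a nonnegative random variable supported on $[0,\bar{V}]$, namely
\[
\expectation{X} = \int_0^{\bar{V}} \Pr[X > x]\,dx = \int_0^{\bar{V}} (1-F(x))\,dx.
\]
Substituting the expression $1-F(x) = e^{-H(x)}$ just derived in the first part yields $\expectation{X} = \int_0^{\bar{V}} e^{-H(x)} dx$, as desired. The tail-sum formula itself is either cited as standard or justified in one line by Fubini/integration by parts.

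Neither step presents any real obstacle; the whole argument is a routine calculation using the definition $h = f/(1-F)$, the fundamental theorem of calculus, and the tail-integral formula for expectation. The only mild care needed is the boundary condition at $0$ and ensuring the support assumption $[0,\bar{V}]$ justifies the tail-sum identity.
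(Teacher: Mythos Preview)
Your proposal is correct and follows essentially the same approach as the paper: differentiate $\ln(1-F(x))$ to recognize it as $-h(x)$, integrate and use $F(0)=0$ to obtain the first identity, and then invoke the tail-integral formula for the expectation of a nonnegative random variable to get the second.
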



\begin{claim}\label{claim:expected second max formula}
$\expectation{X_{2:4}} = \int_0^{\bar{V}} 3 e^{-4H(x)} - 8 e^{-3H(x)} + 6 e^{-2H(x)} dx$.
\end{claim}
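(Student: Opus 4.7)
The plan is to reduce this to the standard tail-integral formula for the expectation of a nonnegative random variable and then carry out a small binomial calculation.

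First I would write
\[
\expectation{X_{2:4}} \;=\; \int_0^{\bar{V}} \Pr[X_{2:4} > x]\, dx,
\]
valid because $X_{2:4}$ is a nonnegative random variable supported on $[0,\bar{V}]$. The event $\{X_{2:4} > x\}$ is precisely the event that at least two of the four i.i.d.\ samples from $X$ exceed $x$, so letting $q := 1 - F(x)$ I would expand
\[
\Pr[X_{2:4} > x] \;=\; \sum_{k=2}^{4}\binom{4}{k} q^k (1-q)^{4-k} \;=\; 6 q^2(1-q)^2 + 4 q^3 (1-q) + q^4.
\]

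Next I would multiply out and collect like terms, obtaining $6q^2 - 8q^3 + 3q^4$. Using the first identity of Claim~\ref{claim: basic MHR }, namely $1 - F(x) = e^{-H(x)}$, so $q = e^{-H(x)}$ and $q^j = e^{-jH(x)}$, this becomes
\[
\Pr[X_{2:4} > x] \;=\; 6 e^{-2H(x)} - 8 e^{-3H(x)} + 3 e^{-4H(x)}.
\]
Substituting this into the tail integral gives exactly the claimed expression.

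There is no real obstacle here; the only things to double-check are the binomial coefficients ($\binom{4}{2}=6$, $\binom{4}{3}=4$, $\binom{4}{4}=1$) and the correct sign pattern after expanding $6q^2(1-q)^2$. This is a routine calculation, and the only substantive input is the MHR-specific rewriting $1-F = e^{-H}$ from Claim~\ref{claim: basic MHR }; everything else is an elementary order-statistic identity that would apply to any continuous distribution.
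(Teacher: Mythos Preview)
Your proof is correct and essentially identical to the paper's: both use the tail-integral formula $\expectation{X_{2:4}} = \int_0^{\bar V}\Pr[X_{2:4}>x]\,dx$, compute the survival function of the second order statistic as $6q^2-8q^3+3q^4$ with $q=1-F(x)=e^{-H(x)}$, and substitute. The only cosmetic difference is that the paper first writes $F_{2:4}(x)=4F^3-3F^4$ and then expands $1-F_{2:4}$, whereas you sum the binomial terms for ``at least two exceed $x$'' directly.
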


For the upper bound on the number of samples, it suffices to show that $\int_0^{\bar{V}} 3 e^{-4H(x)} - 8 e^{-3H(x)} + 6 e^{-2H(x)} - e^{-H(x)} dx \geq 0$ for all non-negative, convex and continuous functions $H(x)$. We first prove this statement for all non-negative, piecewise linear and convex functions $\hat{H}(x)$ in Lemma~\ref{lemma: piecewise linear H}. We then show how to approximate any convex function by a piecewise linear convex function in Lemma~\ref{lemma: piecewise linear approximation}. We combine Lemmas~\ref{lemma: piecewise linear H} and~\ref{lemma: piecewise linear approximation} to prove the upper bound in Lemma~\ref{lemma: single sample}. The lower bound comes from considering a (truncated) exponential distribution.

The proof of the following lemma can be found in Appendix~\ref{appendix:missing from mhr}.

\begin{lemma}\label{lemma: piecewise linear H}
Let $\hat{H}(x)$ be a non-negative, piecewise linear and convex function in $[0,\bar{V}]$, with $\hat{H}(0) = 0$. Then $\int_0^{\bar{V}} 3 e^{-4\hat{H}(x)} - 8 e^{-3\hat{H}(x)} + 6 e^{-2\hat{H}(x)} - e^{-\hat{H}(x)} dx > 0.$
\end{lemma}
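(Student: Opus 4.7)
The plan is to reduce to a pointwise inequality via a change of variables on each linear piece of $\hat H$, and then exploit the convexity of $\hat H$ through summation by parts. Let $f(u) = 3 e^{-4u} - 8 e^{-3u} + 6 e^{-2u} - e^{-u}$ be the integrand, viewed as a function of $\hat H$, and let $\Psi(u) = \int_0^u f(t)\, dt$ be its antiderivative. Partition $[0,\bar V]$ into intervals $[a_i, a_{i+1}]$, $i = 0,\dots,n-1$, on which $\hat H$ has constant slope $s_i$; by convexity the slopes are non-decreasing, and by non-negativity together with $\hat H(0)=0$ we have $s_0 \ge 0$. Set $H_i = \hat H(a_i)$, so $H_0 = 0$. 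On any piece with $s_i > 0$, substituting $u = \hat H(x)$ gives
\[
\int_{a_i}^{a_{i+1}} f(\hat H(x))\, dx = \frac{1}{s_i}\bigl(\Psi(H_{i+1}) - \Psi(H_i)\bigr).
\]
Any piece with $s_i = 0$ must be an initial segment on which $\hat H \equiv 0$ (since slopes are non-decreasing from $s_0 \ge 0$), so $f(\hat H) = f(0) = 0$ there and such pieces contribute zero.

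Next I will show $\Psi(u) \ge 0$ for all $u \ge 0$, with strict inequality for $u > 0$. Direct antidifferentiation yields
\[
\Psi(u) = -\tfrac{3}{4} e^{-4u} + \tfrac{8}{3} e^{-3u} - 3 e^{-2u} + e^{-u} + \tfrac{1}{12}.
\]
Writing $y = e^{-u} \in (0,1]$ and clearing denominators, $12\,\Psi(u) = -9 y^4 + 32 y^3 - 36 y^2 + 12 y + 1$. Since $y = 1$ is a double root, this factors as
\[
12\,\Psi(u) = (y-1)^2\bigl(-9 y^2 + 14 y + 1\bigr).
\]
The quadratic $-9 y^2 + 14 y + 1$ has roots $y = (14 \pm \sqrt{232})/18$, both outside $[0,1]$, so it is strictly positive on $[0,1]$. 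Hence $\Psi(u) \ge 0$, with equality only at $u = 0$.

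Finally I will assemble the pieces using Abel summation. Setting $c_i = 1/s_i$ on pieces with $s_i > 0$, the convexity of $\hat H$ makes $c_i$ non-increasing in $i$. Using $\Psi(H_0) = \Psi(0) = 0$,
\[
\int_0^{\bar V} f(\hat H(x))\, dx = \sum_i c_i\bigl(\Psi(H_{i+1}) - \Psi(H_i)\bigr) = c_{n-1}\,\Psi(H_n) + \sum_{i \ge 1}(c_{i-1} - c_i)\,\Psi(H_i).
\]
Every term on the right is non-negative. For strict positivity, provided $\hat H$ is not identically zero we have $H_n > 0$, so $\Psi(H_n) > 0$, and $c_{n-1} > 0$ since the last slope is finite; this makes the leading term strictly positive.

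The main obstacle I expect is the non-negativity of $\Psi$: the integrand $f$ itself changes sign on $[0,\infty)$ (it is positive for small $u$ and negative for large $u$), so one cannot argue pointwise on $f$. The double-root factorization of $12\,\Psi$ is the key algebraic step that makes the conclusion transparent; everything else is a clean application of the substitution and summation-by-parts machinery, with convexity entering exactly through the monotonicity of the weights $c_i$.
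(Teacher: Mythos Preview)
Your proof is correct and follows essentially the same approach as the paper: both integrate piece by piece, apply summation by parts to express the integral as a non-negative combination (with weights $c_{i-1}-c_i\ge 0$ by convexity) of the antiderivative $\Psi$ (the paper calls it $g$) evaluated at the breakpoints, and then verify $\Psi(u)>0$ for $u>0$. Your algebraic factorization $12\Psi=(y-1)^2(-9y^2+14y+1)$ with $y=e^{-u}$ is a cleaner way to establish this positivity than the paper's calculus argument, and your explicit handling of possible zero initial slopes is slightly more careful than the paper's.
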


Our next step is to show that any convex function $H(x)$ can be approximated by a piecewise linear and convex function $\hat{H}(x)$. An elementary theorem from Real Analysis tells us that for every continuous function $f(x)$ in a closed interval $[a,b]$ and every $\epsilon > 0$, there exists a piecewise linear function $g_{\epsilon}(x)$ such that $\forall x \in [a,b], \left| f(x) - g_{\epsilon}(x) \right| < \epsilon$. We show that the same is true for every continuous and convex function $f(x)$, $\epsilon > 0$, where this time the approximation is by a piecewise linear and convex function $g(x)$\footnote{We don't believe this Lemma to be new.}. The proof can be found in Appendix~\ref{appendix:missing from mhr}.

\begin{lemma}\label{lemma: piecewise linear approximation}
For every function $f(x)$ that is continuous and convex in a closed interval $[a,b]$, $\epsilon > 0$, there exists a convex piecewise linear function $g_{\epsilon}(x)$ such that for all $x \in [ a, b ]$: $ \left| f(x) - g_{\epsilon}(x) \right| < \epsilon $.
\end{lemma}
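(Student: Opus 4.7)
}

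My plan is to define $g_\epsilon$ as the piecewise linear \emph{chordal interpolant} of $f$ on a sufficiently fine partition of $[a,b]$, and then verify separately that this interpolant is (i) convex and (ii) uniformly within $\epsilon$ of $f$. Concretely, since $f$ is continuous on the compact interval $[a,b]$, it is uniformly continuous, so I first pick $\delta>0$ with the property that $|x-y|<\delta$ implies $|f(x)-f(y)|<\epsilon/2$. Then I choose any partition $a = x_0 < x_1 < \cdots < x_N = b$ with mesh smaller than $\delta$, and define $g_\epsilon$ to be the unique continuous function that agrees with $f$ on each $x_i$ and is affine on every interval $[x_i,x_{i+1}]$.

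The convexity of $g_\epsilon$ follows from the convexity of $f$: the slope of $g_\epsilon$ on $[x_i,x_{i+1}]$ equals the secant slope $\bigl(f(x_{i+1})-f(x_i)\bigr)/(x_{i+1}-x_i)$, and a standard consequence of the convexity of $f$ is that secant slopes over adjacent intervals are non-decreasing in $i$. A piecewise linear continuous function whose slopes are non-decreasing from left to right is convex, so $g_\epsilon$ is convex.

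For the sup-norm bound, fix $x\in[x_i,x_{i+1}]$ and write $\lambda = (x_{i+1}-x)/(x_{i+1}-x_i)\in[0,1]$, so $g_\epsilon(x) = \lambda f(x_i) + (1-\lambda)f(x_{i+1})$. By convexity of $f$ one has $f(x)\le g_\epsilon(x)$, while the triangle inequality gives
\[
|g_\epsilon(x)-f(x)| \;\le\; |g_\epsilon(x)-f(x_i)| + |f(x_i)-f(x)| \;=\; (1-\lambda)\,|f(x_{i+1})-f(x_i)| + |f(x_i)-f(x)|.
\]
Since $|x_{i+1}-x_i|<\delta$ and $|x-x_i|<\delta$, each term is at most $\epsilon/2$, yielding $|g_\epsilon(x)-f(x)|<\epsilon$ uniformly on $[a,b]$.

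The only genuine subtlety is the convexity step; everything else is a direct application of uniform continuity. I would make sure to state explicitly the lemma that secant slopes of a convex function are monotone in their left endpoint (a standard fact, but worth citing), since this is what guarantees that chordal interpolation preserves convexity and distinguishes our construction from a generic piecewise linear approximation.
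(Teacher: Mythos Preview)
Your proof is correct and follows essentially the same approach as the paper: both construct the chordal piecewise linear interpolant on a partition with mesh controlled by uniform continuity, and both prove convexity via monotonicity of secant slopes. The only cosmetic difference is in the sup-norm estimate---the paper invokes the Intermediate Value Theorem to write $g_\epsilon(x)=f(y_j)$ for some $y_j$ in the subinterval and applies uniform continuity once, whereas you use the triangle inequality with an $\epsilon/2$ split; both are fine.
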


We are now ready to prove Lemma~\ref{lemma: single sample}:

\begin{proof}[Proof of Lemma~\ref{lemma: single sample}]
First, we show that if there exists an unbounded distribution $X$ that is MHR, such that $\expectation{X_{2:4}} - \expectation{X_{1:1}} = -\delta'$\footnote{We know that this difference exists because order statistics of MHR distributions are finite.}, for some $\delta' > 0$, then there exists a bounded distribution $Y$ that is MHR, with property $\expectation{Y_{2:4}} - \expectation{Y_{1:1}} = - \frac{\delta'}{2} = -\delta$. We get $Y$ by truncating $X$, therefore $\expectation{Y_{2:4}} \leq \expectation{X_{2:4}}$:
\begin{align*}
- \delta' = \expectation{X_{2:4}} - \expectation{X_{1:1}} \geq \expectation{Y_{2:4}} - \expectation{X_{1:1}}.
\end{align*}

$\expectation{X} = \int_{x=0}^{\infty} x f_X(x) dx = \int_{x=0}^{V} x f_X(x) dx + \int_{x=V}^\infty x f_X(x) dx = \expectation{Y} + \int_{x=V}^\infty x f_X(x) dx$. From a bound of~\cite{cai2011extreme}, we know that $\int_{x=V}^\infty x f_X(x) dx \leq \frac{6 \alpha_m}{m}$, where $\alpha_m = \inf\{ x | F_X(x) \geq 1 - \frac{1}{m} \}$. Furthermore, again from a Lemma of~\cite{cai2011extreme}, $k \alpha_m \geq \alpha_{m^k}$, which implies that as $m \rightarrow \infty$, $\int_{x=V}^\infty x f_X(x) dx \rightarrow 0$. Therefore, we can pick an $m^*$ large enough, such that $\int_{x=V}^\infty x f_X(x) dx$ is at most $\frac{\delta'}{2}$. Combining, we get that there exists a truncation point for $X$ such that for the truncated distribution $Y$\footnote{Truncated MHR distributions are MHR.}  $\expectation{Y_{2:4}} - \expectation{Y} \geq -\delta$.

Then there exists a non-negative convex and continuous function $H(y)$ in $[0,m^*]$ such that $$ 
I\left[ H \right] = \int_0^{m^*} 3 e^{-4H(y)} - 8 e^{-3H(y)} + 6 e^{-2H(y)} - e^{-H(y)} dy = - \delta,$$ for some $\delta > 0$. Let $\hat{H}_{\epsilon}(y)$ be a piecewise linear and convex function that $\epsilon$-approximates $H(y)$, as in Lemma~\ref{lemma: piecewise linear approximation}. Since $I\left[ H \right]$ is a bounded integral, we can choose $\epsilon$ small enough such that
\[  \left| I\left[ H \right] - I\left[  \hat{H}_{\epsilon} \right] \right| < \delta. \]
This would imply that $\int_0^{m^*} 3 e^{-4\hat{H}_{\epsilon}(y)} - 8 e^{-3\hat{H}_{\epsilon}(y)} + 6 e^{-2\hat{H}_{\epsilon}(y)} - e^{-\hat{H}_{\epsilon}(y)} dy < 0$, a contradiction to Lemma~\ref{lemma: piecewise linear H}. Therefore, $4$ samples are sufficient for $\expectation{X_{2:4}} \geq \expectation{X_{1:1}}$. 

To see that $4$ samples are also necessary, consider the second order statistic from $3$ samples with cdf $F_{2:3} = 3 F(x)^{2} - 2 F(x)^3$ and expectation $\expectation{X_{2:3}} = \int_0^{\bar{V}} 3e^{-2H(x)} - 2e^{-3H(x)} dx$. An exponential distribution with parameter $\lambda = 1$ gives $\expectation{X_{2:3}} = \frac{5}{6} < 1 = \lambda = \expectation{X} = \expectation{X_{1:1}}$, but is not bounded. We can truncate at some large $\bar{V}$ in a way that neither $\expectation{X_{2:4}}$ nor $\expectation{X}$ change by more than a negligible amount (truncated exponential distributions still have non-decreasing hazard rate). Therefore, $4$ samples are also necessary.
\end{proof}

\subsection{Proof of Lemma~\ref{lem:general order statistics}}\label{subsec: general mhr bound}
\begin{proof}

We have already shown (for $n=1$) that $4n$ samples are necessary in Lemma~\ref{lemma: single sample}. Therefore, it remains to show that $4n$ samples are sufficient. Let $Y = X_{1:n}$ be the maximum of $n$ i.i.d. samples from $F$. Let the $F_y$ be the cdf of $Y$. If $F$ is MHR, then so is $F_y$ (Lemma~\ref{lem:barlow order statistics of MHR}).

Since $Y$ is MHR, by Lemma~\ref{lemma: single sample} we have that $\expectation{Y_{2:4}} \geq \expectation{Y} = \expectation{X_{1:n}}$, where $Y_{2:4}$ is the second order statistic of $4$ samples drawn from $F_y$. Therefore, it suffices to show that $\expectation{X_{2:4n}} \geq \expectation{Y_{2:4}}$: Draw $4n$ samples $X_1,\dots,X_{4n}$ from $F$. Let $Z_1 = \max\limits_{i=1:n} X_i , Z_2 = \max\limits_{i=n+1:2n} X_i , Z_3 = \max\limits_{i=2n+1:3n} X_i$ and $Z_4 = \max\limits_{i=3n+1:4n} X_i$. $Y_{2:4}$ is the second largest of the $Z_i$'s. On the other hand, $X_{2:4n}$ is the second largest of the $X_i$'s, and therefore at least as large as $Y_{2:4}$, for every single outcome.\end{proof}


\subsection{Bounding $\expectation{X_{2:n+1}}$: Proof of Lemma~\ref{lem: oneovere}}\label{subsec: one sample more}


We are going to use the following technical lemma:
\begin{lemma}\label{lem: myerson for n and one}
Let $X$ be a random variable from an MHR distribution $D$. Let $X_{1:n}$ be the largest order statistic of $n$ samples from $D$, and let $\mathbf{X} = \Pi_{i=1}^n X$ denote the product distribution of $n$ agents. Then $\opt{\mathbf{X}} \geq \opt{X_{1:n}}$, i.e. the optimal revenue of $n$ i.i.d. agent from $X$ is larger than the optimal revenue of one agent from $X_{1:n}$.
\end{lemma}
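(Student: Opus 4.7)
The plan is to compare $\opt{\mathbf{X}}$ to a specific, suboptimal mechanism on the $n$-buyer side, and show that this suboptimal mechanism already extracts revenue $\opt{X_{1:n}}$. This reduces the lemma to exhibiting the right mechanism and computing its revenue. I expect the argument to be short, and in fact not to use the MHR assumption at all (it is inherited only because the lemma sits inside the MHR section); but since $X_{1:n}$ is MHR by Lemma~\ref{lem:barlow order statistics of MHR} and hence regular, the single-agent optimal mechanism for $X_{1:n}$ has an especially simple form, which we can exploit.

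First, I would identify the optimal mechanism for the single-agent distribution $X_{1:n}$. For any single-agent revenue maximization problem the optimum is a posted price; let $p^* \in \arg\max_{p} p \cdot \Pr[X_{1:n} \geq p]$. Since $\Pr[X_{1:n} \geq p] = 1 - F(p)^n$, we obtain the identity
\[
\opt{X_{1:n}} \;=\; p^* \bigl( 1 - F(p^*)^n \bigr).
\]

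Next, in the $n$-buyer setting with valuations drawn i.i.d.\ from $X$, consider the (plainly feasible, ex-post IC and ex-post IR) mechanism that posts price $p^*$ to the buyers, awarding the item to any buyer who accepts (for concreteness, in some fixed order, stopping at the first acceptance). The item is sold if and only if some buyer has value at least $p^*$, which happens with probability $\Pr[\max_i X^i \geq p^*] = 1 - F(p^*)^n$. When a sale occurs the revenue is exactly $p^*$, so the expected revenue of this mechanism equals
\[
p^* \bigl(1 - F(p^*)^n\bigr) \;=\; \opt{X_{1:n}}.
\]

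Finally, since $\opt{\mathbf{X}}$ is the supremum of expected revenues over all feasible IC, IR mechanisms, it is at least the revenue of the posted-price mechanism just described, giving $\opt{\mathbf{X}} \geq \opt{X_{1:n}}$, as desired. The only conceptual ingredient is recognizing that a single posted price $p^*$ plays a double role: it achieves $\opt{X_{1:n}}$ in the single-agent problem (where the buyer's value is the maximum of $n$ latent draws) and it achieves the same revenue when the $n$ latent draws are exposed as $n$ separate buyers competing for a single item.
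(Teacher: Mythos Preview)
Your proof is correct and follows essentially the same route as the paper: identify the optimal single-buyer posted price $p^*$ for $X_{1:n}$, then exhibit an $n$-buyer mechanism using that same price whose sale probability is $1 - F(p^*)^n$, hence whose revenue equals $\opt{X_{1:n}}$. The only cosmetic difference is that the paper's $n$-buyer mechanism is a second-price auction with reserve $p^*$ (which dominates your sequential posted price pointwise), whereas you use a sequential posted price; either choice suffices, and your observation that MHR is not actually needed for the argument is accurate.
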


\begin{proof}
The optimal auction $M$ on distribution $\mathbf{X}$ is a second price auction with some reserve $p$. The optimal auction $M'$ on $X_{1:n}$ is a posted price auction, with some posted price $p'$. 
We can calculate the revenue of $M'$ on $X_{1:n}$ as follows: first draw  $v_1,v_2,\dots,v_n$ be $n$ i.i.d. samples from $X$, i.e. a sample $\mathbf{v}$ from $\mathbf{X}$. If the maximum of the $v_i$'s is larger than $p'$, then the revenue of $M'$ on this outcome is $p'$: $\opt{X_{1:n}} = \sum_{ \mathbf{v} \sim \mathbf{D} } \Pr[\mathbf{v}] \cdot p' \cdot \mathds{1}\left[ \max_i v_i \geq p' \right]$.

Let $\hat{M}$ be a second price auction with reserve $p'$. If the maximum of the $v_i$'s is larger than $p'$, then the revenue of $\hat{M}$ on this outcome is the larger of $p'$ and the second largest $v_i$. Therefore $\rev{ \hat{M},\mathbf{X} } \geq \opt{X_{1:n}}$. But, $M$ is the second price auction with the \textit{optimal} posted price, and  therefore $\opt{\mathbf{X}} = \rev{ M, \mathbf{X} }\geq \rev{ \hat{M}, \mathbf{X} }$. The Lemma follows.
\end{proof}

\begin{proof}[Proof of Lemma~\ref{lem: oneovere}]
First, since $X$ is a random variable from an MHR distribution $D$, we can lower bound $\expectation{X_{2:n+1}}$ (the revenue of a second price auction) using the original Theorem of Bulow and Klemperer(\cite{bulow1996auctions}):
\begin{equation}\label{eq:bk}
\expectation{X_{2:n+1}} = \rev{ \text{Vickrey with $n+1$ i.i.d. agents from D} } \geq \opt{ \text{$n$ i.i.d. agents from D} }.
\end{equation}

Second, by Lemma~\ref{lem: myerson for n and one}:
\begin{equation}\label{eq:n better than one}
\opt{ \text{$n$ i.i.d. agents from D} } \geq \opt{ \text{one agent from $X_{1:n}$} }.
\end{equation}

Third, order statistics of MHR distributions have MHR distributions themselves (Lemma~\ref{lem:barlow order statistics of MHR}) , i.e. $X_{1:n}$ has monotone hazard rate. Fourth, from a known result from auction theory (e.g.~\citet{hartline2013mechanism}, Lemma 5.14) we have that the optimal expected revenue from an MHR distribution is an $e$ approximation to the optimal expected surplus. Applying to $X_{1:n}$ gives:
\begin{equation}\label{eq:e approximation}
\opt{X_{1:n}} \geq \frac{1}{e} \expectation{X_{1:n}}.
\end{equation}

Combining Equations~\ref{eq:bk},~\ref{eq:n better than one} and~\ref{eq:e approximation} gives the Lemma.
\end{proof}

\subsection{Bounding $\expectation{X_{2:n}}$: Proof of Lemma~\ref{lem: oneoverthree}}\label{subsec: no samples}

\begin{proof}

We need the following two Lemmas. The first Lemma is proved in~\cite{blog}. 
The proof uses that $F_{2:2}(x) = 1-(1-F(x))^2 = 1-e^{-2H(x)}$ and the fact that $H(x)$ is convex. 
The second Lemma was proved by~\citet{barlow1966inequalities}\footnote{Also see~\citet{szech2011optimal}}. We include the proofs in Appendix~\ref{appendix:missing from mhr}.

\begin{lemma}\label{lem:expected min of two samples}
$\expectation{X_{1:2}} - \expectation{X_{2:2}} \leq \frac{2}{3} \expectation{X_{1:2}}$.
\end{lemma}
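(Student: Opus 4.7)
My plan is to rewrite the inequality in the equivalent form $\expectation{X_{2:2}} \geq \frac{1}{3}\expectation{X_{1:2}}$, and then use the identity $\expectation{X_{1:2}} + \expectation{X_{2:2}} = 2\expectation{X}$ (since the min and max of two samples sum to the sum of the samples) to reduce matters to the single clean inequality
\[
\expectation{X} \;\leq\; 2\,\expectation{X_{2:2}}.
\]
Substituting $\expectation{X_{1:2}} = 2\expectation{X} - \expectation{X_{2:2}}$ into the desired inequality $\expectation{X_{1:2}} \leq 3\expectation{X_{2:2}}$ gives exactly this reduction, and it will be easier to analyze because both sides now involve integrals of only $e^{-H}$ and $e^{-2H}$.

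The next step is to write both expectations in terms of the cumulative hazard $H$. By Claim~\ref{claim: basic MHR }, $\expectation{X} = \int_0^{\bar{V}} e^{-H(x)}\,dx$, and since $F_{2:2}(x) = 1-(1-F(x))^2 = 1-e^{-2H(x)}$, we have $\expectation{X_{2:2}} = \int_0^{\bar{V}} e^{-2H(x)}\,dx$. So the goal becomes
\[
\int_0^{\bar{V}} e^{-H(x)}\,dx \;\leq\; 2\int_0^{\bar{V}} e^{-2H(x)}\,dx.
\]

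The key observation is that $H$ is convex and satisfies $H(0)=0$, so by midpoint convexity applied at $0$ and $x$, $H(x/2) \leq \tfrac{1}{2}H(x)$, i.e. $H(x) \geq 2H(x/2)$ for every $x \geq 0$. This yields the pointwise bound $e^{-H(x)} \leq e^{-2H(x/2)}$. I would then integrate both sides and apply the substitution $u = x/2$, $du = dx/2$ on the right, which transforms $\int_0^{\bar{V}} e^{-2H(x/2)}\,dx$ into $2\int_0^{\bar{V}/2} e^{-2H(u)}\,du \leq 2\int_0^{\bar{V}} e^{-2H(u)}\,du$ (the last step just extends the domain, with the integrand nonnegative). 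Combining these gives the desired inequality.

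I do not expect a real obstacle here: once one spots that the statement reduces to $\expectation{X} \leq 2\expectation{X_{2:2}}$, the convexity of $H$ with $H(0)=0$ gives the scaling estimate $H(x) \geq 2H(x/2)$ essentially immediately, and a change of variables finishes the argument. The only thing to be a bit careful about is the (possibly unbounded) upper limit of integration, which causes no trouble since all the integrals converge for MHR distributions and all integrands are nonnegative, so extending or truncating the domain only helps the inequality.
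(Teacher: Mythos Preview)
Your proposal is correct and follows essentially the same route as the paper: both reduce to the inequality $\expectation{X_{2:2}} \geq \tfrac{1}{2}\expectation{X}$ (equivalently $\expectation{X} \leq 2\expectation{X_{2:2}}$), prove it via the convexity estimate $2H(x)\le H(2x)$ together with a change of variables, and then combine with the identity $\expectation{X_{1:2}}+\expectation{X_{2:2}}=2\expectation{X}$ to obtain $\expectation{X_{2:2}}\ge \tfrac{1}{3}\expectation{X_{1:2}}$. The only difference is the order in which the identity and the convexity bound are invoked.
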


\begin{lemma}\label{lem: expected spacing is decreasing}
$\expectation{X_{1:n}} - \expectation{X_{2:n}} = \expectation{ \frac{1}{h(X_{1:n})} }$, where $h(x) = \frac{f(x)}{1-F(x)}$, and thus is a non-increasing function of $n$ for MHR distributions.
\end{lemma}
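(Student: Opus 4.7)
The plan is to prove the identity first and then derive monotonicity from the MHR hypothesis plus stochastic dominance of $X_{1:n}$ across $n$. Everything is elementary once the correct integral representation is in hand.

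For the identity, I would start from the standard formula $\expectation{Y} = \int_0^\infty \Pr[Y > t]\,dt$ for non-negative $Y$. The c.d.f.\ of the maximum of $n$ i.i.d.\ samples is $F(t)^n$, and the c.d.f.\ of the second-highest order statistic is $F_{2:n}(t) = F(t)^n + n F(t)^{n-1}(1-F(t))$ (probability that all samples are below $t$, plus probability that exactly one exceeds $t$). Subtracting the two survival functions gives
\[
\expectation{X_{1:n}} - \expectation{X_{2:n}} = \int_0^\infty n F(t)^{n-1}(1-F(t))\,dt.
\]
Now I would recognize $n F(t)^{n-1} f(t)$ as the density of $X_{1:n}$ and divide and multiply by $f(t)$:
\[
\int_0^\infty \frac{1-F(t)}{f(t)} \cdot n F(t)^{n-1} f(t)\,dt \;=\; \expectation{\frac{1-F(X_{1:n})}{f(X_{1:n})}} \;=\; \expectation{\frac{1}{h(X_{1:n})}},
\]
which is exactly the claimed identity. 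No MHR is used for this step; it holds for any continuous distribution with positive density.

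For the monotonicity half, I would invoke two standard facts and combine them. First, $X_{1:n}$ stochastically dominates $X_{1:n-1}$ since $\Pr[X_{1:n} \le t] = F(t)^n \le F(t)^{n-1} = \Pr[X_{1:n-1} \le t]$. Second, under the MHR assumption the function $h$ is non-decreasing, so $1/h$ is non-increasing. Applying the basic fact that $Y \succeq_{\text{st}} Z$ implies $\expectation{g(Y)} \le \expectation{g(Z)}$ for any non-increasing $g$ (with $g = 1/h$, $Y = X_{1:n}$, $Z = X_{1:n-1}$) yields
\[
\expectation{\tfrac{1}{h(X_{1:n})}} \;\le\; \expectation{\tfrac{1}{h(X_{1:n-1})}},
\]
which is the desired monotonicity in $n$.

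There is no serious obstacle here; the only thing to be careful about is the derivation of $F_{2:n}$ and the check that both expectations in step one are finite (which is immediate for MHR distributions since their order statistics have finite means, by Lemma~\ref{lem:barlow order statistics of MHR} and standard MHR tail bounds). If one wanted to avoid even the mild regularity needed for the integration by parts implicit in $\expectation{Y} = \int \Pr[Y > t]\,dt$, one could instead integrate the joint density $n(n-1) f(x_1) f(x_2) F(x_2)^{n-2}$ of $(X_{1:n}, X_{2:n})$ on $\{x_2 < x_1\}$ directly, but the survival-function route above is cleaner and arrives at the identity in two lines.
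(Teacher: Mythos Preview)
Your proof is correct and follows essentially the same route as the paper: both compute $\expectation{X_{1:n}}-\expectation{X_{2:n}}=\int_0^\infty \bigl(F_{2:n}(t)-F_{1:n}(t)\bigr)\,dt=\int_0^\infty nF(t)^{n-1}(1-F(t))\,dt$ and then rewrite the integrand as $f_{1:n}(t)/h(t)$. You additionally make the monotonicity step explicit via stochastic dominance of $X_{1:n}$ in $n$ combined with $1/h$ non-increasing, which the paper leaves to the reader.
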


Given the two Lemmas above, we get the desired bound on $\expectation{X_{2:n}}$ as follows:
\[ \expectation{X_{1:n}} - \expectation{X_{2:n}} \leq \expectation{X_{1:2}} - \expectation{X_{2:2}} \leq \frac{2}{3} \expectation{X_{1:2}} \leq \frac{2}{3} \expectation{X_{1:n}} \qedhere \] 
\end{proof}

\section{Revenue Non-Monotonicity.}\label{sec:non monotonicity}

\subsection{Correlation Increases Revenue}

Let the first stage and second stage distributions be:

\vspace{-4mm}

\noindent\begin{minipage}[t]{.5\textwidth}
\begin{gather*}
X_1 = 
\begin{cases} 
2^{n+1} &\mbox{w.p. }  \frac{1}{2} = 2^{-1} \\
2^{n+2} & \mbox{w.p. } \frac{1}{4} = 2^{-2} \\
2^{n+3} & \mbox{w.p. } 2^{-3} \\
\dots \\
2^{n+(n-1)} = 2^{2n-1} & \mbox{w.p. } 2^{-(n-1)} \\
2^{2n} & \mbox{w.p. } 2^{-n} \\
2^{2n+1} & \mbox{w.p. } 2^{-n} \\
\end{cases} 
\end{gather*}
\end{minipage}%
\begin{minipage}[t]{.5\textwidth}
\begin{gather*}
X_2 = 
\begin{cases} 
2^{n+1} &\mbox{w.p. }  \frac{1}{2} = 2^{-1} \\
2^{n} & \mbox{w.p. } \frac{1}{4} = 2^{-2} \\
2^{n-1} & \mbox{w.p. } 2^{-3} \\
\dots \\
2^{n+1 - (n-1)} = 8 & \mbox{w.p. } 2^{-(n-1)} \\
2^{n+1 - (n-1)} = 4 & \mbox{w.p. } 2^{-n} \\
2^{n+1 - n} = 2 & \mbox{w.p. } 2^{-n} \\
\end{cases} 
\end{gather*}
\end{minipage}%

\noindent$X_1$ is an equal revenue distribution and therefore $\myerson{ X_1 } = 2^{n+1}$ (from setting any price). Also:

\vspace{-4mm}

\begin{talign*}
\expectation{X_2} &= 2^{n+1} \cdot \frac{1}{2} + 2^n \cdot \frac{1}{4} + 2^{n-1} \cdot 2^{-3} + \dots + 2^3 \cdot 2^{-(n-1)} + 2^2 \cdot 2^{-n} + 2 \cdot 2^{-n} \\
&= 2^n + 2^{n-2} + 2^{n-4} + 2^{n-6} + \dots + 2^{-n+4} + 2^{-n+2} + 2^{-n+1} \\
&= 2^{-n+1} + \sum_{i=1}^n 2^{n+2-2i} \\
&= 2^{-n+1} + \frac{1}{3} \cdot 2^{-n+2} \cdot (2^{2n} - 1) \\
&= \frac{2}{3} \left( 2^{-n} + 2^{n+1} \right)
\end{talign*}

Therefore, the revenue of the optimal auction for independent $X_1$ and $X_2$ is at most:
\[ \textstyle
\myerson{ X_1 } + \expectation{X_2} = 2^{n+1} + \frac{2}{3} \left( 2^{-n} + 2^{n+1} \right) = 2^{n+1} + \frac{4}{3} 2^{n} + \frac{2}{3} 2^{-n}.
\]

Next we consider correlated $X_1$ and $X_2$, while keeping the marginals for each stage the same, and show an auction with revenue strictly larger than $2^{n+1} + \frac{4}{3} 2^{n} + \frac{2}{3} 2^{-n}$. The correlation is as follows:
If the value on the first stage is $2^{n+1}$ then the value on the second stage is $2^{n+1}$ with probability $1$. Otherwise, if the first stage value is $2^{n+i}$ for $i=2,\dots,n+1$, the value on the second stage is $2^{n+2-i}$ with probability $1$ ($2^{n+2}$ maps to $2^n$, $2^{n+3}$ maps to $2^{n-1}$ and so on).

Consider the auction that in the first stage offers two options: (1) Pay $2^n$ and get the first day item with probability $\frac{1}{2}$. The second stage item will cost $2^{n+1}$; (2) Pay $2^{n+2}$ and get the first day item with probability $1$. The second stage item will be free.

The buyer with first stage value $2^{n+1}$ will buy the first option. Everyone else will buy the second option.
The total revenue is $\frac{1}{2} \cdot \left( 2^n + 2^{n+1} \right) + \frac{1}{2} \cdot 2^{n+2} = 2^{n-1} + 2^n + 2^{n+1}$,
which is larger than $2^{n+1} + \frac{4}{3} 2^{n} + \frac{2}{3} 2^{-n}$ for all $n \geq 2$.

\subsection{Correlation Decreases Revenue}

Let the first stage and second stage distributions be:

\vspace{-5mm}

\noindent\begin{minipage}[t]{.5\textwidth}
\begin{gather*}
X_1 = 
\begin{cases} 
1  &\mbox{w.p. }  \frac{1}{2} = 2^{-1} \\
2 & \mbox{w.p. } \frac{1}{4} = 2^{-2} \\
3 & \mbox{w.p. } 2^{-3} \\
\dots \\
n-1 & \mbox{w.p. } 2^{-(n-1)} \\
n & \mbox{w.p. } 2^{-n} \\
n+1 & \mbox{w.p. } 2^{-n} \\
\end{cases} 
\end{gather*}
\end{minipage}%
\begin{minipage}[t]{.5\textwidth}
\begin{gather*}
X_2 = 
\begin{cases} 
2 &\mbox{w.p. }  \frac{1}{2} = 2^{-1}  \\
2^{2} & \mbox{w.p. } \frac{1}{4} = 2^{-2} \\
2^{3} & \mbox{w.p. } 2^{-3} \\
\dots \\
2^{n-1} & \mbox{w.p. } 2^{-(n-1)} \\
2^{n} & \mbox{w.p. } 2^{-n} \\
2^{n+1} & \mbox{w.p. } 2^{-n} \\
\end{cases} 
\end{gather*}
\end{minipage}%

If $X_1$ and $X_2$ are independent, the optimal auction extracts revenue equal to $\expectation{X_1} + \myerson{X_2}$. $X_2$ is an equal revenue distribution, so, as we've already seen, $\myerson{X_2} = 2$ and is achieved by setting any price $p$. Another interesting property of $X_2$ is that even though setting different prices $p$ has the same revenue for the seller, the result is different expected utilities for the buyer. For example, setting $p=2$ gives expected utility for the buyer $\expectation{X_2} - 2 = n$, setting price $p = 4$ gives expected utility $\sum_{v \geq 4} Pr[ X_2 = v ] \left( v - 4 \right) = n-1$, setting price $p=2^3$ gives expected utility $n-2$, and so on. The following auction is truthful (in a perfect Bayesian equilibrium) and ex-post IR: On the first stage the price is equal to the reported value $\hat{v}$. For a reported value $\hat{v}$ the second stage item will have a posted price $p$ that results in expected utility $\hat{v} - 1$.
Observe that reporting value $\hat{v}$ gives utility $v - \hat{v}$ for the first stage, and utility $\hat{v} - 1$ for the second stage, i.e. overall utility $v-1$. The revenue of this auction is $\expectation{X_1} + \myerson{X_2} > 3$.

Now, we keep the marginal distributions of $X_1$ and $X_2$ the same, but make them correlated in the following way: if $X_1 = i$, then $X_2 = 2^i$. We show that the expected revenue of the optimal auction in this case is exactly $3$. Showing it is at least $3$ is straightforward: consider the auction that posts a price of $1$ on the first stage and a price of $2$ on the second stage. In order to show that the revenue is at most $3$ we apply our duality framework for the case of correlated stages.

At the first stage we push the Myerson type flow (Claim~\ref{clm:e_1 plus m_2}, Section~\ref{sec: single agent upper bound}). Namely, each node $v_1$ pushes all its incoming flows to its immediate successor $\downnode{v}{1} = v_1-1$, i.e. $\lambda_1(v_1, \downnode{v}{1}) = 1-F_1(\downnode{v}{1})$. This flow makes the expected virtual value at stage one equal $\myerson{X_1}=1$ by an argument in Claim~\ref{clm:e_1 plus m_2} in Section~\ref{sec: single agent upper bound}. $\kappa_1(v_1) = 0$, but for correlated stages it can still be the case that a node $(v_1,v_2)$ in the second stage (a node $v_2$ with history/parent $v_1$) can still have positive incoming flow. In fact, for our choice of $\lambda_1$, and the specific correlation of our example, the incoming ``default'' flow to a node $v_2 = 2^{\upnode{v}{1}}$ under parent node $v_1$ is equal to $\lambda_1(\upnode{v}{1},v_1)$, the amount of flow $\upnode{v}{1}$ pushes to $v_1$ in the first stage (See the flow conservation constraints for correlated stages in Appendix~\ref{app: correlated upper bound}). To satisfy the flow conservation constraints we push $\lambda_1(\upnode{v}{1},v_1) = 1-F_1(v_1)$ unit of flow from node $\upnode{v}{2} = 2^{\upnode{v}{1}}$ to node $v_2$, under parent node $v_1$, i.e. $\lambda_2 ( v_1, 2^{\upnode{v}{1}} , 2^{v_1} ) = 1-F_1(v_1) = 1 - F_2(2^{v_1})$. 
Since $f(v_2|v_1) = 0$, unless $v_2 = 2^{v_1}$, the only second stage node that contributes to the objective is the one with the incoming flow (and lower virtual value $\Phi_2$).
\[\mathbb{E}_{v_2}[\Phi_2(v_1, v_2)|v_1] = \Phi_2(v_1, 2^{v_1}) = 2^{v_1} - \frac{1-F(v_1)}{f(v_1)}(2^{v_1+1} - 2^{v_1}) = 2^{v_1} - (2^{v_1+1} - 2^{v_1}) = 0, \]
for all $v_1$, except for $\bar{v}_1 = n+1$ (the highest value in the support), whose corresponding second stage value $\bar{v}_2 = 2^{n+1}$ does not have incoming flows; $\upnode{\bar{v}}{2}$ is not defined. In this case $\Phi_2(\bar{v}_1, \bar{v}_2) = \bar{v}_2$. Thus, the expected virtual value of the second stage comes exclusively from the virtual value of $\bar{v}_2$:

\[ \textstyle \mathbb{E}_{v_1,v_2}[\Phi_2(v_1, v_2)] = f(\bar{v}_1)\mathbb{E}_{v_2}[\Phi_2(\bar{v}_1, v_2)|\bar{v}_2] = f(\bar{v}_1)f(\bar{v}_2|\bar{v}_1)\Phi_2(\bar{v}_1, \bar{v}_2) = 2^{-n}\times2^{n+1} = 2. \]

The primal objective of revenue is upper bounded by $\max_{x\in\mathcal{F},p}\mathcal{L}(\lambda, \kappa, x, p)$, which in our case is at most $\mathbb{E}_{v_1}[\Phi_1(v_1)]+\mathbb{E}_{v_1,v_2}[\Phi_2(v_1, v_2)] = 1+2=3$, since that is the maximum dual value under the flow described in the previous paragraph.


\bibliographystyle{plainnat}
\bibliography{refs}

\begin{appendix}

\section{Revenue Upper Bounds for $n$ Buyers and $\days$ Independent Stages}\label{sec: proof of upper bound}

In this Section we prove Lemma~\ref{lem:upper bound on opt}, using an approach similar to the previous Section.

\subsection{The Partial Lagrangian.}
The optimal dynamic mechanism needs to satisfy the following two types of constraints:

\begin{itemize}
\item Periodic incentive compatibility (PIC). At any stage $k$ for every buyer $i$, assuming truthfulness in all future stages, revealing the true value $v^i_k$ maximizes the buyer's expected utility among all possible values $\hat{v}^i_k$. For stage $k$ and buyer $i$, this constraint can be expressed as: for all $\mathbf{v}_{\leq k-1}$ in $\mathbf{V_{k-1}}$:
\begin{align*}
&\mathbb{E}_{\mathbf{v}^{-i}_k}\left[ v^i_kx^i_k(\mathbf{v}^{-i}_{\leq k},v^{i}_{\leq k})-p^i_k(\mathbf{v}^{-i}_{\leq k},v^{i}_{\leq k}) + \mathbb{E}_{\mathbf{v}^{-i}_{k+1:m}, v^i_{k+1:m}}\left[\sum_{j>k}v^i_jx^i_j(\mathbf{v}^{-i}_{\leq j},v^{i}_{\leq j})- p^i_j(\mathbf{v}^{-i}_{\leq j},v^{i}_{\leq j})\right]\right]  \\
&~~~~\geq \mathbb{E}_{\mathbf{v}^{-i}_k} \left[ 
\vphantom{\mathbb{E}_{\mathbf{v}^{-i}_{k+1:m}, v^i_{k+1:m}}\left[\sum_{j>k}v^i_jx^i_j(\mathbf{v}^{-i}_{\leq j},v^{i}_{< k}, \hat{v}^i_k, v^{i}_{k+1:j})- p^i_j(\mathbf{v}^{-i}_{\leq j},v^{i}_{< k}, \hat{v}^i_k, v^{i}_{k+1:j})\right]}
v^i_kx^i_k(\mathbf{v}^{-i}_{\leq k},v^{i}_{< k}, \hat{v}^i_k)-p^i_k(\mathbf{v}^{-i}_{\leq k},v^{i}_{< k}, \hat{v}^i_k)+ \right. \\
&~~~~~~~~~~~~~~~~~~~~~~~~~\left. \mathbb{E}_{\mathbf{v}^{-i}_{k+1:m}, v^i_{k+1:m}}\left[\sum_{j>k}v^i_jx^i_j(\mathbf{v}^{-i}_{\leq j},v^{i}_{< k}, \hat{v}^i_k, v^{i}_{k+1:j})- p^i_j(\mathbf{v}^{-i}_{\leq j},v^{i}_{< k}, \hat{v}^i_k, v^{i}_{k+1:j})\right] \right]  
\end{align*}

\item Ex-post individual rationality. At any stage $k$, the stage utility of every buyer $i$ is non-negative regardless of the reports from previous stages. For all $\mathbf{v}_{\leq k}$ in $\mathbf{V}_{\leq k}$:
\begin{equation}\label{eq:ex_pos_ir}
v^i_kx^i_k(\mathbf{v}_{\leq k})-p^i_k(\mathbf{v}_{\leq k}) \geq 0 
\end{equation}
As noted in \citet{ashlagi2016}, the following constraints are equivalent to Constrains~\ref{eq:ex_pos_ir}: for all $\mathbf{v}_{\leq k-1}$ in $\mathbf{V}_{k-1}$, and $v^i_k$ in $V_k$ 
\begin{equation}\label{eq:exp_ex_pos_ir}
\mathbb{E}_{\mathbf{v}^{-i}_k}[v^i_kx^i_k(\mathbf{v}_{\leq k})-p^i_k(\mathbf{v}_{\leq k})]  \geq 0
\end{equation}
Any $x^i_k, p^i_k$ satisfying Constraint~\ref{eq:ex_pos_ir} naturally satisfy Constraint~\ref{eq:exp_ex_pos_ir}; on the other hand, for any auction $M$ with $x^i_k, p^i_k$ satisfying \ref{eq:exp_ex_pos_ir}, there exists another auction $\hat{M}$ defined as:
\begin{align*}
\hat{p}^i_k(\mathbf{v}_{\leq k}) = \frac{\mathbb{E}_{\mathbf{v}^{-i}_k}[p^i_k(\mathbf{v}_{\leq k})]}{\mathbb{E}_{\mathbf{v}^{-i}_k}[x^i_k(\mathbf{v}_{\leq k})]} 
&& 
\hat{x}^i_k(\mathbf{v}_{\leq k}) = 
\begin{cases} 1 \quad &\text{if item $k$ is allocated to buyer $i$ in $M$} \\
0 &\text{o.w.}
\end{cases}
\end{align*}
It's easy to verify that $\hat{x}^i_k, \hat{p}^i_k$ satisfy Constraint~\ref{eq:ex_pos_ir} and $\rev{\hat{M},\mathcal{X},n,\days} = \rev{M,\mathcal{X},n,\days}$. We use Constraints~\ref{eq:exp_ex_pos_ir} for the primal program.
\end{itemize}



Thereby, we obtain the following primal program:

\begin{align*}
  &\qquad\textrm{max }  \sum_{k=1}^{\days}\sum_{i=1}^n\sum_{\mathbf{v}_{\leq k}} f(\mathbf{v}_{\leq k})p^i_k(\mathbf{v}_{\leq k})  \notag\\
  &\textrm{subject to: } \notag \\
   &\forall i=1,\dots, n, \forall k=1, \dots \days, \forall \mathbf{v}_{\leq k-1}\in \mathbf{V}_{k-1},v^i_k, \hat{v}^i_k\in V_k: \\
    &\textstyle \sum_{\mathbf{v}^{-i}_k} f(\mathbf{v}^{-i}_k)\left( v^i_kx^i_k(\mathbf{v}^{-i}_{\leq k},v^{i}_{\leq k})-p^i_k(\mathbf{v}^{-i}_{\leq k},v^{i}_{\leq k}) + \sum_{\mathbf{v}_{k+1:\days}}f(\mathbf{v}_{k+1:\days}) \left( \sum_{j>k}v^i_jx^i_j(\mathbf{v}^{-i}_{\leq j},v^{i}_{\leq j}) -p^i_j(\mathbf{v}^{-i}_{\leq j},v^{i}_{\leq j}) \right) \right) \geq \\
   &\textstyle\quad \sum_{\mathbf{v}^{-i}_k} f(\mathbf{v}^{-i}_k)\left( 
   \vphantom{\sum_{\mathbf{v}_{k+1:\days}}f(\mathbf{v}_{k+1:\days}) \left( \sum_{j>k}v^i_jx^i_j(\mathbf{v}^{-i}_{\leq j},v^{i}_{< k}, \hat{v}^i_k, v^{i}_{k+1:j})- p^i_j(\mathbf{v}^{-i}_{\leq j},v^{i}_{< k}, \hat{v}^i_k, v^{i}_{k+1:j}) \right)}
   v^i_kx^i_k(\mathbf{v}^{-i}_{\leq k},v^{i}_{< k}, \hat{v}^i_k)-p^i_k(\mathbf{v}^{-i}_{\leq k},v^{i}_{< k}, \hat{v}^i_k) 
   \right.\\
   &\textstyle\left. ~~~~~~~~~~~~+ \sum_{\mathbf{v}_{k+1:\days}}f(\mathbf{v}_{k+1:\days}) \left( \sum_{j>k}v^i_jx^i_j(\mathbf{v}^{-i}_{\leq j},v^{i}_{< k}, \hat{v}^i_k, v^{i}_{k+1:j})- p^i_j(\mathbf{v}^{-i}_{\leq j},v^{i}_{< k}, \hat{v}^i_k, v^{i}_{k+1:j}) \right) \right) \\
   &\forall k=1,\dots,\days, \forall \mathbf{v}_{\leq k-1}\in \mathbf{V}_{k-1},v^i_k\in V_k: \\
   &\qquad \sum_{\mathbf{v}^{-i}_k} f(\mathbf{v}^{-i}_k)\left( v^i_kx^i_k(\mathbf{v}_{\leq k})-p^i_k(\mathbf{v}_{\leq k}) \right) \geq 0  \\
   & \forall i=1,\dots, n \forall k=1,\dots,\days, \forall \mathbf{v}_{\leq k}\in\mathbf{V}_k: \qquad x^i_k(\mathbf{v}_{\leq k}) \in [0,1] \\
   &\forall k=1,\dots,\days,\forall \mathbf{v}_{\leq k}\in\mathbf{V}_k: \qquad \sum_{i=1}^n x^i_k(\mathbf{v}_{\leq k}) \leq 1
  \end{align*}

We introduce a Lagrangian multiplier $\lambda_k(\mathbf{v}_{<k},v^i_k, \hat{v}^i_k)$ for each periodic IC constraint and $\kappa_k(\mathbf{v}_{<k},v^i_k)$ for each ex-post IR constraint. 
The partial Lagrangian, after re-grouping the terms, is the following:

\begin{align*}
\mathcal{L}(\lambda, \kappa, x, p) =& \sum_{i=1}^n \sum_{k=1}^{\days} \sum_{\mathbf{v}_{\leq k}} p^i_k(\mathbf{v}_{\leq t})\left(f(\mathbf{v}_{\leq k}) - f(\mathbf{v}^{-i}_k)\kappa_k(\mathbf{v}_{< k}, v^i_k) + f(\mathbf{v}^{-i}_k)\sum_{\hat{v}^i_k}(\lambda_k(\mathbf{v}_{< k}, \hat{v}^i_k, v^i_k) \right.\\
&\left.\quad -\lambda_k(\mathbf{v}_{< k}, v^i_k, \hat{v}^i_k)) + \sum_{j=1}^{k-1}\sum_{\hat{v}^i_j}f(\mathbf{v}^{-i}_j, \mathbf{v}_{j+1:k})(\lambda_j(\mathbf{v}_{< j}, \hat{v}^i_j, v^i_j)-\lambda_j(\mathbf{v}_{<j}, v^i_j, \hat{v}^i_j) ) \right) + \\
&\sum_{i=1}^n \sum_{k=1}^{\days} \sum_{\mathbf{v}_{\leq k}} x^i_k(\mathbf{v}_{\leq k})\left(v^i_kf(\mathbf{v}^{-i}_k)\kappa_k(\mathbf{v}_{< k}, v^i_k) + f(\mathbf{v}^{-i}_k)\sum_{\hat{v}^i_k}(v^i_k\lambda_k(\mathbf{v}_{<k}, v^i_k, \hat{v}^i_k) \right.\\
&\left.\quad - \hat{v}^i_k\lambda_k(\mathbf{v}_{<k}, \hat{v}^i_k, v^i_k)) + \sum_{j=1}^{k-1}\sum_{\hat{v}^i_j}f(\mathbf{v}^{-i}_j, \mathbf{v}_{j+1:k})(v^i_k\lambda_j(\mathbf{v}_{< j}, v^i_j, \hat{v}^i_j) - v^i_k\lambda_j(\mathbf{v}_{< j}, \hat{v}^i_j, v^i_j)) \right) \\
\end{align*}

Duality theory tells us for any $\lambda,\kappa\geq 0$, the primal objective is upper bounded by $\max_{x\in\mathcal{F},p}\mathcal{L}(\lambda, \kappa, x, p)$, where $\mathcal{F}$ is the set of possible allocations: 
\begin{equation}\label{eq: nm opt dual upper bound}
\opt{\mathcal{X},n,\days} \leq \max_{x\in\mathcal{F},p}\mathcal{L}(\lambda, \kappa, x, p)
\end{equation}
upper bounds the primal objective. If we want to find non-trivial upper bounds, we need to ensure that $\max_{x\in\mathcal{F},p}\mathcal{L}(\lambda, \kappa, x, p)$ is bounded. This requires that the free variables $p^i_k(\mathbf{v}_{\leq k})$ have multipliers equal to zero. Therefore: for stage $k$, buyer $i$ and reports $\mathbf{v}_{\leq k}$ in $\mathbf{V}_k$
\begin{multline}\label{eq:original flow k}
f(\mathbf{v}_{\leq k}) - f(\mathbf{v}^{-i}_k)\kappa_k(\mathbf{v}_{< k}, v^i_k) + f(\mathbf{v}^{-i}_k)\sum_{\hat{v}^i_k}(\lambda_k(\mathbf{v}_{< k}, \hat{v}^i_k, v^i_k) -\lambda_k(\mathbf{v}_{< k}, v^i_k, \hat{v}^i_k)) \\
+ \sum_{j=1}^{k-1}\sum_{\hat{v}^i_j}f(\mathbf{v}^{-i}_j, \mathbf{v}_{j+1:k})(\lambda_j(\mathbf{v}_{< j}, \hat{v}^i_j, v^i_j)-\lambda_j(\mathbf{v}_{<j}, v^i_j, \hat{v}^i_j) ) = 0 
\end{multline}

Noticing the recursive structure of Equation~\ref{eq:original flow k}, we take the constraint for stage $k-1$, buyer $i$ and reports $\mathbf{v}_{\leq k-1}$ in $\mathbf{V}_{k-1}$:

\begin{multline}\label{eq:flow_k-1}
f(\mathbf{v}_{\leq k-1}) - f(\mathbf{v}^{-i}_{k-1})\kappa_{k-1}(\mathbf{v}_{< k-1}, v^i_{k-1}) + f(\mathbf{v}^{-i}_{k-1})\sum_{\hat{v}^i_{k-1}}(\lambda_{k-1}(\mathbf{v}_{< k-1}, \hat{v}^i_{k-1}, v^i_{k-1}) \\
-\lambda_{k-1}(\mathbf{v}_{< k-1}, v^i_{k-1}, \hat{v}^i_{k-1})) + \sum_{j=1}^{k-2}\sum_{\hat{v}^i_j}f(\mathbf{v}^{-i}_j, \mathbf{v}_{j+1:k-1})(\lambda_j(\mathbf{v}_{< j}, \hat{v}^i_j, v^i_j)-\lambda_j(\mathbf{v}_{<j}, v^i_j, \hat{v}^i_j) ) =0
\end{multline}

Simplify the LHS of Equation~\ref{eq:original flow k} with Equation~\ref{eq:flow_k-1}:
\begin{align*}
0&=f(\mathbf{v}_{\leq k}) - f(\mathbf{v}^{-i}_k)\kappa(\mathbf{v}_{< k}, v^i_k) + f(\mathbf{v}^{-i}_k)\sum_{\hat{v}^i_k}(\lambda_k(\mathbf{v}_{< k}, \hat{v}^i_k, v^i_k) -\lambda_k(\mathbf{v}_{< k}, v^i_k, \hat{v}^i_k)) \\
&\quad + \sum_{j=1}^{k-1}\sum_{\hat{v}^i_j}f(\mathbf{v}^{-i}_j, \mathbf{v}_{j+1:k})(\lambda_j(\mathbf{v}_{< j}, \hat{v}^i_j, v^i_j)-\lambda_j(\mathbf{v}_{<j}, v^i_j, \hat{v}^i_j) ) \\
&=^{\text{Eq.~}\ref{eq:flow_k-1}} f(\mathbf{v}_{\leq k}) - f(\mathbf{v}^{-i}_k)\kappa(\mathbf{v}_{< k}, v^i_k) + f(\mathbf{v}^{-i}_k)\sum_{\hat{v}^i_k}(\lambda_k(\mathbf{v}_{< k}, \hat{v}^i_k, v^i_k) -\lambda_k(\mathbf{v}_{< k}, v^i_k, \hat{v}^i_k))  \\
&\quad + f(\mathbf{v}_k)(f(\mathbf{v}^{-i}_{k-1})\kappa_{k-1}(\mathbf{v}_{< k-1}, v^i_{k-1})  - f(\mathbf{v}_{\leq k-1})) \\
&= f(\mathbf{v}^{-i}_{k-1}, \mathbf{v}_k)\kappa_{k-1}(\mathbf{v}_{<k-1},v^i_{k-1})- f(\mathbf{v}^{-i}_k)\kappa_k(\mathbf{v}_{< k}, v^i_k)  \\
&\quad  + f(\mathbf{v}^{-i}_k)\sum_{\hat{v}^i_k}(\lambda_k(\mathbf{v}_{< k}, \hat{v}^i_k, v^i_k) -\lambda_k(\mathbf{v}_{< k}, v^i_k, \hat{v}^i_k)) = 0
\end{align*}

Since the last line equals zero, we obtain:
\begin{multline}\label{eq:flow_k}
f(\mathbf{v}^{-i}_{k-1}, \mathbf{v}_k)\kappa_{k-1}(\mathbf{v}_{<k-1},v^i_{k-1}) - f(\mathbf{v}^{-i}_k)\kappa_k(\mathbf{v}_{< k}, v^i_k)= \\
f(\mathbf{v}^{-i}_k)\sum_{\hat{v}^i_k}(\lambda_k(\mathbf{v}_{< k}, v^i_k, \hat{v}^i_k) - \lambda_k(\mathbf{v}_{< k}, \hat{v}^i_k, v^i_k)) 
\end{multline}

Recall that we call solutions of $\lambda,\kappa\geq 0$ that satisfy Equation~\ref{eq:original flow k} \textbf{useful} solutions. Then, given a set of useful $\lambda,\kappa$ we can simplify $\mathcal{L}(\lambda,\kappa,p,x)$ using Equations~\ref{eq:original flow k},~\ref{eq:flow_k-1} and~\ref{eq:flow_k}.
Gather all the terms in $\mathcal{L}(\lambda, \kappa, x, p)$ that involve $x^i_k$ for $k>1$ and simplify the terms using Equation~\ref{eq:flow_k}:
\begin{align*}
&\sum_{i=1}^n\sum_{k=2}^{\days}\sum_{\mathbf{v}_{\leq k}} x^i_k(\mathbf{v}_{\leq k})\left(v^i_kf(\mathbf{v}^{-i}_k)\kappa_k(\mathbf{v}_{< k}, v^i_k) + f(\mathbf{v}^{-i}_k)\sum_{\hat{v}^i_k}(v^i_k\lambda_k(\mathbf{v}_{<k}, v^i_k, \hat{v}^i_k) - \hat{v}^i_k\lambda_k(\mathbf{v}_{<k}, \hat{v}^i_k, v^i_k))\right. \\
&\left.\quad ~~~~~~~~~~~~~~~~~~~~~~~~~~ + v^i_k\sum_{j=1}^{k-1}\sum_{\hat{v}^i_j}f(\mathbf{v}^{-i}_j, \mathbf{v}_{j+1:k})(\lambda_j(\mathbf{v}_{< j}, v^i_j, \hat{v}^i_j) - \lambda_j(\mathbf{v}_{< j}, \hat{v}^i_j, v^i_j)) \right) \\
&=^{\text{Eq.~}\ref{eq:flow_k}} \sum_{i=1}^n\sum_{k=2}^{\days}\sum_{\mathbf{v}_{\leq k}} x^i_k( \mathbf{v}_{\leq k}) \left( v^i_kf(\mathbf{v}^{-i}_k)\kappa_k(\mathbf{v}_{< k}, v^i_k) + f(\mathbf{v}^{-i}_k)\sum_{\hat{v}^i_k}(v^i_k\lambda_k(\mathbf{v}_{<k}, v^i_k, \hat{v}^i_k) - \hat{v}^i_k\lambda_k(\mathbf{v}_{<k}, \hat{v}^i_k, v^i_k)) \right. \\
&\left.\quad ~~~~~~~~~~~~~~~~~~~~~~~~~~
\vphantom{v^i_kf(\mathbf{v}^{-i}_k)\kappa_k(\mathbf{v}_{< k}, v^i_k) + f(\mathbf{v}^{-i}_k)\sum_{\hat{v}^i_k}(v^i_k\lambda_k(\mathbf{v}_{<k}, v^i_k, \hat{v}^i_k) - \hat{v}^i_k\lambda_k(\mathbf{v}_{<k}, \hat{v}^i_k, v^i_k))}
- v^i_kf(\mathbf{v}_k)(f(\mathbf{v}^{-i}_{k-1})\kappa_{k-1}(\mathbf{v}_{< k-1}, v^i_{k-1})  - f(\mathbf{v}_{\leq k-1})) \right) \\
&= \sum_{i=1}^n\sum_{k=2}^{\days}\sum_{\mathbf{v}_{\leq k}} x^i_k(\mathbf{v}_{\leq k})\left(v^i_kf(\mathbf{v}_{\leq k-1})f(\mathbf{v}_k)- f(\mathbf{v}^{-i}_k)\sum_{\hat{v}^i_k}(\hat{v}^i_k - v^i_k)\lambda_k(\mathbf{v}_{<k},\hat{v}^i_k, v^i_k)\right) \\
&= \sum_{i=1}^n\sum_{k=2}^{\days}\sum_{\mathbf{v}_{\leq k}} x^i_kf(\mathbf{v}_{\leq k})(\mathbf{v}_{\leq k})\left(v^i_k- \frac{1}{f(\mathbf{v}_{\leq k-1},v^i_k)}\sum_{\hat{v}^i_k}(\hat{v}^i_k - v^i_k)\lambda_k(\mathbf{v}_{<k},\hat{v}^i_k, v^i_k)\right) \\
&= \sum_{i=1}^n\sum_{k=2}^{\days}\sum_{\mathbf{v}_{\leq k}} x^i_kf(\mathbf{v}_{\leq k})\Phi_k(\mathbf{v}_{\leq k})
\end{align*}

where $\Phi^i_k(\mathbf{v}_{\leq k}) = v^i_k- \frac{1}{f(\mathbf{v}_{\leq k-1},v^i_k)}\sum_{\hat{v}^i_k}(\hat{v}^i_k - v^i_k)\lambda_k(\mathbf{v}_{<k},\hat{v}^i_k, v^i_k)$.
Finally, simplify the terms involving $x^i_1(\mathbf{v}_1)$ using Equation~\ref{eq:original flow k} evaluated at $k = 1$:

\begin{align*}
&\sum_{i=1}^n\sum_{\mathbf{v}_1} x^i_1(\mathbf{v}_1)\left(v^i_1f(\mathbf{v}^{-i}_1)\kappa_1(v^i_1) + f(\mathbf{v}^{-i}_1)\sum_{\hat{v}^i_1}(v^i_1\lambda_1(v^i_1,\hat{v}^i_1) -\hat{v}^i_1\lambda_1(\hat{v}^i_1,v^i_1))\right) \\
&=^{\text{Eq.~}\ref{eq:original flow k}}\sum_{i=1}^n\sum_{\mathbf{v}_1} x^i_1(\mathbf{v}_1)\left(v^i_1f(\mathbf{v}_1) - f(\mathbf{v}^{-i}_1)\sum_{\hat{v}^i_1}(\hat{v}^i_1 - v^i_1)\lambda_1(\hat{v}^i_1,v^i_1)\right)\\
&= \sum_{i=1}^n\sum_{\mathbf{v}_1} x^i_1(\mathbf{v}_1)f(\mathbf{v}_1)\left(v^i_1 - \frac{1}{f(v^i_1)}\sum_{\hat{v}^i_1}(\hat{v}^i_1 - v^i_1)\lambda_1(\hat{v}^i_1,v^i_1)\right) \\
&= \sum_{i=1}^n\sum_{\mathbf{v}_1} x^i_1(\mathbf{v}_1)f(\mathbf{v}_1)\Phi_1(\mathbf{v}_1)
\end{align*}

where $\Phi^i_1(\mathbf{v}_1) = v^i_1 - \frac{1}{f(v^i_1)}\sum_{\hat{v}^i_1}(\hat{v}^i_1 - v^i_1)\lambda_1(\hat{v}^i_1,v^i_1)$.

Combining all the observations so far, we have that given any useful solution $\lambda,\kappa$:
\begin{align*}
\mathcal{L}(\lambda, \kappa, x, p)=& \sum_{i=1}^n \sum_{k=1}^{\days} \sum_{\mathbf{v}_{\leq k}} x^i_k(\mathbf{v}_{\leq k})f(\mathbf{v}_{\leq k})\Phi^i_k(\mathbf{v}_{\leq k}) \\
\end{align*}

\subsection{Main Claim}

\begin{claim}\label{clm:m_j plus e_not_j}
For $n$ agents, $\days$ independent stages, ex-post IR and PIC dynamic mechanisms
\[ \opt{\mathcal{X},n,m} \leq \myerson{\mathbf{X}_j}  + \sum_{k=1, k \neq j}^{\days} \expectation{( X_{k})_{1:n}} \]
for any $j = 1,\dots, \days$.
\end{claim}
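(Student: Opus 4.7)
The plan is to generalize the dual constructions from Claims~\ref{clm:e_1 plus m_2} and~\ref{clm:m_1 plus e_2} to $n$ buyers and $\days$ stages. By Equation~\ref{eq: nm opt dual upper bound} it suffices to exhibit a useful dual solution $(\lambda,\kappa)\geq 0$ whose induced virtual values $\Phi^i_k$ satisfy $\Phi^i_j(\mathbf{v}_{\leq j}) = \phi(v^i_j)$ (Myerson's virtual value for $X_j$) at the designated stage $j$, and $\Phi^i_k(\mathbf{v}_{\leq k}) = v^i_k$ at every other stage $k\neq j$. Given such a dual, the Lagrangian
\[\mathcal{L}(\lambda,\kappa,x,p) \;=\; \sum_{i,k}\sum_{\mathbf{v}_{\leq k}} x^i_k(\mathbf{v}_{\leq k})\, f(\mathbf{v}_{\leq k})\,\Phi^i_k(\mathbf{v}_{\leq k})\]
decouples across stages; maximizing stage-by-stage over feasible allocations contributes $\myerson{\mathbf{X}_j}$ from stage $j$ (by the standard Myerson argument, ironing as in the warm-up if $X_j$ is irregular) and $\expectation{(X_k)_{1:n}}$ from each stage $k\neq j$ (by assigning the item to the agent with the largest value).

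For the construction I would set $\lambda_k\equiv 0$ for every $k\neq j$; this directly yields $\Phi^i_k=v^i_k$. At stage $j$, for each fixed history $\mathbf{v}_{<j}$, each fixed report vector $\mathbf{v}^{-i}_j$ of the other agents, and each agent $i$, I use the discrete Myerson flow: set $\lambda_j(\mathbf{v}_{<j},\upnode{v^i}{j},v^i_j) = f(\mathbf{v}_{\leq j-1})\cdot(1-F_j(v^i_j))$ and all other entries of $\lambda_j$ to zero. A direct plug-in into the formula for $\Phi^i_j$ verifies $\Phi^i_j(\mathbf{v}_{\leq j})=\phi(v^i_j)$, exactly as in Claim~\ref{clm:m_1 plus e_2}. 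The $\kappa_k$'s are then forced by the useful-flow constraint (Equation~\ref{eq:original flow k}); solving this recurrence forward stage by stage, while using independence of stages to factor the joint densities, yields $\kappa_k$'s that are products of marginal densities, hence non-negative.

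The main obstacle is the bookkeeping in the flow-conservation recurrence once the Myerson $\lambda_j$ is injected: one must verify that the propagated terms at stages $k>j$ show up only as additive contributions to $\kappa_k$ (not to $\lambda_k$), so that they do not disturb $\Phi^i_k = v^i_k$ at later stages, and that $\kappa_k$ remains non-negative throughout. Independence of stages is precisely what makes this work, since the Myerson flow at stage $j$ appears as a multiple of $f(\mathbf{v}_{\leq j-1})$ which combines cleanly with the later marginal densities when unrolling Equation~\ref{eq:flow_k}. Once the non-negativity and usefulness of $(\lambda,\kappa)$ are verified, the stagewise decomposition above delivers the claimed bound.
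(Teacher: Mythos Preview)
Your proposal is correct and is essentially the paper's own proof: you choose the same dual solution (Myerson flow $\lambda_j(\mathbf{v}_{<j},v^i_j,\downnode{v^i}{j}) = f(\mathbf{v}_{<j})(1-F_j(\downnode{v^i}{j}))$ at stage $j$, $\lambda_k\equiv 0$ elsewhere, $\kappa_k$ forced by flow conservation), obtain the same virtual values $\Phi^i_j=\phi(v^i_j)$ and $\Phi^i_k=v^i_k$ for $k\neq j$, and decompose the Lagrangian stagewise. The only detail you leave implicit is the explicit form of the forced $\kappa_k$'s for $k\geq j$ (the paper shows they are nonzero only on the branch $v^i_j=\underline{v}^i_j$, where they equal $f(\mathbf{v}_{<k},v^i_k)/f(v^i_j)$), but your assertion that they come out as products of marginal densities and hence non-negative is correct.
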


\begin{proof}
Consider the following dual solution:
\begin{align*}\kappa_k (\mathbf{v}_{<k}, v^i_k) = 
\begin{cases}
f(\mathbf{v}_{<k},v^i_k) & k< j \\
\frac{f(\mathbf{v}_{<k},v^i_k)}{f(v^i_j)} & k\geq j \land v^i_j = \underline{v}^i_j\\
0 & \text{o.w.} 
\end{cases} 
&&
\lambda_k (\mathbf{v}_{<k}, v^i_k,\hat{v}^i_k) = 
\begin{cases}
f(\mathbf{v}_{<k})(1 - F(\hat{v}^i_k)) & k=j\land \hat{v}^i_k = {\downnode{v^i}{k}}^{\text{Dfn \ref{dfn: Successor and Predecessor}}}\\
0 &\text{o.w.} \\
\end{cases}
\end{align*}

It's easy to verify that Constraint~\ref{eq:original flow k} is satisfied. These flows induce virtual value $\Phi^i_k(\mathbf{v}_k) = v^i_k$ for all $k\neq j$. For stage $j$, $\Phi^i_j(\mathbf{v}_{\leq j})$ becomes $\phi(v^i_j)$, Myerson's virtual value for $X_j$. For simplicity we assume that $X_j$ is regular, so that the virtual values induced are non-decreasing; if this is not the case, we use an ``ironing'' procedure to the flow $\lambda_j$ in stage $j$, similar to~\cite{cai2016}.

Then by Inequality~\ref{eq: nm opt dual upper bound}:
\begin{align*}
\opt{\mathcal{X},n,m} &\leq \max_{x,p}\mathcal{L}(\lambda, \kappa, x, p) \\
&= \max_{x,p}\sum_{i=1}^n \sum_{k=1}^{\days} \sum_{\mathbf{v}_{\leq k}} f(\mathbf{v}_{\leq k})x^i_k(\mathbf{v}_{\leq k})\Phi^i_k(\mathbf{v}_{\leq k}) \\
&=^{\text{Dnf \ref{dfn: Successor and Predecessor}}} \max_{x,p}\sum_{i=1}^n \sum_{\mathbf{v}_{\leq j}} f(\mathbf{v}_{\leq j})x^i_j(\mathbf{v}_{\leq j}) \left( v^i_j- \frac{1}{f(v^i_j, \mathbf{v}_{< j})}(\upnode{v^i}{j} - v^i_j)f(\mathbf{v}_{<j})(1 - F(v^i_j)) \right) \\
&\qquad ~~~~~~~~~~~~~ + \sum_{i=1}^n \sum_{k=1,k\neq j}^{\days} \sum_{\mathbf{v}_{\leq k}} f(\mathbf{v}_{\leq k})v^i_k \\
&=^{\text{Dnf \ref{dfn: Myerson's virtual value}}} \max_{x,p}\sum_{\mathbf{v}_{< j}} f(\mathbf{v}_{<j})\sum_{\mathbf{v}_j}f(\mathbf{v}_j)\sum_{i=1}^nx^i_j(\mathbf{v}_{\leq j})\phi(v^i_j) + \sum_{k=1, k \neq j}^{\days} \expectation{( X_{k})_{1:n}}  \\
&= \sum_{\mathbf{v}_{< j}} f(\mathbf{v}_{<j})\max_{x,p} \sum_{\mathbf{v}_j}f(\mathbf{v}_j)x_j(\mathbf{v}_{\leq j})\phi(\mathbf{v}_j) + \sum_{k=1, k \neq j}^{\days} \expectation{( X_{k})_{1:n}} \\
&=\myerson{\mathbf{X}_j}  + \sum_{k=1, k \neq j}^{\days} \expectation{( X_{k})_{1:n}} \qedhere
\end{align*}
\end{proof}


\section{Correlated stages with a stochastic dominance structure}\label{app: correlated upper bound}

We prove a simliar result to Lemma~\ref{lem:upper bound on opt} under a certain stochastic dominance condition.

\begin{lemma}\label{lem: upper bound on opt corr}
For one agent, $\days$ correlated stages, ex-post IR and PIC dynamic mechanisms
\[\opt{\mathcal{X}, 1, \days} \leq \min_{j=1,\dots,m}\left\{ \expectation{\myerson{X_j| X_{< j}}} +\sum_{k=1,k\neq j}^{\days} \expectation{X_k} \right\}\]
if for any stage $k$, and any two histories $v_{<k}$, $v'_{<k}$ that satisfy $\forall t<k, v_t \geq v'_t$, the conditional distribution $X_k | v_{<k}$ stochastically dominates the conditional distribution $X_k | v'_{<k}$.
\end{lemma}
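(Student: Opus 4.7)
The plan is to extend the duality framework of Section~\ref{sec: single agent upper bound} to the single-agent, $\days$-stage correlated setting, and then specialize the dual flow to prove the bound for each fixed stage $j$. I would Lagrangify periodic IC with multipliers $\lambda_k(v_{<k}, v_k, \hat{v}_k) \geq 0$ and ex-post IR with multipliers $\kappa_k(v_{\leq k}) \geq 0$, regroup terms, and set the coefficient of each unconstrained payment variable $p_k(v_{\leq k})$ to zero to obtain the flow-conservation (``useful dual'') constraints. The key difference from the independent case of Section~\ref{sec: single agent upper bound} is that $f(v_{\leq k})$ no longer factors: the default incoming flow at node $(v_{\leq k})$ is $\kappa_{k-1}(v_{\leq k-1}) \cdot f(v_k \mid v_{<k})$, and any lateral flow pushed at an earlier stage $t<k$ via $\lambda_t$ propagates into descendants weighted by the conditional densities $f(v_{t+1:k} \mid v_{\leq t})$ rather than marginal densities. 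Carrying out the recursive simplification analogous to Equation~\ref{eq:flow_k} with conditional densities throughout, the Lagrangian collapses to a virtual-welfare expression
\[
\mathcal{L}(\lambda,\kappa,x,p) \;=\; \sum_{k=1}^{\days}\sum_{v_{\leq k}} f(v_{\leq k})\, x_k(v_{\leq k})\, \Phi_k(v_{\leq k}),
\]
with $\Phi_k(v_{\leq k}) = v_k - \tfrac{1}{f(v_{\leq k})}\sum_{\hat{v}_k}(\hat{v}_k - v_k)\,\lambda_k(v_{<k},\hat{v}_k,v_k)\cdot f(v_{<k})^{-1}\cdot f(v_{\leq k})$, exactly mirroring the formula from Section~\ref{sec: single agent upper bound} but with $f(v_k \mid v_{<k})$ in place of $f(v_k)$.

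Second, I would construct for each $j$ a canonical useful dual that evaluates $\max_x \mathcal{L}$ to the claimed bound. For every stage $k \neq j$ take the ``expectation flow'' $\lambda_k \equiv 0$, $\kappa_k(v_{\leq k}) = f(v_{\leq k})$; this makes $\Phi_k(v_{\leq k}) = v_k$ and contributes $\expectation{X_k}$ when $x_k$ is maximized over $[0,1]$. For stage $j$, inside each history $v_{<j}$ push the Myerson-type flow of Claim~\ref{clm:e_1 plus m_2} on the conditional distribution $X_j \mid v_{<j}$, namely $\lambda_j(v_{<j},v_j,\downnode{v}{j}) = f(v_{<j})\cdot\Pr_{w\sim X_j \mid v_{<j}}[w > \downnode{v}{j}]$ and zero otherwise. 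A direct telescoping using $f(v_{\leq k}) = f(v_{<k})\cdot f(v_k\mid v_{<k})$ verifies that this satisfies the correlated-stages conservation constraints. The induced $\Phi_j$ equals the conditional Myerson virtual value $\phi(v_j \mid v_{<j})$, so the stage-$j$ contribution to $\max_x \mathcal{L}$ is exactly $\expectation{\myerson{X_j \mid X_{<j}}}$ (with ironing applied within each conditional slice if the conditional is irregular).

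Third, the stochastic dominance hypothesis is needed precisely to make the above construction compatible with the dynamic ironing procedure. Since the conditional Myerson reserve $r(v_{<j})$ and the conditional virtual value function depend on $v_{<j}$, stochastic dominance of $X_j \mid v_{<j}$ over $X_j \mid v'_{<j}$ whenever $v_{<j} \geq v'_{<j}$ guarantees that $r(v_{<j})$ is monotone non-increasing in $v_{<j}$ componentwise, and that the ironed virtual value is monotone in $v_j$ uniformly in $v_{<j}$. This monotonicity is what lets the unconstrained pointwise maximizer of $\sum_k \E[x_k \Phi_k]$ be realized without creating additional slack in any of the Lagrangified constraints, and it is the place where the hypothesis of the lemma enters.

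The main obstacle will be the bookkeeping in the first step: because $f$ does not factor, lateral $\lambda_t$ flow at an early stage $t$ contributes to every descendant node with a conditional-density weight, so the recursive relation replacing Equation~\ref{eq:flow_k} picks up conditional rather than marginal factors. Once that reduction is done, step two is a direct substitution and step three is a standard monotonicity argument; taking the minimum over $j \in \{1,\dots,\days\}$ then yields the lemma.
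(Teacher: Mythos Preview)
There is a genuine gap in your construction of the canonical dual for stages $k > j$. You propose $\lambda_k \equiv 0$ and $\kappa_k(v_{\leq k}) = f(v_{\leq k})$ for every $k \neq j$, claiming that ``a direct telescoping'' verifies conservation. This is fine for $k < j$, but it fails for $k > j$ in the correlated setting. When you carry out the recursive simplification of the coefficient of $p_k(v_{\leq k})$ (the analogue of Equation~\ref{eq:flow_k}), a residual term survives:
\[
\sum_{t<k}\sum_{\hat{v}_t}\bigl(f(v_k \mid v_{<t},\hat{v}_t,v_{t+1:k-1}) - f(v_k \mid v_{\leq k-1})\bigr)\, f(v_{t+1:k-1}\mid v_{<t},\hat{v}_t)\,\lambda_t(v_{<t},\hat{v}_t,v_t).
\]
In the independent case both conditional densities collapse to $f(v_k)$ and this vanishes identically; in the correlated case it is nonzero at every stage $k>j$ as soon as $\lambda_j \neq 0$ and $X_k$ actually depends on $X_j$. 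Your ``expectation flow'' has nothing to absorb this residual, so the coefficient of $p_k$ is nonzero and $\max_p \mathcal{L} = +\infty$. The paper's construction therefore introduces additional downward lateral flows $\lambda_k(v_{<k}, v_k, \downnode{v}{k})$ at \emph{every} stage $k > j$ precisely to cancel this residual; see the multi-case assignment of $\lambda_k$ in Appendix~\ref{app: correlated upper bound}.

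This is also where you misplace the role of the stochastic-dominance hypothesis. It is not used for ironing or for monotonicity of the conditional reserve; it is used to verify \emph{dual feasibility}. The corrective flow $\lambda_k(v_{<k}, v_k, \downnode{v}{k})$ that the paper is forced to add for $k>j$ is a sum of terms, each containing a factor
\[
F(\downnode{v}{k}\mid v_{<k}) - F(\downnode{v}{k}\mid v_{<t},\upnode{v}{t},v_{t+1:k-1}),
\]
and stochastic dominance of $X_k \mid (v_{<t},\upnode{v}{t},v_{t+1:k-1})$ over $X_k \mid v_{<k}$ (the two histories differ only by $\upnode{v}{t} \geq v_t$) is exactly what makes each such factor, and hence each $\lambda_k$, non-negative. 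Without the hypothesis the dual assignment need not satisfy $\lambda \geq 0$, and the Lagrangian bound does not follow.
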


The proof uses the duality framework. First we construct the revenue maximizing LP, with the objective being the expected revenue and the constraints enforcing PIC and ex-post IR. We get a partial Lagrangian $\mathcal{L}(\lambda, \kappa, x, p)$, and by duality:
\[\opt{\mathcal{X},1,m} \leq \max_{x,p}\mathcal{L}(\lambda,\kappa, x, p) \quad \text{ where }\lambda, \kappa\geq 0. \]

Then for each stage $j$ we provide a feasible assignment of $\lambda, \kappa$ such that
\[\max_{x,p}\mathcal{L}(\lambda,\kappa, x, p)\leq \expectation{\myerson{X_j|X_{<j}}} + \sum_{k=1, k \neq j}^{\days} \expectation{ X_{k} }.\]

\subsubsection*{The Primal.}

\begin{align*}
  &\textrm{max } \sum_{k=1}^{\days}\sum_{v_{\leq k}}f(v_{\leq k})p_k(v_{\leq k})  \notag\\
  &\textrm{subject to: } \notag \\
   &\forall k=1,\dots,\days,  v_{< k},v_k,\hat{v}_k:  \\
    &~ v_k x_k(v_{< k},v_k) - p_k(v_{< k},v_k) + \sum_{v_{k+1:\days}}f(v_{k+1:\days}|v_{\leq k})\sum_{j >k}^{\days} \left( v_jx_j(v_{\leq k}, v_{k+1:j}) - p_j(v_{\leq k}, v_{k+1:j}) \right) \geq \\
    &~~v_kx_k(v_{< k},\hat{v}_k) - p_k(v_{< k},\hat{v}_k) + \sum_{v_{k+1:\days}} f(v_{k+1:\days}|v_{\leq k})\sum_{j>k}^{\days} \left( v_jx_j(v_{< k-1}, \hat{v}_k, v_{k+1:j}) - p_t(v_{< k-1}, \hat{v}_k, v_{k+1:j}) \right) \\
   &\forall k=1,\dots,\days,  v_{<k}, v_k,\hat{v}_k :\\
    &~~~~~~ v_kx_k(v_{\leq k}) - p_k(v_{\leq k}) \geq 0
  \end{align*}

\subsubsection*{The partial Lagrangian.}

Introduce a lagrangian multiplier $\lambda_k(v_{<k},v_k, \hat{v}_k)$ for each periodic IC constraint and $\kappa_k(v_{<k},v_k)$ for each ex-post IR constraint. Then lagrangify the primal program to obtain the following partial dual,

\begin{align*}
\mathcal{L}(\lambda, \kappa, x, p) =& \sum_{k=1}^{\days} \sum_{v_{\leq k}} p_k(v_{\leq k}) \left( f(v_{\leq k}) -\kappa_k(v_\leq k)   + \sum_{\hat{v}_k}(\lambda_k(v_{<k},\hat{v}_k, v_k)- \lambda_k(v_{\leq k},\hat{v}_k)) \right.\\
&\left. \quad +  \sum_{j=1}^{k-1}\sum_{\hat{v}_j} (f(v_{j+1:k}|v_{< j}\hat{v}_j)\lambda_j(v_{<j}\hat{v}_j,v_j) - f(v_{j+1:k}|v_{\leq j})\lambda_j(v_{\leq j},\hat{v}_j) ) \right) + \\
&\sum_{k=1}^{\days} \sum_{v_{\leq k}} x_k(v_{\leq k}) \left( v_k\kappa_k(v_{< k}, v_k) + \sum_{\hat{v}_k}(v_k\lambda_k(v_{<k}, v_k, \hat{v}_k)- \hat{v}_k\lambda_k(v_{<k}, \hat{v}_k, v_k)) \right. \\
&\left.\quad  + \sum_{j=1}^{k-1}\sum_{\hat{v}_j}v_k(f(v_{j+1:k}|v_{\leq j})\lambda_j(v_{\leq j},\hat{v}_j) - f(v_{j+1:k}|v_{< j}\hat{v}_j)\lambda_j(v_{<j},\hat{v}_j,v_j)) \right) \\
\end{align*}

Since $p_k(v_{\leq k})$ is an unconstrained variable, for the upper bound $\max_{x,p}\mathcal{L}(\lambda,\kappa, x, p)$ to be bounded, the multipliers of $p_k(v_{\leq k})$ must equal to zero. As a result, $\lambda$ and $\kappa$ form a ``flow'' that obeys the following flow conservation constraints: $\quad\forall k=1, \dots \days, v_{\leq k}$:

\begin{multline}\label{eq: corr flow k}
f(v_{\leq k}) -\kappa_k(v_\leq k)  + \sum_{\hat{v}_j}(\lambda_k(v_{<k}\hat{v}_k, v_k)- \lambda_k(v_{\leq k},\hat{v}_k)) \\
+\sum_{j=1}^{k-1}\sum_{\hat{v}_j} (f(v_{j+1:k}|v_{< j}\hat{v}_j)\lambda_j(v_{<j}\hat{v}_j,v_j) - f(v_{j+1:k}|v_{\leq j})\lambda_j(v_{\leq j},\hat{v}_j) )) = 0 
\end{multline}

The flow conservation constraints for $p_{k-1}(v_{\leq k-1})$ give:

\begin{multline}\label{eq:corr_flow_k-1}
- \sum_{\hat{v}_{k-1}}\lambda_{k-1}(v_{\leq k-1},\hat{v}_{k-1}) -  \sum_{j=1}^{k-2}\sum_{\hat{v}_j} f(v_{j+1:k-1}|v_{\leq j})\lambda_j(v_{\leq j},\hat{v}_j)\\
\quad= -\sum_{\hat{v}_{k-1}}\lambda_{k-1}(v_{<k-1}\hat{v}_{k-1}, v_{k-1})-\sum_{j=1}^{k-2}\sum_{\hat{v}_j}  f(v_{j+1:k-1}|v_{< j}\hat{v}_j)\lambda_j(v_{<j}\hat{v}_j,v_j) +\kappa_{k-1}(v_{\leq k-1}) - f(v_{\leq k-1})
\end{multline}

Note the recursive structure in the mutlipliers of $p_k(v_{\leq k})$. In $\mathcal{L}(\lambda,\kappa,x,p)$, in the terms containing $p_k(v_{\leq k})$, we can substitute the LHS of \ref{eq:corr_flow_k-1} with the RHS of \ref{eq:corr_flow_k-1}:

\begin{align*}
&p_k(v_{\leq k}) \left( f(v_{\leq k}) -\kappa_k(v_\leq k) + \sum_{\hat{v}_k} \left( \lambda_k(v_{<k}\hat{v}_k, v_k)- \lambda_k(v_{\leq k},\hat{v}_k) \right) \right.\\
&\left.\quad + \sum_{j=1}^{k-1}\sum_{\hat{v}_j} f(v_{j+1:k}|v_{< j}\hat{v}_j)\lambda_j(v_{<j}\hat{v}_j,v_j) - \sum_{j=1}^{k-1}\sum_{\hat{v}_j}f(v_{j+1:k}|v_{\leq j})\lambda_j(v_{\leq j},\hat{v}_j) \right) \\
&= p_k(v_{\leq k})\left( f(v_{\leq k}) -\kappa_k(v_\leq k) + \sum_{\hat{v}_k}(\lambda_k(v_{<k}\hat{v}_k, v_k)- \lambda_k(v_{\leq k},\hat{v}_k))+ \sum_{j=1}^{k-1}\sum_{\hat{v}_j} f(v_{j+1:k}|v_{< j}\hat{v}_j)\lambda_j(v_{<j}\hat{v}_j,v_j)\right.\\
&\left.\quad + f(v_{j+1:k}|v_{\leq k-1})(-\sum_{\hat{v}_{k-1}}\lambda_{k-1}(v_{\leq k-1},\hat{v}_{k-1}) - \sum_{j=1}^{k-2}\sum_{\hat{v}_j} f(v_{j+1:k-1}|v_{\leq j})\lambda_j(v_{\leq j},\hat{v}_j))\right) \\
&= p_k(v_{\leq k})\left( f(v_{\leq k}) -\kappa_k(v_\leq k)  + \sum_{\hat{v}_k}(\lambda_k(v_{<k}\hat{v}_k, v_k)- \lambda_k(v_{\leq k},\hat{v}_k))+ \sum_{j=1}^{k-1}\sum_{\hat{v}_j} f(v_{j+1:k}|v_{< j}\hat{v}_j)\lambda_j(v_{<j}\hat{v}_j,v_j)\right.\\
&\left.\quad +f(v_k|v_{\leq k-1})\kappa_{k-1}(v_{\leq k-1}) - f(v_k|v_{\leq k-1})f(v_{\leq k-1})  - f(v_k|v_{\leq k-1})\sum_{\hat{v}_{k-1}}\lambda_{k-1}(v_{<k-1}\hat{v}_{k-1}, v_{k-1})\right.\\
&\left.\quad -f(v_k|v_{\leq k-1})\sum_{j=1}^{k-2}\sum_{\hat{v}_j}  f(v_{j+1:k-1}|v_{< j}\hat{v}_j)\lambda_j(v_{<j}\hat{v}_j,v_j) \right) \\
&= p_k(v_{\leq k}) \left( f(v_k|v_{\leq k-1})\kappa_{k-1}(v_{\leq k-1}) -\kappa_k(v_\leq k) + \sum_{\hat{v}_k}(\lambda_k(v_{<k}\hat{v}_k, v_k)- \lambda_k(v_{\leq k},\hat{v}_k)) \right.\\
&\left.\quad + \sum_{j=1}^{k-1}\sum_{\hat{v}_j} (f(v_k|v_{<j},\hat{v}_j,v_{j+1:k-1}) - f(v_k|v_{\leq k-1}))f(v_{j+1:k-1}|v_{< j}\hat{v}_j) \lambda_j(v_{<j}\hat{v}_j,v_j) \right)
\end{align*}  

Then, by the conservation constraints of $p_k(v_{\leq k})$,

\begin{multline}\label{eq:corr_flow_k}
\textstyle\sum_{j=1}^{k-1}\sum_{\hat{v}_j} (f(v_{j+1:k}|v_{\leq j})\lambda_j(v_{\leq j},\hat{v}_j) - f(v_{j+1:k}|v_{< j}\hat{v}_j)\lambda_j(v_{<j},\hat{v}_j,v_j)) \\
\textstyle = f(v_{\leq k}) -\kappa_k(v_\leq k) + \sum_{\hat{v}_k}(\lambda_k(v_{<k},\hat{v}_k, v_k)- \lambda_k(v_{\leq k},\hat{v}_k))
\end{multline}

Similarly, in $\mathcal{L}(\lambda,\kappa,x,p)$ in the terms containing $x_k(v_{\leq k})$, substitute the LHS of \ref{eq:corr_flow_k} with the RHS of \ref{eq:corr_flow_k} , we obtain

\begin{align*}
&x_k(v_{\leq k})\left( v_k\kappa_k(v_{< k}, v_k) + \sum_{\hat{v}_k}(v_k\lambda_k(v_{<k}, v_k, \hat{v}_k)- \hat{v}_k\lambda_k(v_{<k}, \hat{v}_k, v_k))\right. \\
&\left.\quad  + v_k\sum_{j=1}^{k-1}\sum_{\hat{v}_j}(f(v_{j+1:k}|v_{\leq j})\lambda_j(v_{\leq j},\hat{v}_j) - f(v_{j+1:k}|v_{< j}\hat{v}_j)\lambda_j(v_{<j},\hat{v}_j,v_j)) \right) \\
&= x_k(v_{\leq k})\left( v_k\kappa_k(v_{< k}, v_k) + \sum_{\hat{v}_k}(v_k\lambda_k(v_{<k}, v_k, \hat{v}_k)- \hat{v}_k\lambda_k(v_{<k}, \hat{v}_k, v_k)) \right. \\
&\left.\quad + v_k(f(v_{\leq k}) -\kappa_k(v_\leq k) + \sum_{\hat{v}_k}(\lambda_k(v_{<k},\hat{v}_k, v_k)- \lambda_k(v_{\leq k},\hat{v}_k)) \right) \\
&= x_k(v_{\leq k})f(v_{\leq k}) \left( v_k - \frac{1}{f(v_{\leq k})}\sum_{\hat{v}_k}(\hat{v}_k - v_k) \lambda_k(v_{<k},\hat{v}_k, v_k) \right)
\end{align*}

Now $\mathcal{L}(\lambda,\kappa,x,p)$ can be simplified as:

\[
\textstyle \mathcal{L}(\lambda,\kappa,x,p) = \sum_{k=1}^{\days} \sum_{v_{\leq k}} x_k(v_{\leq k})\left( v_kf(v_{\leq k}) - \sum_{\hat{v}_k}(\hat{v}_k - v_k) \lambda_k(v_{<k}\hat{v}_k, v_k) \right)
\]

For any stage $j$, the following assignment satisfies all the flow conservation constraints:
\[\kappa_k (v_{<k}, v_k) = 
\begin{cases}
k\geq j \land \forall t = j,\dots, k-1, v_t = \underline{v}_t: &f(v_{<j})f(v_k|v_{<k}) \\
\text{o.w.}: & 0
\end{cases}\]

\[\lambda_k (v_{<k}, v_k,\hat{v}_k) = 
\begin{cases}
k< j: & 0 \\
k=j\land \hat{v}_k = \downnode{v}{k}: & f(v_{<k})(1 - F(\hat{v}_k|v_{<k}))\\
k>j \land \hat{v}_k = \downnode{v}{k} \\
\land\forall t = j,\dots, k-1, v_t = \underline{v}_t: & \sum\limits_{v'_k \geq v_k}\sum\limits_{t=j}^{k-1}(f(v'_k|v_{<t},\upnode{v}{t},v_{t+1:k-1})- f(v'_k|v_{\leq k-1}))\times \\
&\,f(v_{t+1:k-1}|v_{< t},\upnode{v}{t})\lambda_t(v_{<t},\upnode{v}{t},v_t)+ \\
&\, f(v_{<j})(1-F(\hat{v}_k|v_{<k}))\\
k>j \land \hat{v}_k = \downnode{v}{k} \\
\land\exists t = j,\dots, k-1, v_t \neq \underline{v}_t: &\sum\limits_{v'_k \geq v_k}\sum\limits_{t=j}^{k-1}(f(v'_k|v_{<t},\upnode{v}{t},v_{t+1:k-1})- f(v'_k|v_{\leq k-1}))\times \\
&\,f(v_{t+1:k-1}|v_{< t},\upnode{v}{t})\lambda_t(v_{<t},\upnode{v}{t},v_t) \\
\text{o.w.} : &f(v_{\leq k})  \\
\end{cases}\]

It's easy to see that all $\kappa_k$ and all $\lambda_k$ that fall into the first or last cases are non-negative. We need to verify that the flows $\lambda_k (v_{<k}, v_k,\downnode{v}{k})$ for $k>j$ have non-negative values:
\begin{align*}
\lambda_k (v_{<k}, v_k,\downnode{v}{k}) &\geq \sum\limits_{v'_k \geq v_k}\sum\limits_{t=j}^{k-1} \left( f(v'_k|v_{<t},\upnode{v}{t},v_{t+1:k-1})- f(v'_k|v_{\leq k-1}) \right) f(v_{t+1:k-1}|v_{< t},\upnode{v}{t})\lambda_t(v_{<t},\upnode{v}{t},v_t) \\
&= \sum\limits_{t=j}^{k-1}f(v_{t+1:k-1}|v_{< t},\upnode{v}{t})\lambda_t(v_{<t},\upnode{v}{t},v_t) \times \sum\limits_{v'_k \geq v_k}(f(v'_k|v_{<t},\upnode{v}{t},v_{t+1:k-1})- f(v'_k|v_{\leq k-1})) \\
&= \sum\limits_{t=j}^{k-1}f(v_{t+1:k-1}|v_{< t},\upnode{v}{t})\lambda_t(v_{<t},\upnode{v}{t},v_t) \left( F(\downnode{v}{k}|v_{<k}) - F( \downnode{v}{k} |v_{<t},\upnode{v}{t},v_{t+1:k-1})\right).
\end{align*}

By the stochastic dominance structure, since the two histories only differ in stage $t$, and $v_t < \upnode{v}{t}$, $X_{k}| v_{<t},\upnode{v}{t},v_{t+1:k-1}$ stochastically dominates $X_{k}| v_{<k}$. As a result,
\[F( \downnode{v}{k} | v_{<k} )- F( \downnode{v}{k} |v_{<t},\upnode{v}{t},v_{t+1:k-1}) \geq 0, \]
and therefore $\lambda_k (v_{<k}, v_k,\downnode{v}{k})\geq 0$.
Plugging this assignment of $\lambda$ and $\kappa$ to the dual expression gives:

\begin{align*}
\opt{\mathcal{X}, 1, \days} \leq& \max_{x,p} \mathcal{L}(\lambda, \kappa, x,p) \\
=& \max_{x,p}  \sum_{k=1}^{\days} \sum_{v_{\leq k}} p_k(v_{\leq k})\times 0 + \sum_{k=1}^{\days} \sum_{v_{\leq k}} f(v_{\leq k})x_k(v_{\leq k})\left( v_k - \frac{1}{f(v_{\leq k})}\sum_{\hat{v}_k}(\hat{v}_k - v_k) \lambda_k(v_{<k}\hat{v}_k, v_k) \right) \\
=&  \sum_{k=1}^{j-1} \sum_{v_{\leq k}} f(v_{\leq k})x_k(v_{\leq k})v_k + \\
& \sum_{v_{\leq j}}f(v_{\leq j})x^*_j(v_{\leq j})[v_j - \frac{1}{f(v_{\leq j})}(\upnode{v}{j} - v_j)f(v_{<j})(1 - F(v_j|v_{<j}))] + \\
& \sum_{k=j+1}^{\days} \sum_{v_{\leq k}}f(v_{\leq k})x^*_k(v_{\leq k})\left( v_k - \frac{1}{f(v_{\leq k})}(\upnode{v}{k} - v_k) \lambda_k(v_{<k},\upnode{v}{k}, v_k) \right) \\
\leq &\max_{x,p} \sum_{v_{\leq j}}f(v_{\leq j})x_j(v_{\leq j}) \left( v_j - \frac{1 - F(v_j|v_{<j})}{f(v_j|v_{< j})}(\upnode{v}{j} - v_j) \right) + \sum_{k=1}^{j-1}\expectation{X_k} + \sum_{k=j+1}^{\days} \expectation{X_k} \\
= &\max_{x,p} \sum_{v_{<j}}f(v_{<j})\sum_{v_j}f(v_j|v_{<j})x_j(v_{\leq j})\left( v_j - \frac{1 - F(v_j|v_{<j})}{f(v_j|v_{< j})}(\upnode{v}{j} - v_j) \right) + \sum_{k=1,k\neq j}^{\days} \expectation{X_k} \\
=& \expectation{\myerson{X_j|X_{<j}}} + \sum_{k=1,k\neq j}^{\days} \expectation{X_k}
\end{align*}

This upper bound holds for all $j=1,\dots,\days$; Lemma~\ref{lem: upper bound on opt corr} is implied.
\section{Lower Bounds on The Competition Complexity}\label{sec: lower bounds on cc}

\begin{lemma}\label{lemma: lower bound on cc for independent }
For independent stages, $\days-1$ MHR and $1$ regular stage, and ex-post IR auctions, the Competition Complexity is at least $(e-1)n$, even for auctions that are incentive compatible in a perfect Bayesian equilibrium.
\end{lemma}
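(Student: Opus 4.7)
The plan is to exhibit, for every $c < (e-1) n$, an $m$-stage instance with $m-1$ MHR marginals and one regular marginal, together with an explicit ex-post IR, perfect-Bayesian IC dynamic mechanism whose expected revenue strictly exceeds $\rev{VCG, \mathcal X, n+c, m}$. Following the hint that the auction ``generalizes Example~\ref{example: myerson every day}'', I would designate one stage as the ``source'' stage with a regular distribution $Z$ of (truncated) equal-revenue type -- so that $\expectation{Z_{1:n}}$ can be tuned arbitrarily large while $\myerson{\mathbf Z}$ stays bounded -- and populate the remaining $m-1$ stages with identical MHR distributions $Y$ whose order statistics admit a clean closed form (the natural candidate is an exponential-family distribution, suitably truncated, so that one can compute $\expectation{Y_{1:n}}$ and $\expectation{Y_{2:n+c}}$ explicitly).

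The witness mechanism generalizes Example~\ref{example: myerson every day} from $n=1, m=2$ to arbitrary $n,m$, and uses in a crucial way the Ashlagi--Daskalakis--Haghpanah structural fact (cited in the Related Work) that an optimal ex-post IR, PIC mechanism can drive stage utility to zero in every stage except the last. Concretely, at every MHR stage the item is allocated to the highest bidder who is charged her entire reported value, making her stage utility zero on path; she is later reimbursed through a bonus delivered at the regular ``source'' stage, whose size is chosen so that an envelope argument makes truthful reporting (in a perfect Bayesian equilibrium) weakly optimal at every stage, and so that ex-post IR holds stagewise (not merely in expectation). The resulting expected revenue is, up to lower-order terms, the full welfare of the MHR stages, namely $(m-1)\expectation{Y_{1:n}}$, plus whatever Myerson extracts from the regular stage.

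The VCG benchmark, run with $n+c$ bidders at every stage, collects at most $\expectation{Y_{2:n+c}}$ per MHR stage and at most $\myerson{Z,n+c}$ at the regular stage; by the choice of $Z$ the latter is negligible compared to $(m-1)\expectation{Y_{1:n}}$ once $m$ is taken large. The competition condition ``VCG catches up'' therefore reduces to the single order-statistic inequality $\expectation{Y_{2:n+c}} \geq \expectation{Y_{1:n}}$. Choosing the worst MHR distribution $Y$ for this inequality (an explicit exponential/Gumbel-style family) and solving the resulting break-even equation in $c/n$ -- which is transcendental and is where, as the paper notes, the Lambert $W$-function makes an appearance -- yields the threshold $c = (e-1) n$, exactly as in the analogous ex-ante calculation of Theorem~\ref{thm: main thm all mhr}(1).

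The main obstacle is twofold. First, the dynamic mechanism must be written down explicitly -- with formulas for every stage's allocation, payment, and bonus -- and its perfect-Bayesian truthfulness and stagewise ex-post IR verified rigorously; this is delicate because the bonus delivered in the regular stage must be compatible with the realized history at every path, not just on average. Second, one must identify the MHR distribution $Y$ that is genuinely extremal for the order-statistic gap $\expectation{Y_{1:n}} - \expectation{Y_{2:n+c}}$, and carry out the Lambert-$W$ algebra cleanly enough to pin down the constant $e-1$ (rather than some nearby constant). The remaining computations -- bounding $\myerson{Z,n+c}$ for the equal-revenue-type $Z$ and assembling the final inequality -- are routine once the two delicate steps above are in place.
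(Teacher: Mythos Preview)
Your proposal is essentially the paper's own proof: $m-1$ truncated-$\mathrm{Exp}(1)$ stages followed by one truncated equal-revenue stage, with the mechanism that charges each stage's winner her full bid and in the final stage allocates the item for free to buyer $i$ with probability $\sum_k p^i_k / \expectation{Y}$, yielding revenue $(m-1)\,\expectation{X_{1:n}}$. You over-complicate in three places: no search for an ``extremal'' MHR family is needed (the exponential is simply exhibited, with $\expectation{X_{1:n}}=H_n$ and $\expectation{X_{2:n+c}}=H_{n+c}-1$); the IC and ex-post IR checks are one-liners in the style of Example~\ref{example: myerson every day}, not a delicate envelope argument; and once you send $m\to\infty$ the break-even condition is just $H_{n+c}-H_n\ge 1$, i.e.\ $c\ge (e-1)n$, with the Lambert~$W$ appearing in the paper only because the $(n+c)$ contribution from the equal-revenue stage's second-price revenue is kept \emph{before} taking that limit.
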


\begin{proof}
For the first $\days-1$ stages, let the value distributions be $X = Exp(1,V)$ is an exponential distribution with parameter $\lambda=1$, truncated at some large value $V$\footnote{Truncated exponential distributions have monotone hazard rate.}. The value distribution $Y$ for the last stage is an equal revenue distribution truncated at some large value $\hat{V}$\footnote{ $Y$ has revenue $1$ for a single agent, and expectation $\ln \left( \hat{V} \right)$.}, with $F_y(x) = 1-\frac{1}{x}$.
The following auction is IC in a perfect Bayesian equilibrium and ex-post IR:
\begin{itemize}
\item In the first $\days-1$ stages run a first price auction: the winner is the buyer with the highest value with a payment equal to that value.
\item In the last stage, the item is given for free to buyer $i$, with probability equal to $\frac{\sum_{k=1}^{\days-1} p^i_k }{\expectation{Y} }$, where $p^i_k$ is the price payed by agent $i$ in stage $k$. In other words, every buyer $i$ wins by bidding $v$, the probability that she gets the last stage item is increased by $\frac{v}{ \expectation{Y} }$.
\end{itemize}

The revenue of this auction is 
\begin{equation}\label{eq: revenue of auction}
(\days-1) \cdot \expectation{ X_{1:n} }.
\end{equation}

\begin{claim}\label{claim: equal revenue order statistics}
The revenue of VCG at every stage with $c$ additional buyers is at most
\begin{equation}\label{eq: revenue of vcg}
(\days-1) \cdot \expectation{ X_{2:n+c} } + \expectation{ Y_{2:n+c} } \leq (\days-1) \cdot \expectation{ X_{2:n+c} } + n + c.
\end{equation}
\end{claim}

\begin{proof}
For simplicity we show the proof for an untruncated equal revenue distribution $Y$. $F_{2:n+c}(y) = F^{n+c}(y) + (n+c)F^{n+c-1}(y)\left( 1 - F(y) \right) = \left( 1 - \frac{1}{y} \right)^{n+c} + (n+c) \left( 1 - \frac{1}{y} \right)^{n+c-1} \frac{1}{y} = \left( \frac{y-1}{y} \right)^{n+c} \left( 1 - \frac{n+c}{1-y}\right)$. Therefore, $\expectation{Y_{2:n+c}} = \int_1^{\infty} 1 - \left( \frac{y-1}{y} \right)^{n+c} \left( 1 - \frac{n+c}{1-y}\right) dy$. The antiderivative of $1 - \left( \frac{y-1}{y} \right)^{n+c} \left( 1 - \frac{n+c}{1-y}\right)$ is $y\left( 1 - \left( \frac{y-1}{y} \right)^{n+c} \right)$. Therefore:
\begin{align*}
\expectation{Y_{2:n+c}} &= \int_1^{\infty} 1 - \left( \frac{y-1}{y} \right)^{n+c} \left( 1 - \frac{n+c}{1-y}\right) dy \\
&= \left[ y\left( 1 - \left( \frac{y-1}{y} \right)^{n+c} \right) \right]_1^{\infty} \\
&= \left( \lim_{y \rightarrow \infty} \frac{y^{n+c} - (y-1)^{n+c} }{y^{n+c-1} }\right) - 1 \\
&= n+c - 1
\end{align*}
So \ref{eq: revenue of vcg} holds.
\end{proof}

We want to find the smallest $c$ such that right-hand side of ~\ref{eq: revenue of vcg} is at least~\ref{eq: revenue of auction}. For simplicity we compute the expected order statistics of $X$ as if it is an untruncated exponential distribution. We can pick $V$ large enough such that the conclusion is the same.

For an exponential distribution $Z \sim Exp(1)$ we have that $\expectation{Z_{1:n}} = \sum_{i=1}^n \frac{1}{i} = H_n$ and $\expectation{Z_{2:n}} = \sum_{i=2}^n \frac{1}{i} = H_n - 1$, where $H_n$ is the $n$-th harmonic number. For large $n$, $H_n$ can be approximated by $\ln n$. Therefore, expression~\ref{eq: revenue of vcg} being at least expression~\ref{eq: revenue of auction} is equivalent to :
\begin{gather*}
(\days-1) \cdot \expectation{ X_{2:n+c} } + n + c > (\days-1) \cdot \expectation{ X_{1:n} } \\
(\days-1) \cdot \left( H_{n+c} - 1 \right) + n + c > (\days-1) \cdot H_n \\
H_{n+c} - H_n + \frac{n+c}{\days-1} > 1 \\
\ln \left( \frac{n+c}{n} \right) + \frac{n+c}{\days-1} > 1 \\
(n+c) e^{\frac{n+c}{\days-1}} > n e \\
\frac{n+c}{\days-1} e^{\frac{n+c}{\days-1}} > \frac{n e}{\days-1} \\
\frac{n+c}{\days-1} > W \left( \frac{n e}{\days-1} \right) \\
c > (\days-1) \cdot W \left( \frac{n e}{\days-1} \right) - n \\
\end{gather*}
where W is the Lambert function. $\lim_{k \rightarrow \infty} k W (\frac{en}{k}) = en$, therefore $c > (e-1)n$.
\end{proof}


\begin{lemma}
For $\days$ MHR stages, and ex-ante IR auctions, the Competition Complexity is at least $(e-1)n$, even for independent stages and for auctions that are incentive compatible in a perfect Bayesian equilibrium.
\end{lemma}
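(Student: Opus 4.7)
The plan is to exploit the extra slack that an ex-ante IR constraint gives the seller: in contrast to the previous lemma (where we needed an equal-revenue distribution in the last stage to overcome the ex-post IR barrier), we can now extract the entire expected social welfare at every stage with MHR marginals, by charging an appropriately calibrated up-front entry fee.

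First, let every stage distribution be $X_k = \mathrm{Exp}(1)$ truncated at some large $V$; these distributions are MHR, and for $V$ large enough all moments of interest agree with those of the untruncated distribution up to an arbitrarily small additive error. Second, we construct a periodic IC, ex-ante IR dynamic auction with expected revenue $\days \cdot \expectation{X_{1:n}}$. The mechanism is the following: at every stage $k$ we run the (static) second price auction among all $n$ buyers, and in addition, before stage $1$ begins, we collect from each buyer a fixed entry fee
\[
F = \tfrac{1}{n} \sum_{k=1}^{\days} \bigl(\expectation{X_{1:n}} - \expectation{X_{2:n}}\bigr).
\]
Because the fee does not depend on any report, the stagewise incentives coincide with those of VCG, so truthful reporting at every stage is a perfect Bayesian equilibrium (hence the mechanism is PIC and in fact IC in PBE). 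By symmetry across the $n$ i.i.d. buyers, each buyer's ex-ante expected utility from the sequence of second-price auctions is exactly $F$, so the ex-ante IR constraint binds with equality. The total expected revenue is the sum over stages of the expected maximum plus $nF$ cancellations, which simplifies to $\days \cdot \expectation{X_{1:n}}$.

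Third, we lower-bound the number of extra buyers VCG needs. Running a second price auction at every stage with $n+c$ i.i.d. buyers yields expected revenue $\days \cdot \expectation{X_{2:n+c}}$. For the truncated exponential we have, up to a vanishing error in $V$,
\[
\expectation{X_{1:n}} = H_n, \qquad \expectation{X_{2:n+c}} = H_{n+c} - 1,
\]
so the Competition Complexity condition $\expectation{X_{2:n+c}} \geq \expectation{X_{1:n}}$ is equivalent to $H_{n+c} - H_n \geq 1$, i.e.\ $\ln\!\bigl(1 + c/n\bigr) \geq 1$ up to lower-order terms, which forces $c \geq (e-1)n$.

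The only part requiring any care is bookkeeping the truncation error: we must choose $V$ large enough that neither $\expectation{X_{1:n}}$ nor $\expectation{X_{2:n+c}}$ shifts by more than a negligible amount, and that the truncated distribution remains MHR (which follows from the fact that truncating an MHR distribution from above preserves MHR). The rest is the same Lambert--$W$ style manipulation used in the ex-post IR lower bound, but now applied uniformly across every stage since every stage contributes the full $\expectation{X_{1:n}}$ rather than just $\days - 1$ of them.
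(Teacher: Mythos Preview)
Your overall strategy is right, and the final comparison $H_{n+c}-H_n\ge 1\Rightarrow c\ge(e-1)n$ is exactly what the paper uses. But there is a genuine gap in your mechanism's IR verification.

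In this paper's model the buyer already knows $v^i_1$ when stage~1 begins, and the constraint labeled ``ex-ante IR'' (Equation~\eqref{eq: ex-ante IR}) must hold \emph{for every realization of $v^i_1$}: it is interim with respect to the current-stage value. Your entry fee $F=\tfrac{\days}{n}\bigl(\expectation{X_{1:n}}-\expectation{X_{2:n}}\bigr)$ equals a buyer's expected surplus \emph{before} learning $v^i_1$. Once $v^i_1$ is realized, a buyer with a low stage-1 value (e.g.\ $v^i_1$ near $0$) has stage-1 interim surplus close to $0$, so her expected continuation utility is
\[
\underbrace{\approx 0}_{\text{stage }1}\;+\;\frac{\days-1}{n}\bigl(\expectation{X_{1:n}}-\expectation{X_{2:n}}\bigr)\;-\;F\;=\;-\,\frac{1}{n}\bigl(\expectation{X_{1:n}}-\expectation{X_{2:n}}\bigr)\;<\;0,
\]
violating Equation~\eqref{eq: ex-ante IR} at $k=1$. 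There is no ``before stage~1'' in the model at which you could collect the fee.

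The paper's fix is exactly to neutralize this issue: it sets $X_1$ to be uniform on $[0,\epsilon]$ (still MHR) so that every buyer's stage-1 value is essentially zero, and then charges in stage~1 an entry fee equal only to the expected \emph{future} surplus $\tfrac{1}{n}\sum_{k=2}^{\days}\bigl(\expectation{(X_k)_{1:n}}-\expectation{(X_k)_{2:n}}\bigr)$. This is IR for every $v^i_1$, and it yields revenue $\sum_{k=2}^{\days}\expectation{(X_k)_{1:n}}=(\days-1)\expectation{X_{1:n}}$. The subsequent comparison with $\days\cdot\expectation{X_{2:n+c}}$ still collapses (for large $\days$) to $H_{n+c}-H_n\ge 1$, giving the same $(e-1)n$ bound. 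So your argument is salvageable with this one modification; the rest of your sketch is fine.
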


\begin{proof}
For the first stage, $X_1$ is uniform $[0,\epsilon]$ for some small $\epsilon > 0$. For $k=2,\dots,\days$, $X_k = Exp(1,V)$ is an exponential distribution with parameter $\lambda = 1$, truncated at some large $V$. We describe an ex-ante IR auction that is incentive compatible in a perfect Bayesian equilibrium, with revenue $\sum_{k=2}^\days \expectation{ (X_k)_{1:n} }$, i.e. essentially the social welfare: on stages $2$ through $\days$ the auctioneer will run a second price auction, extracting revenue $\sum_{k=2}^\days \expectation{ (X_k)_{2:n} }$. The extra $\sum_{k=2}^\days \left( \expectation{ (X_k)_{1:n} } - \expectation{ (X_k)_{2:n} } \right)$ will be charged upfront; in stage $1$ every buyer is offered the option to pay $\frac{1}{n} \sum_{k=2}^\days \left( \expectation{ (X_k)_{1:n} } - \expectation{ (X_k)_{2:n} } \right)$ in order to participate in stages $2$ through $k$. This is equal to the expected utility of each buyer, and therefore, since they are expectation maximizers, they will accept the offer.

Given this auction, the calculation for lower bounding the Competition Complexity is almost identical (in fact much simpler) to Lemma~\ref{lemma: lower bound on cc for independent }.  
\end{proof}
\section{Proofs missing from Section~\ref{sec:second price bounds}}\label{appendix:missing from mhr}

\begin{proof}[Proof of Claim~\ref{claim: basic MHR }]
$\frac{d}{dx} \log\left( 1- F(x) \right) = \frac{-f(x)}{1-F(x)} = -h(x)$. Therefore, $1-F(x) = e^{- \int_0^x h(z) dz}$. Re-arranging proves the first part of the claim. We get the second part from the definition of expectation for non-negative random variables.
\end{proof}


\begin{proof}[Proof of Claim~\ref{claim:expected second max formula}]
\begin{align*}
\expectation{X_{2:4}} &= \int_0^{\bar{V}} 1 - F_{2:4}(x) dx \\
&= \int_0^{\bar{V}} 1 - 4 F(x)^{3} + 3 F(x)^4 dx \\
&= \int_0^{\bar{V}} 1 - 4 \left( 1 - e^{-H(x)} \right)^3 + 3 \left( 1 - e^{-H(x)} \right)^4 \\
&= \int_0^{\bar{V}} 3 e^{-4H(x)} - 8 e^{-3H(x)} + 6 e^{-2H(x)} dx
\end{align*}	
\end{proof}

\begin{proof}[Proof of Lemma~\ref{lemma: piecewise linear H}]
For a general piecewise linear and convex function $\hat{H}(x)$ with $c$ linear pieces, we have
\[ 
\textstyle \hat{H}(x) = 
\begin{cases} 
a_0 x + b_0 &\mbox{if }  x_1 \geq x \geq x_0 \\ 
a_1 x + b_1 & \mbox{if } x_2 \geq x \geq x_1 \\
\dots \\
a_c x + b_c & \mbox{if } x_{c+1} \geq x \geq x_c \\
\end{cases} 
\]

where $b_0 = 0$ and $a_i x_i + b_i= a_{i-1} x_i + b_{i-1}$, $\forall i \geq 1$. $x_0 = 0$, $x_{c+1} = \bar{V}$ and $x_{i+1} > x_i$ for all $i \geq 0$. Since $\hat{H}(x)$ is convex, $a_{i+1} \geq a_i > 0$, for all $i$. Let $I = \int_0^{\bar{V}} 3 e^{-4\hat{H}(x)} - 8 e^{-3\hat{H}(x)} + 6 e^{-2\hat{H}(x)} - e^{-\hat{H}(x)} dx$.

\begin{align*}
I &= \sum_{i=0}^c \int_{x_i}^{x_{i+1}} 3 e^{-4\left( a_i x + b_i \right) } - 8 e^{-3\left( a_i x + b_i \right) } + 6 e^{-2\left( a_i x + b_i \right) } - e^{-\left( a_i x + b_i \right) } dx  \\
&= \sum_{i=0}^c - \left[\frac{3e^{-4\left( a_i x + b_i \right)}}{4a_i}\right]_{x_i}^{x_{i+1}} + \left[\frac{8e^{-3\left( a_i x + b_i \right)}}{3a_i}\right]_{x_i}^{x_{i+1}} - \left[ \frac{3e^{-2\left( a_i x + b_i \right)}}{a_i}\right]_{x_i}^{x_{i+1}} + \left[\frac{e^{-\left( a_i x + b_i \right)}}{a_i}\right]_{x_i}^{x_{i+1}} \\
&= \sum_{i=0}^c \frac{1}{a_{i}} \left( -\frac{3e^{-4\left( a_i x_{i+1} + b_i \right)}}{4} + \frac{8e^{-3\left( a_i x_{i+1} + b_i \right)}}{3} - 3e^{-2\left( a_i x_{i+1} + b_i \right)} + e^{-\left( a_i x_{i+1} + b_i \right)} \right) \\
&\qquad - \sum_{i=0}^c \frac{1}{a_i}\left( - \frac{3e^{-4\left( a_i x_i + b_i \right)}}{4} + \frac{8e^{-3\left( a_i x_i + b_i \right)}}{3} - 3e^{-2\left( a_i x_i + b_i \right)} + e^{-\left( a_i x_i + b_i \right)} \right) \\
&= \sum_{i=0}^c \frac{1}{a_{i}} \left( -\frac{3e^{-4\left( a_i x_{i+1} + b_i \right)}}{4} + \frac{8e^{-3\left( a_i x_{i+1} + b_i \right)}}{3} - 3e^{-2\left( a_i x_{i+1} + b_i \right)} + e^{-\left( a_i x_{i+1} + b_i \right)} \right) \\
&\qquad + \frac{1}{12 a_0} - \sum_{i=1}^c \frac{1}{a_i}\left( - \frac{3e^{-4\left( a_i x_i + b_i \right)}}{4} + \frac{8e^{-3\left( a_i x_i + b_i \right)}}{3} - 3e^{-2\left( a_i x_i + b_i \right)} + e^{-\left( a_i x_i + b_i \right)} \right). \\
\end{align*}

For all $i\geq 1$ we have that $a_i x_i + b_i = a_{i-1} x_i + b_{i-1}$.
\begin{talign*}
I &= \sum_{i=0}^c \frac{1}{a_{i}} \left( -\frac{3e^{-4\left( a_i x_{i+1} + b_i \right)}}{4} + \frac{8e^{-3\left( a_i x_{i+1} + b_i \right)}}{3} - 3e^{-2\left( a_i x_{i+1} + b_i \right)} + e^{-\left( a_i x_{i+1} + b_i \right)} \right) \\
&\quad + \frac{1}{12 a_0} - \sum_{i=1}^c \frac{1}{a_i}\left( - \frac{3e^{-4\left( a_{i-1} x_i + b_{i-1} \right)}}{4} + \frac{8e^{-3\left( a_{i-1} x_i + b_{i-1} \right)}}{3} - 3e^{-2\left( a_{i-1} x_i + b_{i-1} \right)} + e^{-\left( a_{i-1} x_i + b_{i-1} \right)} \right) \\
&= \sum_{i=0}^c \frac{1}{a_{i}} \left( -\frac{3e^{-4\left( a_i x_{i+1} + b_i \right)}}{4} + \frac{8e^{-3\left( a_i x_{i+1} + b_i \right)}}{3} - 3e^{-2\left( a_i x_{i+1} + b_i \right)} + e^{-\left( a_i x_{i+1} + b_i \right)} \right) \\
&\quad  \frac{1}{12 a_0} - \sum_{i=0}^{c-1} \frac{1}{a_{i+1}}\left( - \frac{3e^{-4\left( a_{i} x_{i+1} + b_{i} \right)}}{4} + \frac{8e^{-3\left( a_{i} x_{i+1} + b_{i} \right)}}{3} - 3e^{-2\left( a_{i} x_{i+1} + b_{i} \right)} + e^{-\left( a_{i} x_{i+1} + b_{i} \right)} \right) \\
&=  \frac{1}{12 a_0} + \frac{1}{a_{c}} \left( -\frac{3e^{-4\left( a_c x_{c+1} + b_c \right)}}{4} + \frac{8e^{-3\left( a_c x_{c+1} + b_c \right)}}{3} - 3e^{-2\left( a_c x_{c+1} + b_c \right)} + e^{-\left( a_c x_{c+1} + b_c \right)} \right) \\
&+ \sum_{i=0}^{c-1} \left( \frac{1}{a_{i}} - \frac{1}{a_{i+1}} \right)  \left( -\frac{3e^{-4\left( a_i x_{i+1} + b_i \right)}}{4} + \frac{8e^{-3\left( a_i x_{i+1} + b_i \right)}}{3} - 3e^{-2\left( a_i x_{i+1} + b_i \right)} + e^{-\left( a_i x_{i+1} + b_i \right)} \right). \\
\end{talign*}

We add and subtract $\sum_{i=1}^{c} \frac{1}{12 a_i}$:

\begin{talign*}
I &=  \frac{1}{a_{c}} \left( -\frac{3e^{-4\left( a_c x_{c+1} + b_c \right)}}{4} + \frac{8e^{-3\left( a_c x_{c+1} + b_c \right)}}{3} - 3e^{-2\left( a_c x_{c+1} + b_c \right)} + e^{-\left( a_c x_{c+1} + b_c \right)} \right) \\
&\quad + \frac{1}{12 a_0} + \frac{1}{12a_1} + \frac{1}{12a_2} + \dots + \frac{1}{12a_c} - \frac{1}{12a_1} - \frac{1}{12a_2} - \dots - \frac{1}{12a_c} \\
&\quad + \sum_{i=0}^{c-1} \left( \frac{1}{a_{i}} - \frac{1}{a_{i+1}} \right)  \left( -\frac{3e^{-4\left( a_i x_{i+1} + b_i \right)}}{4} + \frac{8e^{-3\left( a_i x_{i+1} + b_i \right)}}{3} - 3e^{-2\left( a_i x_{i+1} + b_i \right)} + e^{-\left( a_i x_{i+1} + b_i \right)} \right) \\
&= \frac{1}{a_{c}} \left( -\frac{3e^{-4\left( a_c x_{c+1} + b_c \right)}}{4} + \frac{8e^{-3\left( a_c x_{c+1} + b_c \right)}}{3} - 3e^{-2\left( a_c x_{c+1} + b_c \right)} + e^{-\left( a_c x_{c+1} + b_c \right)} \right) \\
&\quad + \left( \frac{1}{12 a_0} - \frac{1}{12a_1} \right) + \left( \frac{1}{12a_1} - \frac{1}{12a_2} \right) + \dots + \left( \frac{1}{12a_{c-1}} - \frac{1}{12a_c} \right) + \frac{1}{12a_c} \\
&\quad + \sum_{i=0}^{c-1} \left( \frac{1}{a_{i}} - \frac{1}{a_{i+1}} \right)  \left( -\frac{3e^{-4\left( a_i x_{i+1} + b_i \right)}}{4} + \frac{8e^{-3\left( a_i x_{i+1} + b_i \right)}}{3} - 3e^{-2\left( a_i x_{i+1} + b_i \right)} + e^{-\left( a_i x_{i+1} + b_i \right)} \right) \\
&= \frac{1}{a_{c}} \left( -\frac{3e^{-4\left( a_c x_{c+1} + b_c \right)}}{4} + \frac{8e^{-3\left( a_c x_{c+1} + b_c \right)}}{3} - 3e^{-2\left( a_c x_{c+1} + b_c \right)} + e^{-\left( a_c x_{c+1} + b_c \right)} + \frac{1}{12} \right) \\
& + \sum_{i=0}^{c-1} \left( \frac{1}{a_{i}} - \frac{1}{a_{i+1}} \right)  \left( -\frac{3e^{-4\left( a_i x_{i+1} + b_i \right)}}{4} + \frac{8e^{-3\left( a_i x_{i+1} + b_i \right)}}{3} - 3e^{-2\left( a_i x_{i+1} + b_i \right)} + e^{-\left( a_i x_{i+1} + b_i \right)} + \frac{1}{12} \right). \\
\end{talign*}

Let $$g(y) = -\frac{3e^{-4y}}{4} + \frac{8e^{-3y}}{3} - 3e^{-2y} + e^{-y} + \frac{1}{12}.$$ Taking the derivative gives $g'(y) = 3e^{-4y} -8e^{-3y} + 6e^{-2y} - e^{-y} $. Solving for $g'(y) = 0$ for $y \geq 0$ when $y=0, \ln \left( \frac{5+\sqrt{13}}{2} \right)$ and as $y$ goes to infinity.
For $y=0$, $g(0) = 0$.
$lim_{y \rightarrow \infty} g(y) = \frac{1}{12}$.
For $y=\ln \left( \frac{5+\sqrt{13}}{2} \right)$, 
$$ g \left( \ln \left( \frac{5+\sqrt{13}}{2} \right) \right) = \frac{1}{12} - \frac{12}{(5 + \sqrt{13})^4} + \frac{64}{3 (5 + \sqrt{13})^3} - \frac{12}{(5 + \sqrt{13})^2} + \frac{2}{5 + \sqrt{13}} > 0. $$

Since $\frac{1}{a_{i}} - \frac{1}{a_{i+1}}$, $\frac{1}{a_{c}}$, $a_c x_{c+1} + b_c$ and $a_i x_{i+1} + b_i$ are all strictly positive, $I > 0$.
\end{proof}

\begin{proof}[Proof of Lemma~\ref{lemma: piecewise linear approximation}]
Since $f(x)$ is continuous, we can pick a $\delta(\epsilon) = \delta > 0$, such that for any $x,x' \in [a,b]$ such that $\left| x - x' \right| < \delta$, we have $\left| f(x) - f(x') \right| < \epsilon$. Let $n$ be an integer such that $\frac{b-a}{n} < \delta$, and let $x_j = a + j \cdot \frac{b-a}{n}$, for $j = 0,1,\dots,n$. On each $[x_j,x_{j+1}]$ define $g_{\epsilon}(x)$ to be the function whose graph is the line segment connecting $\left( x_j,f(x_j) \right)$ with $\left( x_{j+1},f(x_{j+1}) \right)$, i.e. $g_{\epsilon} (x) = f(x_{j}) + t \cdot \left( f(x_{j+1}) - f(x_j)  \right)$ for some $t \in [0,1]$. Since $g_{\epsilon} (x) \in [f(x_j), f(x_{j+1})]$, the Intermediate Value Theorem implies that for all $x \in [x_j, x_{j+1}]$ there exists a $y_j \in [x_j,x_{j+1}]$, such that $g_{\epsilon}(x) = f(y_j)$. Since each $x \in [a,b]$ belongs in some $[x_j,x_{j+1}]$, the choice of $\delta$ implies that $ \left| f(x) - g_{\epsilon}(x) \right| = \left| f(x) - f(y_j) \right| < \epsilon $. It remains to show that $g_{\epsilon}(x)$ is convex: by construction, the slope of $g_{\epsilon}(x)$ for $x \in [x_j,x_{j+1}]$ is $\left( f(x_{j+1}) - f(x_j) \right) / \left( x_{j+1} - x_j \right)$, which is strictly increasing with $j$ since $f(x)$ is convex.
\end{proof}

\begin{proof}[Proof of Lemma~\ref{lem:expected min of two samples}]
\begin{align*}
\expectation{X_{2:2}} &= \int_0^{\infty} (1-F(x))^2 dx = \int_0^{\infty} e^{-2H(x) } dx \\
&\geq \int_0^{\infty} e^{-H(2x)} dx \\
&= \frac{1}{2} \int_0^{\infty} 1-F(x) dx = \frac{1}{2} \expectation{X},
\end{align*}
where the second line holds because $H(x)$ is a convex function. Since $\expectation{X_{1:2}} + \expectation{X_{2:2}} = 2 \expectation{X}$, we have that $\expectation{X_{1:2}} = 2 \expectation{X} - \expectation{X_{2:2}} \leq \frac{3}{2} \expectation{X}$. Therefore, $\expectation{X_{2:2}} \geq \frac{1}{2} \expectation{X} \geq \frac{1}{2} \frac{2}{3}\expectation{X_{1:2}} = \frac{1}{3} \expectation{X_{1:2}}$. The Lemma follows.
\end{proof}

\begin{proof}[Proof of Lemma~\ref{lem: expected spacing is decreasing}]
\begin{align*}
\expectation{X_{1:n}} - \expectation{X_{2:n}} &= \int_0^{\infty} F_{2:n}(x) - F_{1:n}(x) dx\\
&=\int_0^{\infty} nF^{n-1}(x) - (n-1) F^n(x) - F^n(x) dx \\
&=\int_0^{\infty} n F^{n-1}(x) \left( 1 - F(x) \right) dx \\
&=\int_0^{\infty} n F^{n-1}(x) f(x) \frac{1}{h(x)} dx \\
&=\int_0^{\infty} f_{1:n}(x) \frac{1}{h(x)} dx \\
&= \expectation{ \frac{1}{h(X_{1:n})} } \qedhere\\
\end{align*}
\end{proof}




\end{appendix}

\end{document}